\newcommand{\ignore}[1]{{}}
\newenvironment{proof}{\par\noindent{\bf Proof:}}{\mbox{}\hfill$\Box$\\}
\newtheorem{theorem}{Theorem}[section]
\newtheorem{thm}{Theorem}[section]
\newtheorem{lem}{Lemma}[section]
\newtheorem{cor}{Corollary}[section]
\begin{document}
\title{Efficient cache oblivious algorithms for randomized\\
divide-and-conquer on the multicore model}
\author{
    Neeraj Sharma\\
    Mentor Graphics (INDIA) Pvt. Ltd.\\
    Noida, U.P. 201301, India \\
    neeraj\_sharma@mentor.com
  \and
    Sandeep Sen\\
    Department of CSE\\
    I.I.T. Delhi, New Delhi 110016, India\\
    ssen@cse.iitd.ac.in
}


\maketitle
\begin{abstract}
In this paper we present randomized algorithms for sorting and convex hull 
that achieves optimal performance (for speed-up and cache misses) on the
multicore model with private cache model. Our algorithms are cache oblivious
and generalize the randomized divide and 
conquer strategy given by Reischuk \cite{refreischuk} and Reif and 
Sen \cite{3dhull}. 
Although the approach yielded optimal speed-up in the PRAM model, we require
additional techniques to optimize cache-misses in an oblivious setting.
Let $p,n,M,B$ respectively denote number of processors, problem size, the
size of individual processor cache memory and block size respectively, then
we obtain expected parallel running time 
$O(\frac{n}{p}\log n + \log n \log \log n)$
 with expected $O(\frac{n}{B} \log_M n)$ cache misses for sorting $n$
keys and constructing convex hull of $n$ points.
For $p \leq \frac{n}{\log \log n}$, and under the tall-cache
assumption $M \geq B^2$, both speed-up and cache-misses are the
best possible. 
Since the input-size $n\geq Mp$, under a very mild assumption 
$M \geq \log\log n$, $p \leq \frac{n}{\log\log n}$, so 
in all realistic scenarios, our algorithm will have optimal time and 
cache misses with high probability.
Although similar results have been obtained recently 
for sorting \cite{resource} , we feel that our
approach is simpler and general and we apply it to obtain an
 algorithm for 3D convex hulls 
with similar bounds. 

We also present a simple 
randomized processor allocation technique
without the explicit knowledge of the number of processors 
that is likely to find additional applications in resource oblivious 
environments. 

\end{abstract}
\vfill{SPAA'12,} {June 25--27, 2012, Pittsburgh, Pennsylvania, USA.} 
\newpage
\section{Introduction}
The private-cache multicore model and the closely related 
Parallel External Memory (PEM) model 
combines several features of parallel computing models like PRAM and the 
memory hierarchy
issues captured by the External Memory Models. The goal is to capture the 
relevant aspects of a large scale multiprocessing environment, whose numerous 
parameters may be unknown to the algorithm designer. Although, it is not
intuitive, this last requirement can be tackled using the strategy called
{\em cache obliviousness} or more generally {\em resource obliviousness}.  

These multiprocessing models consists of $p$ processors (or cores) 
each having a private cache of size $M$. There is a large global memory which 
acts as a shared memory.
The processors communicate with each other only through shared memory. 
Initially the input is present in the global memory stored in form of blocks 
of size $B$.
So an input of size $n$ uses $n/B$ blocks in the global memory. All the 
transfers to and from memory (or cache) are done in blocks of size $B$.
Whenever a core needs some data, say $x$, if the data is already present in its 
private cache then no cost is incurred but if it is not
present in the cache then it copies this block from the shared memory to its 
private cache. Similarly when a core wants to write a block, first that block 
is moved into its private cache then 
those changes can be updated in shared memory at that time or later using 
some write-back policies. When a core modifies a block in its cache, 
then all the copies of that block which 
are present in the cache of other cores are invalidated and for any future 
access to that block it needs to be read again from the shared memory.
To obtain full parallelism we need to fill the cache of each core at least 
once, i.e.\textit{ $n \geq M \cdot p $.}\\
\par Thus we have two types of cache related-cost incurred in this 
model\footnote{These terms have been previously 
defined in \cite{resource}}.
\begin{enumerate}
 \item \textbf{Cache misses : } This is the standard type of cache miss which also occurs in sequential computation. Whenever any core need some 
data which is not present in its cache,
that block is copied from main memory to its cache. This is treated as one cache miss. If some block $B_1$ is evicted from cache because
of cache replacement policy and processor again need this $B_1$ block, 
it will again cost one cache miss to access this block (also known as
{\em capacity} misses).
\item \textbf{Block misses : } These kind of misses occur only in the case 
of concurrent writing. It doesn't have an analogue 
in sequential computation and requires extra care
in parallel computation.
Suppose two cores $C_1$ and $C_2$ are accessing the same block $B_1$. 
If $C_1$ modifies the contents of this block then the copy of $B_1$ present in
$C_2$'s cache is invalidated and it has to be read again which 
will be counted as one cache miss. Note that block misses increase
when a larger number of cores are accessing the same block and one core modifies 
it. 
So the block misses occur when multiple processors are 
accessing the same block at same time and at least one of them is writing. 
An algorithm designer will try to avoid this situation as much as possible.
\end{enumerate}
Both concurrent reads and concurrent writes are allowed in this model.
\\
 \textbf{Concurrent Reads :} 
If $x$ cores are reading the same block, every core will incur one 
cache miss and the total cache cost will be $x$ (unless some processor
writes on this block that will be treated as a block miss).
\\
\textbf{Concurrent Writes :} We have two cases: 
\\
1. When $x$ cores are reading one block and one core writes to that 
block at same time. All these cores will have to read this block again
from shared memory, which will lead to a total of $x$ cache misses. 
\\
2. If there are $x$ cores and all of them want to write to the 
same block, this block will migrate across these $x$ cores
to satisfy their write requests. Thus the $i^{th}$ core in sequence will 
have to wait for time equal to $i$ cache misses before it can
complete its write operation. So the total cache misses across all cores 
can be $\sum_{i=1}^{x-1} i = \Omega ( x^2 )$. 

In the worst case, any core will incur $B$ cache misses to 
read or write on any block. So, while distributing problems between different
cores, to save cache cost, we will try to avoid block misses.

The performance of any algorithm is characterized by two parameters.
 \\
\noindent \textit{Cache misses} : The total number of cache misses and block 
misses incurred during algorithm. We will analyze the total cache cost wherever 
any block miss occurs.\\
\textit{Critical path} : It is the maximum time taken by any processor in the
overall algorithm.
To decrease the critical path, we have to ensure equitable work-load 
distribution.\\
{\em Note} : Apart from its close relation to parallel time, this also affects
the overall cache misses in a multiprogramming scheduling algorithm based
on {\em work-stealing} \cite{refprefix}.

Further, based on our {\em a priori} knowledge of multicore parameters 
$(M,B,p)$ we have the following variations
 \\
\textbf{Cache Aware : }
The algorithm designer makes use of the values of 
multicore parameters $(M,B,p)$ - for example, 
we can divide the problem according to values of $M$ and $B$ to 
better utilize the cache.
\\
\textbf{Cache Oblivious : }
In this model we know the number of processors while designing algorithm but 
the values of $M$ and $B$ are unknown. However, 
the analysis can make use of these parameters and our goal is to match the
performance of Cache aware algorithms.
\\ 
\textbf{Resource Oblivious Model : }
In this model, even the number of processors is 
unknown to the algorithm designer. Usually the tasks are shared 
by processors using some on-line strategy
like work-stealing methods \cite{cilk}
\subsection{Previous Related Work}
Our results make use of Reischuk's \cite{refreischuk} 
parallel randomized sorting algorithm that sorts
 $n$ elements using CREW PRAM processors in $O(\log n)$ time with high
 probability. For the external memory model, 
Aggarwal and Vitter \cite{vitter} 
designed a version of merge sort, 
 that uses a maximum $O(\frac{N}{B}\log_{M/B}N/B)$ I/O's and this
is optimal. 
For the cache oblivious model, Frigo et al. \cite{reftranspose} 
presented a sequential cache-oblivious sorting algorithm called 
Funnel sort which can 
sort $n$ keys in $O(n\log n)$ time and $O(\frac{n}{B}\log_M n)$ cache
misses\footnote{For tall cache, since $\log_M n$ is $O(\log_{M/B} n/B )$, 
we will
use the shorter expression}. Note that both time and cache misses achieved 
by this algorithm are optimal.

Arge et al. \cite{four} formalized the PEM model
and presented a merge sort algorithm based on \cite{cole} that runs in 
$O(\log n)$ time and has optimal cache misses. Note that their
 model is both cache aware and processor aware.
This algorithm is very similar to merge sort implemented on $\cite{vitter}$. 
They proved these results assuming that $n\geq pB^2$ by using a $d$-way merge sort which partitions the input into $d$ subsets,
then sorts each subset recursively and merges them. To
achieve optimal I/O complexity and parallel speedup, the
sorted subsets are sampled and these sample sets are merged
first. Then using the values of $M$ and $B$, a suitable value of $d$ is chosen to achieve the claimed bounds.

 Blelloch et al.~\cite{refprefix} presented a resource oblivious 
distribution sort algorithm 
that has $O(\log^2 n)$ critical path-length and incurs sub-optimal cache cost
in the private-cache multicore model.
Their distribution sort uses merging to divide the input 
into contiguous subsets and sorts them recursively. 
For their randomized 
version, they proved an expected $O(\log^{1.5}n)$ depth.
The potential bottleneck for reducing the depth further 
was their technique for using merging to partition into buckets.
The algorithm given in \cite{sixteen}
is designed for a BSP-style version of a cache aware, multi-level multicore. It
uses a different collection of parameters, and so it is difficult to 
compare it directly with the previous results.

Recently Cole and Ramachandran \cite{resource} presented a new optimal merge 
sort algorithm (SPMS)  for resource oblivious multicore model. 
This algorithm sorts $n$ keys in $O(\frac{n}{p} \log n + \log n \log \log n)$ time using $n$ processors with
optimal number of cache misses on resource oblivious model assuming 
$n\geq Mp$ and $M\geq B^2\log B \log \log B.$
It works in $O(\log \log n)$ depth where each stage requires $O(\log n)$ time.
The authors addressed a general computational paradigm called
{\em Balanced Partitioning Trees} and designed a
a resource-oblivious 
priority work scheduler based on work-stealing to attain the above bounds. 
It follows the same approach as merge sort where we splits the input into 
subsets, sorts
the subsets recursively and then merges them (to achieve the desired bounds,
the authors used multi-way merging).

As sorting is basic problem in algorithms similarly Convex hull problem is a basic problem in Computational geometry.
There is a close relation between sorting and convex hull problem. It has been proved that sorting can be reduced to
convex hull problem. So it means we can't solve convex hull problem faster than sorting. 
On a single processor convex hull problem can be solved easily by reducing it to sorting. For 2-D convex hull problem, on single processor,
there exists a output-sensitive cache oblivious algorithm which achieve optimal time and cache misses \cite{2dhull}. By saying output sensitive hull,
we means that total time is proportional to number of points in output of convex hull. 
This algorithm is a modification of Chan's output sensitive algorithm 
\cite{chan}.
On CREW model, there exists an algorithm which solve 3-D convex hull problem in expected $O(\log n)$ time using $n$ processors \cite{3dhull}.\\
In the table below, we summarize the results discussed above where
 \~{O} is used to represent expected bound.
\begin{center}
\begin{tabular}[c]{|c|c|c|c|l|}
\multicolumn{5}{c}{\multirow{2}*{\textit{To sort $n$ elements using $p$ processors}}} \\ 
\multicolumn{5}{c}{\multirow{2}*{}} \\ \hline
\multirow{2}*{Model} & \multirow{2}*{Time} &  \multirow{2}*{Cache cost}    &  \multirow{2}*{Condition} & \multirow{2}*{Source}    \\
& & &  & \\ \hline
\multirow{1}*{CRCW} & \multirow{2}*{ \~{O}$(\frac{n}{p}\log n)$} & \multirow{2}*{$-$} & \multirow{2}*{$n\geq p$} & \multirow{2}*{Reischuk \cite{refreischuk}}  \\
\multirow{1}*{PRAM}& & & &   \\ \hline
\multirow{1}*{EREW} & \multirow{2}*{ O$(\frac{n}{p}\log n)$} & \multirow{2}*{$-$} & \multirow{2}*{$n\geq p$} & \multirow{2}*{Cole \cite{cole}}  \\
\multirow{1}*{PRAM}& & & &   \\ \hline
\multirow{1}*{Cache} & \multirow{2}*{ O$(\frac{n}{p}\log n)$} & \multirow{2}*{$O(\frac{n}{B}\log_{\frac{M}{B}} \frac{n}{B})$} & \multirow{1}*{$n\geq B^2p$} & \multirow{2}*{Goodrich \cite{four}}  \\
\multirow{1}*{Aware}& & &\multirow{1}*{$M\geq B^2$} &   \\ \hline
\multirow{1}*{Cache} & \multirow{2}*{ O$(\frac{n}{p}\log^2 n)$} & \multirow{2}*{$O(\frac{n}{B}\log \frac{n}{B})$} & \multirow{1}*{$n\geq Mp$} & \multirow{2}*{Ramachandran \cite{five}}  \\
\multirow{1}*{Oblivious}& & &\multirow{1}*{$M\geq B^2$} &   \\ \hline
\multirow{1}*{Cache} & \multirow{2}*{ Depth$^3$
 =  O$(\log^2 n)$} & \multirow{2}*{$O(\frac{n}{B}\log_{\frac{M}{B}} \frac{n}{B})$} & \multirow{2}*{$M=\Omega (B^2)$} & \multirow{2}*{Blelloch \cite{refprefix}}  \\
\multirow{1}*{Oblivious}& & & &   \\ \hline
\multirow{1}*{Cache} & \multirow{2}*{ Depth$^3$ = \~{O}$(\log^{1.5} n)$} & \multirow{2}*{\~{O}$(\frac{n}{B}\log_{\frac{M}{B}} \frac{n}{B})$} & \multirow{2}*{$M=\Omega (B^2)$} & \multirow{2}*{Blelloch \cite{refprefix}}  \\
\multirow{1}*{Oblivious}& & & &   \\ \hline
\multirow{1}*{Cache} & \multirow{1}*{ \~{O}$(\frac{n}{p}\log n +$} & \multirow{2}*{\~{O}$(\frac{n}{B}\log_{\frac{M}{B}} \frac{n}{B})$} & \multirow{1}*{$n\geq Mp$} & \multirow{2}*{Our Results} \\
\multirow{1}*{Oblivious}&\multirow{1}*{ $\log n \log \log n)$} & & \multirow{1}*{$M\geq B^2B^{2/31}$} &   \\ \hline
\multirow{1}*{Resource} & \multirow{1}*{ O$(\frac{n}{p}\log n +$} & \multirow{2}*{O$(\frac{n}{B}\log_{\frac{M}{B}} \frac{n}{B})$} & \multirow{1}*{$n\geq Mp$, $M\geq$} & \multirow{2}*{ \cite{resource}} \\
\multirow{1}*{Oblivious}&\multirow{1}*{ $\log n \log \log n)$} &  & \multirow{1}*{$B^2\log B \log \log B$}&   \\ \hline

\end{tabular}
\end{center}
\footnote{ It didn't account for extra costs like block misses}
\subsection{Our Work}
In this paper, we have presented a randomized distributed sorting algorithm on 
cache-oblivious multicore model. 
Our basic approach is similar to Reischuk \cite{refreischuk}, 
but to bound cache misses, we 
had to modify it significantly. First, we sample some elements from input, 
sort them using a brute-force method and 
use these elements to divide the input into disjoint buckets. 
To do this partitioning with minimal cache misses in a cache-oblivious
fashion, we do it in two phases. Roughly speaking, we divide the $n$ 
input keys into $\sqrt{n}$
buckets by successive partitioning into $n^{1/4}$ size buckets using a
common procedure. This partitioning procedure, which is the crux of our
distribution sort, in turn invokes an efficient merging procedure to attain the
final bounds.

 We are able to sort $n$ elements in expected $O(\frac{n}{p}\log n + \log n \log \log n)$
 time with expected $O(\frac{n}{B} \log_M n)$ cache misses. These cache misses match the optimal cache bound 
$O(\frac{n}{B}\log_{\frac{M}{B}} \frac{n}{B})$ for tall cache i.e. if $M \geq 
B^2.$
  So the cache cost is optimal but time is optimal only for 
$p \leq (n/ \log \log n)$. From the natural condition that
$n\geq Mp$, if $M\geq \log \log n$ it implies that 
$p \leq (n/ \log \log n)$ and our algorithm will have both time and cache 
misses optimal.
Our bounds for sorting match that of Cole and Ramachandran \cite{resource} 
in cache-misses and depth and we can obtain matching performance in the 
cache-oblivious PEM model. We also present a simple technique
for processor-obliviousness where the algorithm need not
have any prior knowledge of the number of processors and the processors
can generate their ids on the fly. Our approach is fundamentally different
from \cite{resource,refprefix} that designs a scheduler to map tasks to
processors in a resource-oblivious fashion.  

 In practice, size of cache is $32$KB or $64$KB and size of block is $256$B  it means $M\geq \log \log n$ condition is satisfied for input $\leq 2^{(2^{8000})}.$
So our algorithm have optimal time and cache cost with high probability for input of size $\leq$ $2^{(2^{8000})}$.

Moreover, our approach yields a general framework for 
randomized divide-and-conquer that has other applications. In the latter
part of this paper, we 
present a parallel multicore algorithm for constructing 
convex hulls of point sets. This is based on the approach of 
of Reif and Sen~\cite{3dhull} but we follow a simpler description given in 
\cite{ger} for planar hulls. We are able to apply a number of 
subroutines developed for the
parallel sorting algorithm for the construction of the 
convex hull and achieve a bound similar to sorting. Since it is
known that Voronoi diagrams can be reduced to three 
dimensional convex hulls,
we obtain identical results as given above for sorting.

\section{Some basic algorithms }

The reader may recall that 
we are designing these algorithms without using the parameters $M$ and $B$.
First we mention 
some basic algorithms which we will need for our sorting algorithm.
We will use $T$ to represent total time (total operations), $\bar{T}$ to represent expected time, $T^{\prime \prime}$ to represent parallel time,
$\bar{T}^{\prime \prime}$ to represent expected parallel time and 
$Q$ to represent total cache misses. \\
Some times processors may have to read or write in contiguous locations 
which can lead to block misses. So we bound total cache misses as follows :\\
\textbf{Contiguous Reads and Writes.} 
We have $p$ cores and a total of $n$ keys in the main memory stored in $n/B$ 
blocks and every processor has to read equal number of keys. 
So the total cache misses for every
core will be $\lceil\frac{n}{pB}\rceil.$ Since a core may start or 
end reading in the middle of any block, the total cache misses across all cores
should be $\leq n/B + 2p$ which is $O(n/B)$ 
given our assumption that $(n\geq Mp)$.
Similarly, when $p$ cores have to write a total of $n$ keys in $n$ 
contiguous locations and every processor has to write an equal number of 
keys $n/p$, then there will not be 
more than two block misses per processor for $n\geq Bp$.

We now state results for some basic problems used throughout this paper.
The associated algorithms are based on standard approaches and the bounds 
assuming that $n \geq Mp$.
\begin{lem}\label{maximum}
Using $p$ cores, maximum of $n$ keys can be found in time O$(n/p+ \log p)$ with total $O(\lceil n/B \rceil)$ cache misses. 
\end{lem}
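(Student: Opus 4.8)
The plan is to use the textbook two-phase parallel reduction---local maxima followed by a global tournament---and to spend the analytical effort on controlling cache misses (and, crucially, block misses) rather than on correctness, which is immediate.

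First I would partition the $n$ keys into $p$ contiguous groups of $\lceil n/p\rceil$ keys and let core $j$ scan its group sequentially to compute a local maximum $m_j$, which takes $O(n/p)$ time per core. Since each core reads a disjoint contiguous chunk, the contiguous-reads bound gives at most $n/B + 2p$ cache misses in total; using $B \le M$ together with $n \ge Mp$ we have $p \le n/M \le n/B$, so this phase costs $O(\lceil n/B\rceil)$ cache misses, and it produces no block misses because distinct cores read disjoint regions and nobody writes.

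Next I would combine the $p$ values $m_0,\dots,m_{p-1}$ with a balanced binary tournament of depth $\lceil\log p\rceil$: in each round the number of live values halves and each surviving core compares two values, so the combining runs in $O(\log p)$ parallel time and performs $O(p)$ comparisons in total. The delicate point---and the step I expect to be the main obstacle---is that a naive layout in which all $p$ cores write their results into one contiguous length-$p$ array would place up to $B$ concurrent writers in a single block, generating $\Omega(B^2)$ block misses per block and $\Omega(pB)$ in total. To avoid this I would store $m_j$ at position $jB$ of a length-$pB$ scratch array (which fits, since $p \le n/M$ and $B\le M$ give $pB\le n$) and keep every tournament write at a stride that is always at least $B$; in round $i$ the live positions are multiples of $2^{i}B$, so no two cores ever touch the same block. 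Block misses are thereby eliminated and each round-$i$ write incurs only $O(1)$ cache misses, for $\sum_i O(p/2^i)=O(p)$ in total.

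Finally I would assemble the two phases. The running time is $O(n/p)+O(\log p)=O(n/p+\log p)$. For cache misses, the reading phase contributes $O(\lceil n/B\rceil)$ and the tournament contributes $O(p)$; since $p \le n/M \le n/B$ and $\log p \le p$, both combining terms are absorbed into $O(\lceil n/B\rceil)$, giving the claimed total. As a sanity check, under the tall-cache assumption $M\ge B^2$ even the naive contiguous layout would have sufficed, since its $O(pB)$ block misses satisfy $pB\le nB/M\le n/B$; the strided layout is precisely what lets the bound hold without invoking tall cache.
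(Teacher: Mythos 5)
Your proof follows essentially the same approach as the paper's: partition the input into $p$ contiguous chunks for local maxima, then combine the $p$ results with a binary tournament of depth $O(\log p)$, bounding the total cache misses by $O(n/B + p) = O(n/B)$ under $n \geq Bp$. The only difference is your strided scratch layout (placing $m_j$ at position $jB$) to eliminate block misses in the tournament phase --- a legitimate refinement that the paper's own proof glosses over, as it simply charges $O(p)$ cache misses to the tournament without discussing where the intermediate winners are written.
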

\begin{proof}
Assume input has been divided into $p$ contiguous chunks each of size $n/p$. Every core is assigned a task to get a maximum element from one chunk.
This will take $\lceil n/p \rceil$ time and $n/B+p$ cache misses.\\
We get $p$ elements. Use $p/2$ cores, where every core will read two keys and reject one key. This step will take $O(1)$ time and $p$ cache misses.  
Repeat the same procedure using $p/4$ cores and so on until we are left with one input key.
Total time for whole algorithm will be $O(n/p+\log p)$ and total cache misses across all cores will be $O(n/B+p)$ which can be bounded by $O(n/B)$ if $n\geq Bp.$
\end{proof}
\begin{lem}\label{sum}
Using $p$ cores, $n$ keys can be added in time O$(n/p+ \log p)$ with total $O(n/B)$ cache misses.
\end{lem}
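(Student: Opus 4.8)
The plan is to mirror the two-phase strategy used for Lemma~\ref{maximum}, replacing the ``read two keys and reject one'' comparison step with an addition; since addition is associative and commutative, the very same reduction tree computes the total. First I would partition the input into $p$ contiguous chunks of size $n/p$ and assign one chunk to each core, which computes the sum of its own chunk locally. This local-summation phase takes $O(\lceil n/p\rceil)$ time, and because each core reads a contiguous run of keys, the \textbf{Contiguous Reads} bound stated earlier gives a total of at most $n/B + 2p = O(n/B)$ cache misses, using $n \geq Mp \geq Bp$ (the latter inequality holds since $M \geq B$ under the tall-cache assumption).

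After this phase we are left with $p$ partial sums, which I would store in a contiguous scratch array. The second phase is a balanced binary reduction: using $p/2$ cores, each core reads two partial sums, adds them, and writes the result, thereby halving the number of live values; we then repeat with $p/4$ cores, and so on for $O(\log p)$ rounds until a single value — the desired sum — remains. Each round costs $O(1)$ time, so the reduction contributes $O(\log p)$ to the running time, and combined with the first phase this gives an overall time of $O(n/p + \log p)$, as claimed.

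For the cache cost of the reduction, the number of active values decreases geometrically, so the total number of reads and writes across all $O(\log p)$ rounds is $\Theta(p)$; summing the per-round misses therefore yields $O(p)$ cache misses. Adding the contributions of the two phases gives $O(n/B + p)$ cache misses in total, which is $O(n/B)$ whenever $n \geq Bp$ — guaranteed by the standing assumption $n \geq Mp$.

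The step that needs the most care is the cache-miss accounting in the reduction phase rather than the arithmetic itself: I must ensure that when several cores write their partial results into the same block I do not incur uncontrolled \textbf{block misses}. I would arrange each round to compact the surviving values into a contiguous prefix of the scratch array, so that the block invalidations at any level are bounded by the number of active cores at that level; the geometric decay of the active-core count then keeps their sum at $O(p)$, which is absorbed into the $O(n/B)$ bound and completes the argument.
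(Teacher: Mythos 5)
Your proposal is correct and follows essentially the same route as the paper, which simply notes that summation can be done with the same two-phase approach (local chunk computation followed by a balanced binary reduction) used for finding the maximum in Lemma~\ref{maximum}, within the same time and cache-miss bounds. Your additional care about compacting surviving values to control block misses is a reasonable refinement, but it does not change the underlying argument.
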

\begin{proof}
This task can be done using the same approach as used in Lemma \ref{sum} within same bounds.
\end{proof}
\begin{lem}\label{prefix}
Using $p$ cores, prefix computation on $n$ keys can be done in time O$(n/p+ \log p)$ with a total of $O(n/B)$ cache misses.
\end{lem}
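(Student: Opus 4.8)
The plan is to mirror the three-phase divide-and-conquer structure already used for Lemma~\ref{maximum}, but augmented with a downward pass so that every key receives its correct prefix value. As before, I would assume the input is laid out as $p$ contiguous chunks, each of size $n/p$, one per core. In the \emph{local pass}, each core sweeps its own chunk once, computing the running prefixes within the chunk and recording the chunk total $s_j$. By the contiguous read/write accounting set up earlier, this costs $O(n/p)$ time per core and $O(n/B + p)$ cache misses in total, with no block misses since each core operates on its own disjoint range.

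The \emph{spine pass} then takes the $p$ chunk totals $s_1,\dots,s_p$ and computes their exclusive prefix sums $\sigma_j = \sum_{i<j} s_i$ using the standard balanced binary-tree scan: an up-sweep accumulating pairwise partial sums, followed by a down-sweep distributing them. Each sweep has $\lceil \log p\rceil$ levels, where level $i$ touches $p/2^i$ active values, so summing the geometric series gives $O(p)$ total work, $O(\log p)$ parallel time, and $O(p)$ cache misses. Finally, in the \emph{adjustment pass}, core $j$ adds its offset $\sigma_j$ to every locally computed prefix in chunk $j$, again sweeping contiguously at cost $O(n/p)$ time and $O(n/B+p)$ cache misses.

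Adding the three phases yields total time $O(n/p + \log p)$ and total cache misses $O(n/B + p)$. Under the standing assumption $n \geq Mp$ together with $M \geq B$ (implied by the tall-cache inequality $M \geq B^2$), we get $p \leq n/B$, so the additive $p$ term is absorbed and the cache bound becomes $O(n/B)$, exactly as claimed.

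The only place where the argument departs materially from Lemma~\ref{maximum} is the spine pass, and that is where I expect the main obstacle to lie. One must simultaneously verify that the $O(\log p)$-depth tree contributes no more than $O(p)$ cache misses — which follows from the geometric decay of the number of active elements across levels — and that the bidirectional sweep introduces no concurrent writes to a shared block. The delicate bookkeeping is assigning to each tree node a single owning core so that the no-block-miss property is preserved through both the up-sweep and the down-sweep; once that ownership scheme is fixed, the remaining cost accounting reduces to the contiguous read/write bounds already in place.
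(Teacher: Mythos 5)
Your decomposition (chunk-local prefixes, a spine scan over the $p$ chunk totals, then an adjustment pass) is a genuinely different route from the paper, which instead parallelizes the recursive cache-oblivious algorithm of \cite{refprefix} over the whole input, storing the implicit binary tree in infix order so that nodes written in parallel are at least a node-size apart in memory. Your local and adjustment passes are fine and match the paper's ``contiguous reads and writes'' accounting. The gap is in the spine pass. You claim it costs $O(p)$ cache misses and locate the difficulty in ``assigning to each tree node a single owning core,'' but node ownership is not the real issue: in this model a block miss occurs whenever several cores write to \emph{distinct} locations lying in the \emph{same} block, and the paper charges up to $\Omega(x^2)$ cache misses when $x$ cores write into one block. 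In an up-sweep/down-sweep over $p$ totals stored in $p$ contiguous cells, the bottom levels have adjacent active nodes, owned by different cores, packed $B$ to a block; level $0$ alone can therefore cost $\Theta\bigl((p/B)\cdot B^2\bigr)=\Theta(pB)$ block misses, and the whole spine $\Theta(pB)$ rather than $O(p)$. Note also that your absorption step invokes only $M\geq B$, which handles the additive $p$ but not $pB$; salvaging the naive spine needs the full tall-cache bound ($n\geq Mp$ and $M\geq B^2$ give $pB\leq n/B$), and that argument is absent from your write-up.

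The clean fix, consistent with what the paper itself does elsewhere (see Lemma~\ref{rank}, and the prefix step inside Lemma~\ref{cmaximal2}), is to run the spine with only $p^2/n$ cores rather than all $p$: each participating core then owns $n/p\geq M\geq B$ \emph{consecutive} chunk totals, which it scans sequentially, so no two cores ever write into the same block; one then recurses (or applies the same trick) on the $p^2/n$ partial sums. This keeps the spine at $O\bigl(p/(p^2/n)+\log p\bigr)=O(n/p+\log p)$ time and $O(p/B+p^2/n)$ cache misses with zero block misses, after which your three-phase accounting and the final absorption into $O(n/B)$ go through as stated.
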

\textit{Input} : $(k_1,k_2,\ldots,k_n)$\\
\textit{Output} : $(k_1,k_1+k_2,k_1+k_2+k_3,\ldots, \displaystyle\sum_{i=1}^{{n}}k_i)$\\
\begin{proof}
We will use an algorithm which does prefix computation on $n$ keys in $O(n)$ time and $O(n/B)$ cache misses using one processor. \cite{refprefix}.
Due to its Divide and Conquer nature, this algorithm can be easily adapted to run on more than one processor.
To bound the block misses, we need to makes some small changes.\\
We are given an input array $A$ of length $n$, a temporary array $S$ of length $n-1$, and an output
array $R$ of length $n$. Say K=$(n/p)$.
Consider a balanced binary tree laid over input as shown in figure \ref{imgpre4} (for $n=8$). This balanced tree will be of size $n-1$ and will be saved in array
$S$. To make memory operations more cache efficient and avoid block misses (explained later), save this tree in infix order in $S$ as shown in
figure \ref{imgpre4}.

\begin{figure}[ht]
\begin{center}
\includegraphics[scale=0.60]{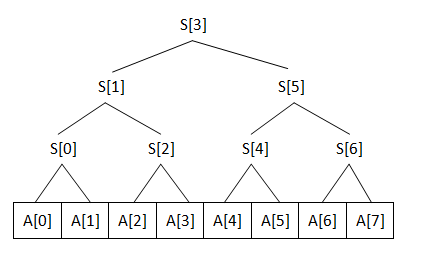}
\end{center} 
\caption{For $n$ = $8,$ $A$ is input array and $S$ is temporary array }
\label{imgpre4}
\end{figure}

\noindent{}This algorithm will work in two phases. In first phase, we will calculate the sum of the left subtree for each node and in second phase we will push this sum down the tree to calculate 
prefix sum. The recursive codes for the two phases are shown below :
\begin{enumerate}
\item[Phase 1:] $phase_1(i, size)$. $A$ is input array and $S$ is temporary array where output of this phase $1$ is stored.
\begin{enumerate}
\item if $size = 1$ then return $A[i]$.
\item $S[i+size/2]=phase_1(i, size/2)$
\item return $S[i+size/2] + phases_1(i+size/2, size/2)$
\end{enumerate}
\item[Phase 2:] $phase_2(i,size,s)$. Output of this phase is final output and is stored in array $R$.
\begin{enumerate}
\item if $size = 1$ then set $R[i]=s$ and return.
\item $phase_2(i,size/2,s)$
\item $phase_2(i+n/2,size/2,s+S[i+n/2])$
\end{enumerate}
\end{enumerate}
\textbf{Processor Allocation :} After every level, no of processes get multiplied by two. So we know how many processes we 
would have at any time which means calls can be distributed easily to processors using processors numbers.
Processors do not need any other information to execute these calls. Thus we do not have to spend any extra time in processor allocation.\\
\textbf{Analysis :}
Both phases use the same approach and same time and cache misses. We will analyze $phase_1$ and same result follows for $phase_2$ also.
Assume K=$n/p$. Each call to $phase_1$ is divided into two calls of half sizes after constant computation and these both calls can be executed on 
different processors in parallel at same time. But if size of problem is $\leq K$ then we have enough subproblems such that every processor can solve 
one problem individually and sequentially. So if size $\leq K$, all processes created by this call will be executed on same processor. This modification helps us to
bound total number of cache misses when size becomes below $M$, total cache misses will be bounded by $M$ if the problem is solved on single core.\\
\textbf{Block Misses :}\\ 
Phase 1: Block misses occur when more than one processors are trying to write on same block. All written operations in phase 1 are done only on $S$ array.
Because of the way binary tree is stored in $S$, we are able to bound block misses. During $phase_1$ we are basically traversing tree
stored in $S$ from top to bottom (Figure \ref{imgpre4}) where very node represents one call and two children represents two newly created sub-calls. We
write one value on array $S$ and create two new sub-calls to left and right subtree. All nodes of tree which are present at same level are being executed
in parallel and each node is writing on one location in $S$.
Distance between any two written positions will be equal to size of nodes on that level. So block misses are zero till problem size is $> B$. 
And when problem size goes $\leq n/p$, it is executed on single 
processor. In this subtree every processor is going to write contiguous elements (more than $M$).
And when a problem is being solved on one processor than all calls are executed in depth first order
which means that every core will write its elements in order from left to right in $S$ which gives us zero block misses.
So block misses are zero when problem size is greater than  B or less than $n/p$. On the whole, we will have zero block misses.\\
Phase 2 : In $phase_2$ also we are traversing through a binary tree in which only leaves are taking part in writing. So similar to $phase_1$, block misses
are zero if $n\geq Mp.$\\
\textbf{Time and Cache misses :}
Say $T(n,p)$ represents total time, $T^{\prime \prime}(n,p)$ represents parallel time and $Q(n,p)$ represents total cache misses taken by $phase_1$ for input of size $n$ using $p$ cores.
\begin{equation*}
\mbox{$T(n,1)$ = }
\begin{cases}
1 &\text{if $n=1$}\\
O(1) + 2T(n/2,1) & \text{otherwise}
\end{cases}
\end{equation*}
\begin{equation*}
\mbox{$Q(n,1)$ = }
\begin{cases}
n/B &\text{if $n \leq M $}\\
O(1) + 2Q(n/2,1) & \text{otherwise}
\end{cases}
\end{equation*}
Solving these both equations, we get $T(n,1)=O(n)$ and $Q(n,1)=O(n/B).$\\
Say $K=n/p$, where $K\geq M$
\begin{equation*}
\mbox{$T^{\prime \prime}(n,p)$ = }
\begin{cases}
n &\text{if $n\leq K$}\\
O(1) + T^{\prime \prime}(n/2,p/2) & \text{otherwise}
\end{cases}
\end{equation*}
\begin{equation*}
\mbox{$Q(n,p)$ = }
\begin{cases}
n/B &\text{if $n \leq K $}\\
O(1) + 2Q(n/2,p/2) & \text{otherwise}
\end{cases}
\end{equation*}
Solving these both equations, we get $T^{\prime \prime}(n,p)=O(n/p +\log p)$ and $Q(n,p)=O(n/B).$\\
This concludes the proof of Theorem \ref{prefix}.
\end{proof}
\begin{lem}\label{transpose}
Using $p$ cores, a matrix of size $m\times n$ can be transposed in time O$(mn/p+ \log p)$ with total $O(mn/B)$ cache misses given $mn\geq Mp$.
\end{lem}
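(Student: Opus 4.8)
The plan is to mimic the parallel divide-and-conquer template of Lemma~\ref{prefix}, wrapping it around the classical recursive cache-oblivious transpose of Frigo et al.~\cite{reftranspose}. Recall that sequential algorithm: to transpose an $a\times b$ submatrix, split it across its longer dimension into two halves, transpose each half recursively into the corresponding half of the output, and copy directly at the base case; this incurs $O(ab/B)$ cache misses under the tall-cache assumption. I would parallelise it exactly as the prefix computation was parallelised: each recursive call spawns two independent calls of (essentially) half the size, and I would hand these to two disjoint halves of the available processor set so that the siblings run simultaneously. The splitting continues until a subproblem has at most $K=mn/p$ entries, at which point a single core owns it and runs the sequential cache-oblivious recursion on it without further forking. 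Since the two sibling subproblems always write to disjoint regions of the output, the recursion is well defined and the allocation is determined entirely by processor ids, so no time is spent distributing work.

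Writing $N=mn$ and $K=N/p$, the time and cache-miss recurrences are identical in form to those of Lemma~\ref{prefix}:
\begin{equation*}
T''(N,p)=
\begin{cases}
N & \text{if } N\le K,\\
O(1)+T''(N/2,\,p/2) & \text{otherwise,}
\end{cases}
\qquad
Q(N,p)=
\begin{cases}
N/B & \text{if } N\le K,\\
O(1)+2\,Q(N/2,\,p/2) & \text{otherwise.}
\end{cases}
\end{equation*}
The time recurrence unwinds over the $O(\log p)$ forking levels and then adds the $O(N/p)$ cost of the single sequential subproblem, giving $T''(N,p)=O(N/p+\log p)=O(mn/p+\log p)$. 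The cache-miss recurrence contributes $O(p)$ from the $O(1)$ terms summed over the $p$ leaves of the forking tree, plus $p$ base cases of $O((N/p)/B)$ each (the sequential transpose of a size-$N/p$ submatrix on one core, by \cite{reftranspose} under tall cache), for a total of $O(N/B+p)=O(mn/B)$, where $mn\ge Mp\ge Bp$ absorbs the additive $O(p)$.

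The main obstacle, exactly as in Lemma~\ref{prefix}, is bounding the \textbf{block misses} from concurrent writes into the output matrix $C$ (stored, say, in row-major order as an $n\times m$ array). When the recursion splits along the columns of the input (equivalently the rows of $C$), the siblings occupy disjoint contiguous row-bands of $C$ and share at most one boundary block, so such levels are harmless. The delicate case is a split along the rows of the input, where the siblings write interleaved \emph{column} ranges of $C$, so every row of $C$ contributes a boundary block both siblings may touch. To control this I would bias the forking so that each core ultimately owns a contiguous set of whole rows of $C$ --- i.e.\ distribute the columns of the input among the cores --- so that each core writes a contiguous output region and hence incurs at most two boundary block misses. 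Concurrent writes then occur only at the $O(p)$ region boundaries, for a total of $O(p)=O(mn/B)$ block misses; once a subproblem is owned by a single core it is executed depth-first and its writes are sequential, contributing zero block misses. Verifying that this contiguous-output partition can always be realised --- including the thin-matrix case $n<p$, where additional row splitting is unavoidable --- while preserving the $O(mn/B)$ cache bound is the part that requires the most care.
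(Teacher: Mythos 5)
There is a genuine gap, and it sits exactly where you flagged ``the part that requires the most care'' --- but the problematic regime is much larger than the thin-matrix case $n<p$, and the strategy you commit to cannot be repaired from within. Your block-miss fix distributes only the \emph{columns} of the input among the cores, so each core owns an $m\times(n/p)$ strip and writes a contiguous row-band of $C$. But then each core must \emph{read} a strip of width $n/p$ with row stride $n$: whenever $n/p<B$ (with, say, $n\ge 2B$), the $m$ row-segments of a strip lie in $m$ distinct blocks, so a core incurs at least $m$ read misses no matter how it recurses internally, and a block straddling several strips is fetched once by each core owning part of it. Summing over cores gives $\Theta(mp)$ read misses, which exceeds the target $O(mn/B)$ precisely when $n/p<B$. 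This regime is perfectly consistent with the hypothesis $mn\ge Mp$: take $m=n=\sqrt{Mp}$ with $M=B^2$ and $p\ge 4$, so $n/p=B/\sqrt{p}<B$, and the columns-only partition costs $\Omega(Bp^{3/2})$ misses against the target $O(mn/B)=O(Bp)$. Note also that this makes the two halves of your writeup inconsistent: the $Q$-recurrence you solve (leaf cost $O((N/p)/B)$, quoting \cite{reftranspose}) is valid for the alternating longest-dimension split, not for the columns-only split, since the Frigo et al.\ bound does not apply to a strided submatrix of width below $B$.

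The paper escapes this tension with a case analysis on $m$ versus $B$, which is the idea you are missing. When $m<B$ it does exactly what you propose (columns-only split, contiguous output per core; and there $n/p\ge M/m>B$, so the reads are cheap too). When $m\ge B$, however, it keeps the alternating split but biases it --- splitting the column dimension already when $n>m/4$ rather than when $n\ge m$ --- so that every leaf subproblem has at least as many rows as columns, hence at least $\sqrt{M}\ge B$ rows (using the tall-cache assumption). A core transposing an $x\times y$ leaf with $x\ge\sqrt{M}\ge B$ writes $y$ output runs of length $x$ each; its output region is \emph{not} contiguous, but each run spans at least $x/B\ge 1$ blocks, so the at most two boundary blocks per run that can be shared with a neighbouring core are dominated by the $O(xy/B)$ legitimate misses, and both the read and the write side of a leaf stay within $O(xy/B)$. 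In short, contiguity of each core's output is sufficient but not necessary for controlling block misses --- output runs of length at least $B$ suffice --- and giving up contiguity when $m\ge B$ is what lets the $O(mn/B)$ bound survive for all shapes of the matrix.
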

\begin{proof} As given in $[\cite{reftranspose}]$, Using one processor, we can transpose any matrix of size $m\times n$ in $O(mn)$ time and $O(mn/B)$ cache misses.
We will implement the same algorithm on multi-core model with slight variations to bound block misses. 
Given that matrix-es are saved in row major layout, the algorithm can be described with this simple recurrence relation :
\begin{equation*}
\mbox{$T(m,n)$ = }
\begin{cases}
O(1) &\text{if $Max(m,n)\leq$ $c$}\\
O(1) + 2T(m,n/2) & \text{if n $\geq$ m}\\
O(1) + 2T(m/2,n) & \text{otherwise}
\end{cases}
\end{equation*}
where $c$ is some constant.
\begin{enumerate}
\item If both $m$ and $n$ are constant, transpose it directly using some brute-force approach.
\item Else, divide the whole matrix into two equal halves according to larger dimension.
\end{enumerate}
\textbf{Implementing with $p$ processors :} Say $K=mn/p$ and as given $K\geq M$. \\
\begin{equation*}
\mbox{$T^{\prime \prime}(m,n,p)$ = }
\begin{cases}
O(mn) &\text{if $mn\leq K$}\\
O(1) + T^{\prime \prime}(m,n/2,p/2) & \text{if n $\geq$ m}\\
O(1) + T^{\prime \prime}(m/2,n,p/2) & \text{otherwise}
\end{cases}
\end{equation*}
which gives us O$(mn/p+ \log p)$ parallel time.\\
After every level, both, size of problem and one of the dimension (row or column) is divided by two. As we keep on dividing the problem, at some level,
$mn$ becomes $\leq M$ and whole problem can fits into cache. Because of dividing nature of this algorithm, at that time one of 
the dimension will be $> \sqrt{M}/2$ and other will be $\geq \sqrt{M}$.\\   
\textbf{Processor Allocation} can be done using same method as explained in Prefix Sum Computation (Lemma \ref{prefix}) without any extra cost.\\
\textbf{Cache Misses :}
\begin{equation*}
\mbox{ $Q(m,n,p)$ = }
\begin{cases}
O(mn/B) &\text{if $mn\leq K$}\\
O(1) + 2Q(m,n/2,p/2) & \text{if n $>$ $m/4$}\\
O(1) + 2Q(m/2,n,p/2) & \text{otherwise}
\end{cases}
\end{equation*}
which comes to $Q(m,n,p)= O(mn/B).$ We have changed condition from $n\geq m$ to $n > m/4$ to make sure that when every processor gets its
individual problem, no. of rows are $\geq$ no. of columns in that sub-problem. This will help us to bound block misses.\\
\textbf{Block Misses :} After dividing given problem into enough sub-problems such that every core get its individual sub-problem, say of size $x\times y$, 
where we know that 
$xy  = mn/p \geq M$.\\
Depending on $m$ we have two cases:
\begin{enumerate}
\item  [Case I] $m\geq B$\\
Because of dividing conditions of this algorithm, we can claim that $x \geq \sqrt{M}$. While writing transpose of matrix $x \times y$ in output, 
processor will be writing $x$ contiguous elements $y$ times which will incur zero block misses, if $x\geq B$. 
\item  [Case II] $m < B$\\
\begin{figure}[ht]
\begin{center}
\includegraphics[scale=0.60]{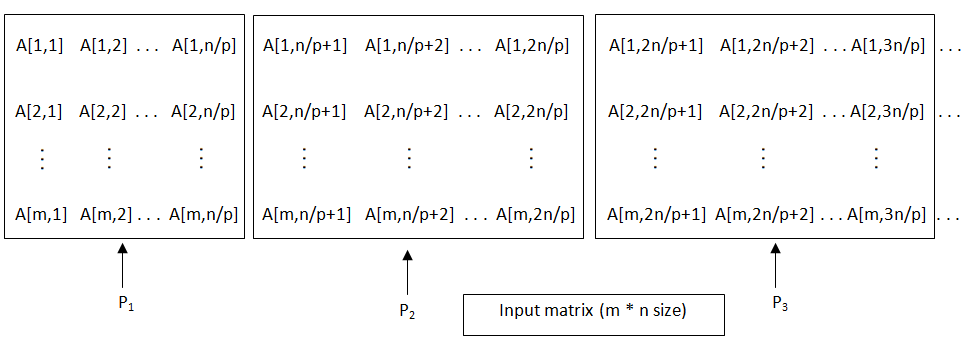}
\end{center} 
\caption{Each core gets matrix of size $m \times n/p$}
\label{imgtranspose1}
\end{figure}
\noindent{}When dividing given problem into sub-problems, we will always divide the matrix by column side till every processor get one 
sub-problem. Every processor will have a sub-problem of size $m \times n/p$ which is $\geq M$ as shown in figure $\ref{imgtranspose1}$.
Given that these matrix are saved in row major 
form, it can be easily seen that keys given to every core
will be stored in contiguous locations in output matrix as shown in figure $\ref{imgtranspose2}$.
\begin{figure}[ht]
\begin{center}
\includegraphics[scale=0.60]{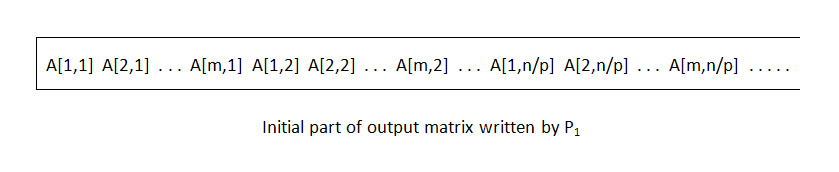}
\end{center} 
\caption{Each core gets write $mn/p$ contiguous elements}
\label{imgtranspose2}
\end{figure}
And every processor have $\geq M$ keys, so we will not have any block misses.
\end{enumerate}
This concludes the proof of theorem $\ref{transpose}$.  
\end{proof}
\begin{lem}\label{rank}
Given $n$ keys, rank of any key can be found in time
$O\bigg(\frac{n}{p}+\log p\bigg)$
with total $O(n/B)$ cache misses using $p$ cores, given $n\geq Bp$.
\end{lem}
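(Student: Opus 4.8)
The plan is to reduce the rank computation to a counting-then-reduction pipeline, reusing the summation routine of Lemma~\ref{sum}. Recall that the rank of a query key $x$ is simply the number of the $n$ input keys that are less than (or equal to) $x$. First I would assume, as in Lemma~\ref{maximum}, that the input has been partitioned into $p$ contiguous chunks, each of size $n/p$, with chunk $j$ assigned to core $j$. Each core reads the query key $x$ once and then scans its own chunk, maintaining a local counter of how many keys in that chunk are less than $x$. This local scan is a read-only sweep over a contiguous block of the input, so it incurs no block misses and costs $O(n/p)$ time per core.

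For the cache accounting of the counting phase I would invoke the \textbf{Contiguous Reads and Writes} bound: the $p$ cores collectively read the $n/B$ blocks of the input, and since a core may begin or end in the middle of a block, the total across all cores is at most $n/B + 2p = O(n/B)$ given the hypothesis $n \geq Bp$. Broadcasting the single key $x$ to all cores is a concurrent read of one block, contributing at most $p$ additional cache misses, which is again $O(n/B)$ under $n \geq Bp$. At the end of this phase we are left with $p$ partial counts, one per core, stored in $p$ distinct locations.

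It remains to add the $p$ partial counts to obtain the final rank, and this is exactly the task handled by Lemma~\ref{sum}: summing $p$ values over $p$ cores takes $O(p/p + \log p) = O(\log p)$ time with $O(p)$ total cache misses and, by the pairwise-reduction layout of Lemma~\ref{maximum}, no uncontrolled block misses. Combining the two phases gives total time $O(n/p + \log p)$ and total cache misses $O(n/B + p)$, which collapses to $O(n/B)$ precisely because $n \geq Bp$ implies $p \leq n/B$. The only point that genuinely needs care — rather than a hard obstacle — is verifying that none of the auxiliary operations (the broadcast of $x$ and the reduction of the $p$ counters) push the cache cost above the counting cost; each of these is $O(p)$, and the assumption $n \geq Bp$ is exactly what absorbs the $O(p)$ terms into the dominant $O(n/B)$ bound.
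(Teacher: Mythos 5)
Your first phase (chunked local counting, contiguous-reads accounting, broadcast of the query key) matches the paper's proof exactly, and your time bound is fine. The gap is in the second phase, and it sits precisely at the one point where this model requires care. You sum the $p$ partial counts using all $p$ cores and assert, citing ``the pairwise-reduction layout of Lemma~\ref{maximum},'' that there are no uncontrolled block misses. That citation does not do the work: the proof of Lemma~\ref{maximum} (and hence of Lemma~\ref{sum}) never specifies where the intermediate results of the reduction tree are written, and its stated hypothesis, inherited from the contiguous-layout argument, is that the number of keys be at least $B$ times the number of cores. Instantiated with $p$ keys and $p$ cores this reads $p \geq Bp$, which is false for any $B>1$, so Lemma~\ref{sum} cannot be invoked in the way you use it. Concretely, if the $p$ partial counts sit in $p$ contiguous locations, then in the first round of the pairwise reduction up to $B/2$ cores write into the same block; in this model $x$ concurrent writers to one block cost $\Omega(x^2)$, so one round already incurs $\Omega\big((p/B)\cdot B^2\big) = \Omega(pB)$ block misses. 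Under the lemma's hypothesis $n \geq Bp$ this can be as large as $\Omega(n)$ (take $n = Bp$), which exceeds the target $O(n/B)$ by a factor of $B$.

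The paper's proof avoids exactly this by throttling the parallelism in the addition step: it uses only $p^2/n$ cores, so each adder handles $p/(p^2/n) = n/p \geq B$ contiguous counts sequentially, no two adders touch the same block, and the precondition of Lemma~\ref{sum} for $p$ keys on $p^2/n$ cores becomes $p \geq B\cdot(p^2/n)$, i.e.\ exactly the hypothesis $n \geq Bp$; the time is still $O(n/p + \log p)$. Your route could be repaired without changing cores --- store the partial counts spread out at spacing $n/p \geq B$ (say, each core writes its count at the head of its own chunk region) and run the tree reduction in place, so that every live value occupies its own block --- but that layout argument is the entire content of the step and has to be made explicitly; as written, the proposal's block-miss claim is unsupported and the bound it needs does not follow from the lemmas it cites.
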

\begin{proof} The whole task is divided into $p$ processors such that 
every processor will find rank of given key among $\lceil n/p\rceil $ keys.
It will take $\lceil n/p\rceil$ time and total $\bigg(\frac{n}{pB} \times p +p \bigg) = (n/B +p)$ cache misses
across all $p$ cores.\\
Ranks found by all $p$ cores are added to get the final rank of key among all $n$ keys.
 For addition, we will use only $p^2/n$ cores(to avoid block misses).
As $p\leq n$ is given, so $p^2/n$ will always be $\leq p$.\\
Using $p^2/n$ cores, $p$ elements can be added in
$O(n/p+\log p)$ time and $O(p/B)$ cache misses, given $p\geq B p^2/n$ or $n\geq Bp$ (Theorem $\ref{sum}$.) Total cache misses are bounded by $O(n/B)$, as $p\leq n$.
\end{proof}
\begin{lem}\label{compaction}
Using $p$ cores, $n$ elements can be written in contiguous locations in 
$O(n/p)$ time and $O(n)$ cache misses, if $n\geq Bp.$
\end{lem}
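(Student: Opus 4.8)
The plan is to reduce the task to $p$ independent, non-overlapping contiguous writes, so that the only place where concurrent writing can occur is at block boundaries. First I would partition the $n$ contiguous destination slots into $p$ consecutive runs of $\lceil n/p\rceil$ slots each and hand run $i$ to core $i$. Since these runs are disjoint and laid out contiguously, the output region owned by each core is itself a contiguous segment of memory, and two cores can share at most the one or two boundary blocks where their runs abut. Each core then fills its run by fetching the (possibly scattered) elements assigned to it and writing them into its slots.

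For the time bound, each core handles exactly $\lceil n/p\rceil$ elements with $O(1)$ work per element, so the parallel time is $O(n/p)$. No combining or reduction phase is needed, which is why there is no additive $\log p$ term here, in contrast to Lemmas \ref{maximum}--\ref{rank}.

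For the cache-miss bound I would account for the sources of misses separately. The elements being gathered may sit in arbitrary, non-contiguous locations, so in the worst case each of the $n$ fetches is a separate miss, contributing $O(n)$; this pessimistic accounting is exactly what forces the bound $O(n)$ rather than $O(n/B)$. The writes, by contrast, go into a contiguous run, so per core they cost $O(\lceil n/(pB)\rceil)$ misses plus at most two boundary block misses, exactly as in the Contiguous Reads and Writes discussion, and summing over cores gives $O(n/B + p)$. Under the hypothesis $n \ge Bp$ we have $p \le n/B \le n$, so every one of these terms is $O(n)$.

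The step I expect to be the real obstacle is ruling out the quadratic blow-up from concurrent writing noted earlier: if $x$ cores wrote into the same block, the cost could be $\Omega(x^2)$. The partition-by-destination scheme is precisely what avoids this, since each core owns a disjoint contiguous run, the only shared blocks are the $O(p)$ boundary blocks, and each such block is touched by at most two cores and hence contributes only $O(1)$ block misses. Thus the total block-miss contribution is $O(p) = O(n/B)$, subsumed by the $O(n)$ bound, and the naive $\Omega(x^2)$ behaviour never arises. The remaining details — the case $B \nmid n/p$, where a run begins or ends inside a block — are routine and absorbed into the additive $O(p)$ boundary term.
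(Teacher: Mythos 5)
Your proposal is correct and matches the paper's own proof: both assign each core a contiguous run of $\lceil n/p\rceil$ output locations so that writes are sequential and, since $n/p \geq B$, block misses are confined to the $O(p)$ run boundaries, with the $O(n)$ cache bound absorbing the cost of reading the (possibly scattered) source elements. Your treatment of the boundary blocks is in fact slightly more careful than the paper's, which simply asserts zero block misses, but the argument is essentially identical.
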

\begin{proof} The crux of this step is to avoid block misses. For that 
reason we make sure that every processor gets more than B contiguous locations to write
its keys.
This lemma is generally used in cases when processors work on disjoint problems, get some output 
and finally write all those outputs at contiguous locations.\\
Every processor is assigned an equal number of keys to write. 
The first processor reads the first $\lceil n/p\rceil $ keys from the input and 
writes them in the first $n/p$ locations
of the output. 
Similarly the  $i^{th}$
processor reads the $i^{th}$ block of $n/p$ keys from the input and writes
 in contiguous locations starting from $(n/p \times (i-1)+1)^{th}$ location of the output.
It will take $\lceil n/p\rceil $ time and $n$ cache misses to read and write all keys. Total block misses are zero if $n/p\geq B$.
\end{proof}
\subsection{Brute-force Sorting}
\begin{lem}\label{sortlog2}
Using Brute-force approach, $n$ keys can be sorted in $O\bigg(\frac{n^2}{p}+\log p\bigg)$  time using $p$ cores
with total cache misses $O(n^2/B+n)$, given $n^2\geq Bp$. 
\end{lem}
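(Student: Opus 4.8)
\medskip
\noindent\textbf{Proof proposal.}
The plan is to reduce sorting to rank-finding, for which we already have Lemma~\ref{rank}: the sorted position of a key is exactly its rank among the $n$ keys (breaking ties by input index so that the ranks form a permutation of $\{1,\dots,n\}$). So I would first compute the rank $R[i]$ of every key $i$, and then write each key into the output slot indexed by its rank. Computing one rank is a brute-force scan that compares the key against all $n$ keys, so the total comparison work is $\Theta(n^2)$, which is consistent with the target time $O(n^2/p+\log p)$ and cache $O(n^2/B+n)$.

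For the ranking phase I would distinguish two regimes according to whether there are more keys than processors. If $p\le n$, I assign each processor a contiguous group of $\lceil n/p\rceil$ keys and let it compute the ranks of its keys \emph{sequentially}; ranking one key is a single scan of the $n$ input keys costing $O(n)$ time and $O(\lceil n/B\rceil)$ cache misses (a pure concurrent read of the read-only input, so no block misses). A processor therefore spends $O(n^2/p)$ time and $O(n^2/(pB))$ cache misses, and because the work inside a processor is sequential no $\log p$ term appears; summed over the $p$ processors this is $O(n^2/p)$ time and $O(n^2/B)$ cache misses. If instead $p>n$, I allocate $q=\lfloor p/n\rfloor$ processors to each key and invoke Lemma~\ref{rank} to rank that key in $O(n/q+\log q)=O(n^2/p+\log p)$ time and $O(n/B)$ cache misses, doing all $n$ keys in parallel. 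The crucial point here is feasibility of Lemma~\ref{rank}, which needs $n\ge Bq=B\lfloor p/n\rfloor$; this is exactly the hypothesis $n^2\ge Bp$. Summing the $O(n/B)$ cache cost over the $n$ keys again gives $O(n^2/B)$, so in both regimes the ranking phase meets the claimed bounds.

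It remains to place the keys, writing key $i$ to output position $R[i]$. Since $R$ is a permutation, every output slot is written exactly once, so the total write volume is $O(n)$ and, ignoring contention, $O(n)$ cache misses — this is the source of the additive $+n$ in the stated bound. \textbf{The main obstacle} is controlling \emph{block misses} in this scatter: the target positions are an arbitrary permutation, so keys handled by different processors can fall in the same output block, and the pessimistic concurrent-write cost of $\Omega(B^2)$ per contested block would blow the bound up to $O(nB)$. To avoid this I would invert the direction of the write: partition the output into contiguous, block-aligned regions and make each region the private responsibility of a single processor, which scans the rank array $R$ and writes only those keys whose ranks land in its region. A region is then written by exactly one processor, so there are no block misses, and the extra scans of $R$ cost only $O(n^2/B)$ cache misses in total, which is absorbed by the ranking bound. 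This last reorganization, rather than any of the arithmetic, is the step that needs the most care.
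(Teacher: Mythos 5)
Your ranking phase is essentially the paper's (the paper also computes all $n$ ranks via Lemma~\ref{rank}, allotting $p/n$ cores per key), and your two-regime treatment of it is fine. The genuine gap is in your placement phase. By making each block-aligned output region the private responsibility of a single processor that scans the whole rank array $R$, you force a parallel time of $\Theta(n)$: some processor must sequentially examine all $n$ ranks. That is within the budget $O(n^2/p+\log p)$ only when $p\leq n$. But the hypothesis of the lemma is merely $n^2\geq Bp$, which allows $p$ to be as large as $n^2/B\gg n$, and the paper invokes the lemma precisely in this super-linear regime: in Lemma~\ref{sample} a sample of $\sqrt{n}$ keys is sorted using all $p$ cores (so the ``$n$'' of this lemma is $\sqrt{n}$ while the processor count can be far larger), and likewise inside Lemma~\ref{preprocess} and Lemma~\ref{cbrute}. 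In that regime $n^2/p+\log p=o(n)$, so your scatter step breaks the claimed time bound. Nor can you simply hand each region a team of helper processors to share the scan: collecting, without block misses, the keys whose ranks fall in a given region is exactly the scatter problem you set out to avoid, so the fix circles back to the original difficulty. (A smaller, fixable issue: regions of size $n/p$ cannot be block-aligned unless $n\geq Bp$, which is stronger than $n^2\geq Bp$; you would need regions of size $\max(B,n/p)$.)

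The paper's placement step avoids this by spreading the writes out rather than shrinking the set of writers. It allocates an auxiliary array $A$ of length $n^2$ and writes the key of rank $R_i$ at location $n\cdot R_i$, scheduling the writes in $\lceil n/p\rceil$ phases so that positions written simultaneously are at least $n^2/p\geq B$ apart --- hence zero block misses --- and then compacts the $n$ spread-out keys into contiguous locations using $p/n$ cores (Lemma~\ref{compaction}), each core writing $n^2/p\geq B$ contiguous output cells. Every step runs in $O(n^2/p+\log p)$ parallel time even when $p\gg n$, and the one-miss-per-key cost of the scatter and compaction is the source of the additive $+n$ in the cache bound. Your argument is sound if you restrict the statement to $p\leq n$, but the lemma as stated (and as used later in the paper) needs something like this spreading trick.
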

\begin{proof} To sort these $n$ keys, rank of every key is found in parallel using $(p/n)$ processors.\\
It will take $O\bigg(\frac{n^2}{p}+\log p\bigg)$ time and total $n \times \bigg(\frac{n}{B} +\frac{p}{n} \bigg) =
  \bigg(\frac{n^2}{B} +p \bigg)$ cache misses across all $p$ cores (Lemma $\ref{rank}$).
We have one output array B of length $n$ where every input key will be written at position given by its rank.
 To avoid block misses, use the following strategy.\\
Create one more output array A of length $n^2$ and write the 
 $i^{th}$ key at the location $ n \cdot R_i$ of $A$ (where $R_i$ represents rank of $i^{th}$ key.)
This is done in $\lceil n/p\rceil $ phases, by writing $p$ keys in every phase. In 
the first phase, all those cores whose keys have ranks $cn/p$
(where $c$ is $0$ to $n-1$), will write their keys and rest of the cores will remain idle. It will
take $O(1)$ time, $p$ cache misses and zero block misses if $n^2\geq Bp.$
In the next phase, all those cores whose keys have 
ranks $cn/p+1$, (where $c$ is $0$ to $n-1$) will write their keys.
In total, we will have total of $n/p$ such phases and
a total of $O(n)$ cache misses across all phases.\\
Finally, array A will contain all $n$ input keys spread-ed in $n^2$ locations in sorted order.
Using Lemma \ref{compaction}, we can write these $n$ keys at continuous location using $\lceil p/n\rceil $ processors in $O(n^2/p)$ time and $O(n)$ cache misses, if
$n\geq Bp/n$ or $n^2\geq Bp. $
\end{proof}
\begin{cor}\label{sortlog}
Using brute-force approach, $n$ keys can be sorted in time $O\bigg(\frac{n^2}{p}+\log p\bigg)$
with total $O(n^2/B)$ cache misses using $p$ cores, given $n^2\geq Bp$ and $n\geq B$.
\end{cor}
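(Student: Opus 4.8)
The plan is to derive this corollary directly from Lemma~\ref{sortlog2}, since the only difference between the two statements is the extra hypothesis $n \geq B$ together with a correspondingly sharper cache-miss bound. First I would observe that Lemma~\ref{sortlog2}, under the assumption $n^2 \geq Bp$, already establishes the claimed running time $O(n^2/p + \log p)$ together with total cache misses $O(n^2/B + n)$. The time analysis therefore carries over verbatim and no new argument is required for it; the only thing left to do is to tighten the additive cache-miss term.

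The remaining work is to absorb the additive $n$ into $O(n^2/B)$. The key observation is that whenever $n \geq B$ we have $n/B \geq 1$, and multiplying through by $n$ gives $n \leq n^2/B$. Consequently
\[
  O\!\left(\frac{n^2}{B} + n\right) = O\!\left(\frac{n^2}{B}\right),
\]
so the cache-miss bound collapses to exactly the form stated in the corollary. Note that both hypotheses are genuinely used: the condition $n^2 \geq Bp$ is what lets us inherit the time and cache bounds of Lemma~\ref{sortlog2}, while the extra condition $n \geq B$ is precisely what is needed to dominate the lower-order $n$ term by $n^2/B$.

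There is essentially no obstacle here, as the corollary is a bookkeeping refinement of the lemma rather than a fresh argument. The only point deserving a moment's care is to confirm that the stronger hypothesis $n \geq B$ is actually available in the intended applications: the brute-force sort is invoked on small subproblems during the distribution-sort, and one should check that those subproblems always have size at least $B$ so that the refined bound applies. Since the surrounding algorithm partitions into buckets whose sizes comfortably exceed $B$, this hypothesis is met and the cleaner $O(n^2/B)$ cache-miss bound may be used in the composite analysis.
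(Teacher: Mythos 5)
Your proposal is correct and matches the paper's proof exactly: the paper also states that the corollary is "a simple instance of Lemma~\ref{sortlog2} when $n \geq B$," i.e., the hypothesis $n \geq B$ lets the additive $n$ term in the cache bound $O(n^2/B + n)$ be absorbed into $O(n^2/B)$. Your write-up simply makes explicit the absorption step that the paper leaves implicit.
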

\begin{proof} This is a simple instance of lemma $\ref{sortlog2}$ when $n\geq B.$
\end{proof}
\subsection{Sampling}
\begin{lem}\label{sample}
Given a set $A$ of $n$ input keys and by using $p$ cores, where $n\geq Mp$, a set 
$S$ of size $n^{1/x}$ (where $x\geq 4$) can be chosen from A in $O(n/p+\log p)$ time and $O(n/B)$ cache misses such that 
$S$ satisfies the following property.\\ 
Suppose $t_0,\ldots, t_{f+1}$ is one of the longest sorted subsequence in sorted $A$ such that $t_0,t_{f+1}\in S$  
but $t_1,t_2,\ldots,t_{f}\notin $ $S$.\\
Then Pr(f $>$ $(1 + t^{-1/6}) n^{1-1/x}$  = $(t^{1/2}2^{(t^{1/2})})^{-1}$, where $t=\sqrt{n}/n^{1/x}$. 
\end{lem}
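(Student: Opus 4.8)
The plan is to realise $S$ by \emph{oversampling}, which is exactly what forces the maximum bucket down to $(1+o(1))\,n/|S|$ rather than the $\Theta\big((n/|S|)\log|S|\big)$ one would get from a plain random sample. First I would draw a random oversample $R\subseteq A$ of size $|R|=\sqrt n$ (for instance, split the input array into $\sqrt n$ contiguous chunks of $\sqrt n$ keys and let one core per chunk read a uniformly random key from its chunk, so the whole gather costs only $O(\sqrt n)\le O(n/B)$ misses under $n\ge B^2$). I would then sort $R$ with the brute-force routine and let $S$ consist of every $t$-th key of the sorted oversample, where $t=\sqrt n/n^{1/x}$. This yields $|S|=\sqrt n/t=n^{1/x}$ splitters, and $t$ is precisely the oversampling ratio: between two consecutive splitters of $S$ lie exactly $t-1$ keys of $R$.

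For the resource bounds everything reduces to sorting the oversample. Since $|R|=\sqrt n$, Corollary~\ref{sortlog} sorts $R$ in $O((\sqrt n)^2/p+\log p)=O(n/p+\log p)$ time with $O((\sqrt n)^2/B)=O(n/B)$ cache misses, the hypotheses $(\sqrt n)^2=n\ge Bp$ and $\sqrt n\ge B$ following from $n\ge Mp\ge Bp$ and $M\ge B^2$. Gathering $R$ and then extracting every $t$-th element of sorted $R$ into $S$ are contiguous read/compaction steps handled by Lemma~\ref{compaction} on $\le\sqrt n$ keys, so they are dominated and the total stays at $O(n/p+\log p)$ time and $O(n/B)$ misses.

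The probabilistic property is the heart of the matter. Write $L:=(1+\delta)\,n^{1-1/x}$ with $\delta=t^{-1/6}$, and note $n^{1-1/x}=n/|S|=t\sqrt n$ is the \emph{expected} bucket size. The key reduction is a deterministic implication: if every window of $L$ consecutive keys of sorted $A$ contains at least $t$ oversample keys, then no bucket can exceed $L$ in size; indeed, the keys strictly between two consecutive splitters hold exactly $t-1<t$ oversample keys, so a bucket of more than $L$ keys would exhibit a length-$L$ window carrying fewer than $t$ oversample keys. Hence $\Pr(f>L)$ is bounded by the probability that some length-$L$ window is \emph{sample-poor}. For a fixed window $W$ of $L$ keys, the number $Y$ of oversample keys in $W$ is (modelling $R$ as a uniform sample, equivalently each key included independently with probability $1/\sqrt n$) concentrated with mean $\mathbb{E}[Y]=L\cdot\tfrac{\sqrt n}{n}=(1+\delta)t$; the bad event $Y<t$ is a relative downward deviation of $\delta/(1+\delta)$. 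A Chernoff lower-tail bound then gives $\Pr(Y<t)\le\exp(-\Omega(\delta^2 t))=\exp(-\Omega(t^{2/3}))$, and a union bound over the $\le\sqrt n$ windows anchored at oversample positions (the only places an over-full bucket can start) produces a bound far smaller than the claimed $(t^{1/2}2^{t^{1/2}})^{-1}$, so the stated estimate follows with room to spare.

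The step I expect to demand the most care is this concentration argument, for two reasons. First, $R$ is genuinely a \emph{set}, so strictly I must invoke a tail inequality valid for sampling without replacement (Hoeffding's bound for hypergeometric variables obeys the same form as the Binomial Chernoff bound, which is what I would cite to keep the analysis honest). Second, I must keep the union-bound prefactor from degrading the final form: the clean way is to anchor windows only at the $\le\sqrt n$ oversample positions and check that the $\exp(-\Omega(t^{2/3}))$ per-window decay swamps this count, which it does for $x\ge4$ because then $t=n^{1/2-1/x}\ge n^{1/4}$ grows polynomially in $n$. Everything else—the deterministic window-to-bucket reduction and the arithmetic $\delta^2 t=t^{2/3}\ge t^{1/2}$—is routine once the sampling construction is fixed.
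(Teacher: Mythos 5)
Your construction coincides exactly with the paper's: draw an oversample $S^{*}$ of size $\sqrt{n}$ by taking one uniformly random key from each of $\sqrt{n}$ contiguous chunks, sort it with the brute-force routine (Corollary \ref{sortlog}), and output every $t$-th element where $t=\sqrt{n}/n^{1/x}$; your resource accounting (the quadratic-work sort of $\sqrt{n}$ keys dominating at $O(n/p+\log p)$ time and $O(n/B)$ misses, under $n\geq Mp$ and tall cache) also matches the paper step for step. Where you genuinely diverge is the probabilistic claim: the paper disposes of it in one line by citing Reif and Valiant \cite{refsample}, whereas you prove it from scratch via the deterministic reduction from an oversized bucket to a sample-poor window of length $L=(1+t^{-1/6})n^{1-1/x}$, a Chernoff lower-tail bound of $\exp(-\Omega(\delta^2 t))=\exp(-\Omega(t^{2/3}))$ per window, and a union bound. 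That argument is sound, and in fact simpler than you fear: under the chunked sampling scheme the number of oversample keys in a fixed window is a sum of \emph{independent} Bernoulli variables (one per chunk, with success probability $|W\cap\mathrm{chunk}|/\sqrt{n}$), so the plain Chernoff bound applies and the hypergeometric/Hoeffding caveat is unnecessary. The one point needing slightly more care is your union bound over windows ``anchored at oversample positions,'' since those anchors are random; either condition on the anchor being sampled (which shifts the window mean by at most one) or union over all $n$ deterministic starting positions -- both repairs cost only a polynomial factor that the $\exp(-\Omega(t^{2/3}))$ decay absorbs for $x\geq 4$. What your route buys is a self-contained, verifiable proof in place of an external citation; what the paper's buys is brevity and the precise constants of the cited bound -- your argument yields the stated probability only as a generous upper bound in the asymptotic regime, not the exact expression $(t^{1/2}2^{(t^{1/2})})^{-1}$ that the lemma (somewhat carelessly) states with an equality.
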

\begin{proof} This is done using the sampling technique given by Reif and Valiant\cite{refsample}.
\begin{enumerate}
\item Choose another set $S^{*}$ of size $n^{1/2}$  from $A$ randomly. The probability for any element being chosen in $S^{*}$ is equal. To choose 
$S^{*}$, imagine that input has been divided into
$n^{1/2}$ contiguous chunks all of equal size and one key is selected randomly from every chunk. It will take 
$(\lceil \sqrt{n}/p\rceil)$ parallel steps and total $n^{1/2}$ cache misses which is bounded by
$n/B$ if $n\geq B^2$ or $n\geq M$ and $m\geq B^2$. 
\item Using Lemma $\ref{sortlog}$, $S^{*}$ can be sorted in time O$\bigg(\frac{n}{p}+\log p\bigg)$ using $p$ cores
with total $O(\frac{n}{B})$ cache misses if $n\geq Bp$ and $\sqrt{n}\geq B$ which is true as $n\geq Mp$ is given.
\item To get required set $S$ of size $n^{1/x}$, choose $1(\sqrt{n}/n^{1/x})^{th}$,
$2(\sqrt{n}/n^{1/x})^{th}$, 
$ \ldots, \sqrt{n}^{th}$ element from $S^{*}$. It will take 
maximum $(\lceil n^{1/x}/p\rceil)$ parallel steps and total $n^{1/x}$ cache misses which is bounded by
$n/B$ if $n\geq B^2$.
\item Write these $\lceil n^{1/x}\rceil $ elements at contiguous locations using $\lceil pn^{1/x}/n\rceil $ cores in $\lceil n/p\rceil $ time and $\lceil n/B\rceil $ cache misses.
\end{enumerate}
From the results in \cite{refsample}, we can claim that 
Pr(f $>$ $(1 + t^{-1/6}) n^{1-1/x}$  = $(t^{1/2}2^{(t^{1/2})})^{-1}$, where $t=\sqrt{n}/n^{1/x}$.
\end{proof}
\section{Randomized Divide and Conquer}

Our basic framework for randomized divide-and-conquer is exemplified
by Resichuk's \cite{refreischuk} parallel sorting algorithm. 
We are given $n$ keys in beginning $A=(k_1,k_2,\ldots,k_n)$. The basic steps 
of algorithm are :
\begin{enumerate}
\item {\bf (Random Sampling to divide)} Choose a subset $S$ of size $\sqrt{n}$ 
from $A$ randomly. 
\item Sort $S$ by comparing every pair of keys in $S$.
\item {\bf (Partitioning Step)} Using $S$, divide $A$ into $\sqrt{n}$ 
intervals $B_0,B_1,\ldots,B_{\sqrt{n}}$ where $B_i$ contains those keys that are bigger than
$i^{th}$ smallest but not bigger than $(i+1)^{th}$ smallest element of $S$.
\item {\bf (recursive call)} Sort each box recursively independent of other boxes.
\end{enumerate}
The above algorithm runs in $O(\log n)$ steps with high probability 
in an $n$ processors CRCW PRAM model \cite{refreischuk}. 
\subsection{Adaptation into Multicore}

We now present a variation of the above algorithm on Multi-core cache 
oblivious model to achieve the same kind of bounds as the PRAM model. 
We will use the following notations for the various performance measures:\\
$T$ : total time (total operations) \hspace*{0.2cm} $\bar{T}$ : expected time 
\hspace*{0.2cm} $T^{\prime \prime}$ : parallel time \hspace*{0.2cm}
$\bar{T}^{\prime \prime}$ : expected parallel time \\
$Q$ : total cache misses \hspace*{0.2cm} $\bar{Q}$: expected cache misses. 
\begin{thm}\label{reischuk}
Using $p$ cores, we can sort $n$ general keys in expected time $O(\frac{n}{p} \log + \log n \log\log n)$ 
and expected cache misses $O(\frac{n }{B}\log_M n)$, assuming $M\geq B^{2+\alpha}$ where $\alpha$ = 2/31 and $p\leq \frac{n}{M}$.
\end{thm}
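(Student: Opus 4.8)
The plan is to implement Reischuk's four-step scheme (sample, sort the sample, partition, recurse) on the multicore model, using the Section~2 primitives to make each step cache- and block-efficient, and to arrange the recursion so that it has only $O(\log\log n)$ levels. At a subproblem of size $m$ I would: (i) draw and sort a small splitter set using Lemma~\ref{sample} and Corollary~\ref{sortlog}; (ii) use the sorted splitters to partition the $m$ keys into $\Theta(\sqrt m)$ buckets, realized in two cache-efficient rounds of a common $\Theta(m^{1/4})$-way partition procedure, as sketched in the introduction; and (iii) recurse on all buckets in parallel. Since one level shrinks the problem size from $m$ to $\Theta(\sqrt m)$, after $O(\log\log n)$ levels every subproblem has size $\Theta(M)$; in the \emph{analysis} (the algorithm itself stays oblivious to $M$) I would truncate the two recurrences at this point, observing that a size-$\Theta(M)$ subproblem then resides entirely in cache, so its further recursive sorting adds $O(M/B)$ misses and $O(M\log M)$ work but no $\log_M$ factor.

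For the time bound I would first use Lemma~\ref{sample} to control balance: its tail estimate shows that the largest of the $\Theta(\sqrt m)$ buckets exceeds its mean $\Theta(\sqrt m)$ by at most a $(1+o(1))$ factor except with probability $\exp(-m^{\Omega(1)})$, so that integrating the light tail gives $E[\text{largest bucket}]=\Theta(\sqrt m)$. By linearity of expectation over the $O(\log\log n)$ levels, the expected cost per level therefore equals the balanced-case cost up to lower-order terms, the rare overflow events contributing negligibly because $m\ge M$ keeps their probability tiny against the polynomially bounded recovery cost. The partition work is $\Theta(m\log m)$ per subproblem, so the total work telescopes as a geometric series $\sum_i n\cdot\Theta(2^{-i}\log n)=\Theta(n\log n)$, giving an $O(\tfrac{n}{p}\log n)$ work term, while each level contributes $O(\log n)$ to the critical path (sampling, sorting the splitters, and partitioning each have depth $O(\log n)$), for an $O(\log n\log\log n)$ depth. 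Combining these through the work--depth (Brent) relation yields the stated time, where the hypothesis $p\le n/M$ is what guarantees $m\ge M\,p_m$ for the processors $p_m$ assigned to a subproblem, keeping the ranking, prefix, and compaction primitives (Lemmas~\ref{rank}, \ref{prefix}, \ref{compaction}) inside their $O(\cdot)$ regimes.

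The cache analysis rests on the same geometric collapse, but now with the partition carrying the cost. The essential --- and hardest --- claim is that the two-phase partition distributes the $m$ keys into $\Theta(\sqrt m)$ buckets with only $O(\tfrac{m}{B}\log_M\sqrt m)$ cache misses and \emph{zero} block misses; granting this, the recurrence $Q(m)=\Theta(\sqrt m)\,Q(\sqrt m)+O(\tfrac{m}{B}\log_M\sqrt m)$, written as $Q(m)=\tfrac{m}{B}f(m)$ with $f(m)=f(\sqrt m)+\Theta(\log_M m)$, telescopes to $f(n)=O(\log_M n)$, i.e. $Q(n)=O(\tfrac{n}{B}\log_M n)$, the $\Theta(M)$-sized base cases adding only $O(n/B)$ in total. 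Note that this $\log_M$ factor in the partition is \emph{necessary}: a partition costing merely $O(m/B)$ would drive the recurrence to $O(\tfrac{n}{B}\log\log_M n)$, below the sorting lower bound, so the merge step must be where the $\log_M$ is paid. The main obstacle is therefore to build this partition/merge obliviously --- distributing at a $\log_M$ amortized rate without knowing $M$ or $B$, and avoiding block misses when many processors append to the same bucket concurrently, which I would handle by routing output through the compaction of Lemma~\ref{compaction} so that each processor writes $\ge B$ contiguous cells. The tall-cache-plus condition $M\ge B^{2+\alpha}$ with $\alpha=2/31$ enters precisely here, to make the exponents in the merge's own recursion balance so that it simultaneously meets the $\log_M$ rate and incurs no block misses; establishing the merge's cache, work ($O(m\log m)$), and depth ($O(\log m)$) bounds is where the real effort lies, the two telescoping recurrences above being routine once it is in hand.
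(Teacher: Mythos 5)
Your proposal reproduces the paper's overall architecture (oversample, cache-efficient merge-based partition, proportional processor allocation, $O(\log\log n)$-level recursion, work--depth combination), but your specific fan-out of $\Theta(\sqrt m)$ buckets per level breaks the argument at the two places where you lean on it. First, Lemma \ref{sample} explicitly requires the splitter set to have size $n^{1/x}$ with $x\geq 4$: its failure probability is $(t^{1/2}2^{t^{1/2}})^{-1}$ with $t=\sqrt{n}/n^{1/x}$, so at your choice $x=2$ we get $t=1$ and the lemma degenerates to ``the largest bucket exceeds twice its mean with probability at most $1/2$'' --- nothing like the $\exp(-m^{\Omega(1)})$ tail your expectation argument uses. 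Second, and more decisively, the partition into $z$ buckets (Theorem \ref{dividemain}) pays an additive $\Theta(\sqrt{nz})$ cache misses per level for its internal merges (lists times buckets, the $xt$ term of Theorem \ref{merge2}), and bounding this by $n/B$ forces $n\geq zB^2$, i.e.\ per processor share $n/p\geq B^{2/(1-\log_n z)}$. With $z=\sqrt m$ this is $n/p\geq B^4$, so your scheme only proves the theorem under $M\geq B^4$. The constant $\alpha=2/31$ does \emph{not} arise from ``balancing exponents inside the merge'' as you speculate: it arises because the paper's outer fan-out is $n^{1/x}$ with $x=32$, which relaxes the condition to $n/p\geq B^{2/(1-1/32)}=B^{2+2/31}$ while still giving $O(\log\log n)$ levels (subproblem sizes shrink as $m\mapsto m^{1-1/64}$, and these exponents still decay geometrically, so your telescoping sums survive unchanged). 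Your two rounds of $m^{1/4}$-way splitting do not rescue the $\sqrt m$ fan-out, since the binding quantity is the total bucket count per level, not how it is factored; the introduction's informal ``$\sqrt n$ buckets'' sketch, which you followed, is inconsistent with the theorem's stated $\alpha$.

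Beyond this, the proposal openly defers what you yourself call the essential and hardest claim --- that the partition runs obliviously in $O(\frac{m}{B}\log_M m)$ cache misses with zero block misses and small depth. That claim is exactly Theorems \ref{merge2} and \ref{dividemain}, whose proofs (prefix-based merge ownership, transposition of bucket-size tables, handling of the leftover chunks $y_i$, and the processor-allocation invariant keeping $n/p$ and the exponent $q=2/(1-\log_n z)$ constant down the recursion) constitute most of the paper's technical work; a writeup that black-boxes it cannot be credited as a proof of the theorem. Note also that the black box does not behave quite as you assume: the partition's depth is $O(\frac{m}{p}\log m+\log m\log\log m)$, not $O(\log m)$ (harmless for the final bound, since the $\log$'s decay geometrically over levels, but it shows the interface you posited is wrong). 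Finally, your analysis-only truncation at size $\Theta(M)$ differs from the paper, which stops the parallel recursion at size $N/P\geq M$ and switches to sequential cache-oblivious funnel sort; your variant is salvageable, but only after noting that proportional allocation gives each subproblem a single owning processor once its size reaches $N/P$, so that the ``fits in cache'' argument applies to one private cache rather than to several processors sharing a small subproblem.
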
 
\begin{proof} We are given a set $A$ of $n$ keys $(k_1,k_2,\ldots,k_n)$ 
in shared memory. We have $p$ cores each with cache size of $M$.\\ 
Our algorithm divides the problem into disjoint subproblems recursively until
we have the number of subproblems matching the number of processors. 
Every subproblem is assigned processors according to its
size so that $(n/p)$ ratio is same for every sub-problem. As all the 
subproblems at same level have nearly equal sizes, so when problem size 
becomes lower than this ratio
then it will mean that we have enough number of subproblems.
(Note that we know $p$.)\\
We represent the initial input size as $N$ and total processors given as $P$. 
After every step we check whether problem size $\leq(N/P)$ if yes, 
each subproblem is assigned to an individual  core
otherwise partition it further into smaller subproblems.\\ 
\begin{figure}[ht]
\begin{center}
\includegraphics[scale=0.60]{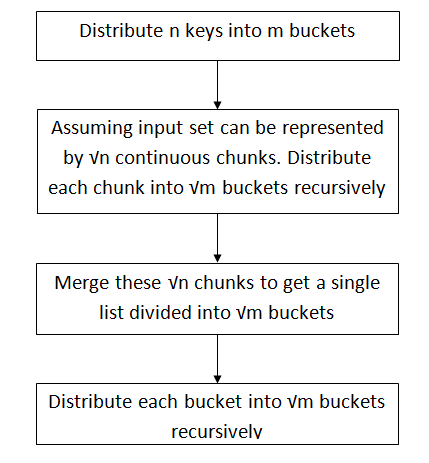}
\end{center}
\caption{The overall recursive partitioning scheme}
\label{blockdiag}
\end{figure}
The algorithms is executed in following steps :
\begin{enumerate}
\small{ 
\item If the problem size $\leq(N/P)$ then solve this problem on one 
processor sequentially. 
\\
\textit{Note that $n$ keys can be sorted using  one processor in time 
$O(n\log n)$ with $O(\frac{n }{B}\log_M n)$ cache misses (\cite{reftranspose})}
\item We choose a set $S$ of size $\lceil n^{1/x} \rceil,$ where $x\geq 4$  from $A$, using Lemma \ref{sample}. This task can be done in $O(n/p+\log p)$ time and $O(n/B)$ cache 
misses using $p$ cores, given $n\geq Mp.$ 
We will decide the value of $x$ later during the analysis.
\item Using $S$, divide $A$ into $|S|$ buckets $B_1,B_2,\ldots,B_{|S|}$ where 
$B_i$ contains those keys that are greater than
$i^{th}$ smallest but not greater than $(i+1)^{th}$ smallest element of $S$. \\
In other words, we have max \{x $\arrowvert$ x $\in B_i $\} $\leq S_i \leq$ min \{x $\arrowvert$ x $\in B_{i+1} $\}. \\
For this step we can follow the same approach as used in Reischuk sorting 
\cite{refreischuk}, where we assign every key to one processor and let it 
find the bucket number of this key using binary search on given $|S|$
bucket indexes. Then we do {\em integer sorting}\footnote{also known as
{\em semi-sorting}} to move all the 
elements to the right buckets. But even binary search step could cost us 
$n\log |S|$ cache misses where our target is 
$\frac{n}{B}\log_M |S|$ cache misses.
To avoid this we use the following strategy \\
Using Theorem \ref{dividemain}, we can do this task using $p$ cores in time $O(\frac{n}{p} \log n + \log n \log\log n)$
with a total of $O(\frac{n }{B}\log_M n)$ cache misses, provided $|S|\leq\sqrt{n}$, $p\leq \frac{n}{M}$ and $n\geq B^{2/(1 - \log_n |S|)}p$.
As $|S| = n^{1/x}$, this condition can be reduced to $n\geq B^{2/(1-1/x)}p$ 
or $n\geq B^2B^{2/(x-1)}p$. 
\item The original problem has been divided into $\lceil n^{1/x} \rceil$ subproblems where 
each subproblem can be solved independently. 
We claim that the size of largest subproblem will be  
$\leq (1 + t^{-1/6}) n^{1-1/x}$ with very high probability 
(which is $ 1 - 1/(t^{1/2}2^{(t^{1/2})})$, using Lemma  \ref{sample},
where $t=\sqrt{n}/n^{1/x}$. Therefore, a subproblem size is 
$\leq (1 + n^{(2-x)/12x}) n^{1-1/x}$ w.h.p. otherwise,
we repeat the partitioning step again starting from step two.
Distribute processors 
between subproblems according to their sizes and these subproblems are solved 
again recursively using this same procedure.
For the $n^{1/x}$ subproblems, processors 
are assigned to a problem such that the $(n/p)$ ratio
is same. 
}
For example, if the size of the $i^{th}$ problem is $n_i$ then 
it gets 
$p_i$ = $pn_i/n$ processors.
We will do prefix computation for the processor allocation. 
This takes $O\bigg(\frac{n}{p}+\log p\bigg)$ time 
using $p\frac{n^{1/x}}{n}$ cores with total cache misses $O(n^{1/x}/B)$, 
given $n^{1/x}\geq M\bigg(p\frac{n^{1/x}}{n}\bigg)$ or $n\geq Mp$. This concludes step 4. 
\\
Recall that to achieve the claimed bounds in steps 2 to 4, we require 
$\frac{n}{p}\geq$ max\{$M,B^2B^{2/(x-1)}$\}.\\
\subsection{Detailed Analysis}

We represent the initial input size as $N$ and the total number of 
processors given as $P$. As ratio $n/p$ remains same throughout the 
algorithm, so if it is given 
that $\frac{N}{P}\geq max\{M,B^2B^{2/(x-1)}\}$, condition given above will 
always be satisfied till the problem size $\geq (N/P)$.
When problem size becomes $N/P$, it means we have enough number of sub-problems so that we do not need to run steps 2 to 4, we solve the problem directly using step 1. \\
We can summarize the performance of our algorithm as : \\
Given, $\frac{N}{P}\geq max\{M,B^2B^{2/(x-1)}\}$, say $K=N/P,$ let 
$r$ be the probability of bad case (when size of largest sub-problem 
is $\geq$ $(1 + n^{(2-x)/12x}) n^{1-1/x}$), $x > 2$ and $P\leq N$. 
The expected parallel time satisfies 
\begin{equation}\label{neq1}
\bar{T^{\prime \prime}}(n)=
\begin{cases}
O\bigg(\frac{n}{p}\log n + \log n \log \log n\bigg) + (1-r) \bar{T^{\prime \prime}}(n_i) + r \bar{T^{\prime \prime}}(n) & \text{if $n \geq K$}\\
O(K \log K) & \text{otherwise}
\end{cases}
\end{equation}
For $n \geq 2^{18}$, we can choose $x$ s.t.
$n_i\leq n^{1-1/2x}$. 
\ignore{
Choosing $x=32$, it can be reduced to 
$n^{31/(32 \times 32)}\geq(1+n^{-5/64})$ which is true if $n\geq 2^{16}2^{16/31}$, which is true as generally $B\geq 2^8$ and it is given that
$n\geq B^{2}B^{2/31}$. So by choosing $x = 32$ we can satisfy all conditions.\\
As $r < 1/2$ so $\frac{1}{(1-r)}<2$, after rearranging equation $(\ref{neq1})$, we get \\
} 
Since $r \leq 1/n$, the previous recurrence can be rewritten as
\begin{equation}\label{neq2}
\bar{T^{\prime \prime}}(n)=
\begin{cases}
O(K \log n + \log n \log \log n) + \bar{T^{\prime \prime}}(n_i) & \text{if $n \geq K$}\\
O(K \log K) & \text{otherwise}
\end{cases}
\end{equation}
It follows that 
$\bar{T^{\prime \prime}}(n) = O\bigg(x\frac{n}{p}\log n+x\log n\log 
\log n\bigg) $ implying
$ \bar{T^{\prime \prime}}(n) = O\bigg(\frac{n}{p}\log n+\log n\log \log n\bigg).$
We have the following two cases depending on number of processors.
\\ Given $n\geq Mp$ and $n\geq B^2B^{2/(x-1)}p$ where $\bar{T^{\prime \prime}}(n)$ = $O(\frac{n}{p}\log n + \log n\log \log n)$. 
\begin{enumerate}
\item If $M \geq \log \log n$ or  $B^2B^{2/(x-1)} \geq \log \log n$ then $\frac{n}{p}\log n \geq \log n \log \log n$\\
so $\bar{T^{\prime \prime}}(n)=O(\frac{n}{p}\log n).$
\item If $p\leq \frac{n}{\log \log n}$ then $\frac{n}{p}\log n \geq \log n \log \log n$\\
so $\bar{T^{\prime \prime}}(n)=O(\frac{n}{p}\log n).$
\end{enumerate}
Likewise the total {\em expected} cache misses for all $p$ processors
\begin{equation}\label{neq3}
\bar{Q}(n)=
\begin{cases}
O(\frac{n}{B} \log_M n) + \displaystyle\sum_{i=1}^{n^{1/x}} \bar{Q}(n_i) & \text{if $n \geq K$}\\
O(\frac{K}{B} \log_M K) & \text{otherwise}
\end{cases}
\end{equation}
This yields $
\bar{Q}(n_i) = O(\frac{n_i}{B}\log_M n_i) + \displaystyle\sum_{i=1}^{n_i^{1/x}} \bar{Q}(n_{ii})$.\\
Since $n_i\leq n^{1-1/32x}$ and $\displaystyle\sum_{i=1}^{n^{1/x}} n_i=n$,
$\displaystyle\sum_{i=1}^{n^{1/x}} O(\frac{n_i}{B}\log_M n_i) \leq O(\frac{n}{B}\log_M n^{(1-1/32x)}) $.
It follows that $\bar{Q}(n)$ = $O(x\frac{n}{B}\log_M n)$ implying
$\bar{Q}(n)$ = $O(\frac{n}{B}\log_M n).$\\\\ 
\ignore{
\textbf{Time :}
\begin{equation}
\bar{T^{\prime \prime}}(n)=
\begin{cases}
O(K\log n + \log n \log \log n) + \bar{T^{\prime \prime}}(n_i) & \text{if $n \geq K$}\\
O(K \log K) & \text{otherwise}
\end{cases}
\end{equation}
where $n_i\leq n^{(1-1/32x)}$, we get :\\
$
\bar{T^{\prime \prime}}(n)= a(K\log n + \log n \log \log n) + a(K\log n^{(1-1/32x)} + \log n^{(1-1/32x)} \log \log n^{(1-1/32x)})\\
$
$\mbox{\hspace{50pt}} + a(K\log n^{(1-1/32x)^2} + \log n^{(1-1/32x)^2} \log \log n^{(1-1/32x)^2}) + \ldots \ldots + bK\log K  \\
$
} 
Notice that the bounds on time and caches misses are {\em expected} over
the choice of the random sample.
We have chosen $x = 32$, so if we assume that $M\geq B^2B^{2/31}$ then the 
condition $\frac{N}{P}\geq max\{M,B^2B^{2/(x-1)}\}$ can be simplified to 
$N\geq MP.$
This concludes the proof of Theorem \ref{reischuk}.    
\end{enumerate}
\end{proof}
\noindent{\bf Remark} : Using the analysis in \cite{refsample,3dhull}, we
can obtain high probability bounds for parallel time and cache misses.
\\
In the next subsections, we sketch some details of the individual steps.
\section{Merging algorithm}\label{mergesection}
\begin{figure}[ht]
\begin{center}
 \includegraphics[scale=0.50]{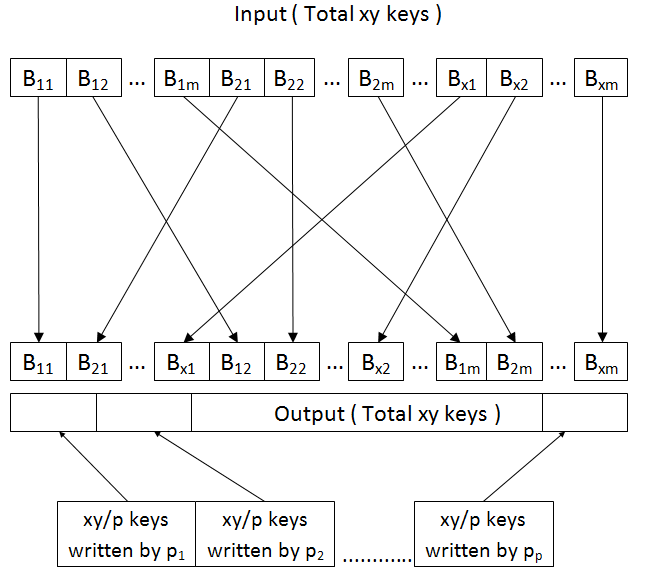}
\end{center}
\caption{Each core write contiguous $xy/p$ elements in output}
\label{imgmerge}
\end{figure}
\subsubsection{Basic structure of the Merging algorithm}
\textit{ Given $x$ lists each of size $y$ and all divided into $m$ buckets, merge all these lists to get one single list 
divided into $m$ buckets. }\\
Let $B_{ij}$ represent the $i^{th}$ list and $j^{th}$ bucket where 
$1 \leq i \leq x$ and $1 \leq j\leq m$. As shown in Figure \ref{imgmerge}, 
initially we are given input in form of $B_{11}$, $B_{12}$, \ldots, 
$B_{1m}$, $B_{21}$, $B_{22}$, \ldots, $B_{2m}$, \ldots, 
$B_{x1}$, $B_{x2}$, \ldots, $B_{xm}$ and
$p$ processors. We will divide the input into $p$ parts such that each core will get to write equal number of keys and in
a manner that in the output, the first processor will write the 
first $\frac{xy}{p}$ keys, the second processor will write next 
$\frac{xy}{p}$ keys and so on.
These processors may have to read their keys from different 
locations but while writing, they will write contiguously. 
To find the keys, every processor will need the information about the 
size of every bucket 
of every list. This will be done using prefix sum computation. \\
So whole task can be divided into three steps :
\begin{enumerate}
\item Every core needs to get the information about $xy/ p$ keys 
it must write. This is done using prefix sum computation. 
\item Every core will read exactly $xy/p$ keys using the information generated in first step.
\item Every core will write its $xy/ p$ keys in contiguous 
locations. 
\end{enumerate}
\subsubsection{Detailed explanation of Merging}
\begin{thm}\label{merge2}
Given $x$ lists of total size $y$, that  are indexed by $t$ buckets, 
we can merge these lists into a single list partitioned
into $t$ contiguous buckets in time $O\bigg(\frac{y}{p} +  \log p\bigg)$ 
using $p$ processors incurring $O(\frac{y}{B} + xt)$
cache misses, provided $y \geq Mp$ and $y\geq xt$.
\end{thm}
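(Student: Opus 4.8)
The plan is to realize the three-step scheme already sketched, with the output in \emph{bucket-major} layout: output bucket $j$ is the block $B_{1j}B_{2j}\cdots B_{xj}$, and the buckets are stored in order $1,2,\ldots,t$. The key observation is that under this layout every cell $B_{ij}$ occupies a \emph{contiguous} range of the output (at some offset $c_{ij}$), while the input already stores each $B_{ij}$ as a contiguous block. Hence the merge is nothing but a balanced parallel copy of the $xt$ contiguous cell-blocks, of total length $y$, each block being read from a contiguous input range and written to a contiguous output range. I assign processor $k$ the contiguous output interval of length $y/p$ beginning at $(k-1)y/p$, so that all \emph{writing} is to disjoint contiguous segments.

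\textbf{Step 1: offsets.} First I would compute, for each of the $xt$ cells, its starting offset $c_{ij}$ in the output. Since $c_{ij}$ is the sum of all earlier bucket sizes $\sum_{j'<j}\sum_{i'}s_{i'j'}$ plus the within-bucket prefix $\sum_{i'<i}s_{i'j}$, this is a two-level prefix computation over the $x\times t$ array of cell sizes. I would obtain it by transposing this array into bucket-major order (Lemma~\ref{transpose}) and running the prefix routine of Lemma~\ref{prefix} on the resulting $xt$ sizes. Because only $O(xt)$ data are touched and $y\ge xt$, this costs $O(xt/p+\log p)=O(y/p+\log p)$ time and $O(xt/B)$ cache misses, within the claimed bounds; when $xt<Mp$ these routines are simply run on the $\lceil xt/M\rceil\le p$ processors they can actually use, still in $O(M+\log p)=O(y/p+\log p)$ time.

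\textbf{Steps 2--3: read and write.} View $[0,y)$ as partitioned two ways: into the $xt$ cell intervals determined by the offsets $c_{ij}$, and into the $p$ equal processor intervals. Their common refinement has at most $xt+p$ pieces, each a contiguous sub-range of a single cell. Processor $k$ locates the first cell meeting its interval directly from the offset array --- a constant-time rank against the evenly spaced processor boundaries, so no per-processor binary search (which would add a spurious logarithmic factor) is required --- and then streams through the pieces in its range. Reading one piece of length $\ell$ from its contiguous input block costs $O(\ell/B+1)$ cache misses, so summing over the $\le xt+p$ pieces the total reading cost is $O(y/B+xt+p)$; since $y\ge Mp$ gives $p\le y/B$, this is $O(y/B+xt)$, and as the input is read-only no block misses occur even where two processors share a boundary block. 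Finally each processor writes its $y/p\ge M\ge B$ keys into one contiguous output segment; by the contiguous-write bound stated before Lemma~\ref{maximum} this incurs at most two block misses per processor and $O(y/B+p)=O(y/B)$ cache misses, again using $y\ge Mp\ge Bp$. The per-processor work is $O(y/p)$ for the copying plus $O(\log p)$ for the prefix phase, giving parallel time $O(y/p+\log p)$.

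\textbf{Main obstacle.} The delicate point is the cache analysis of the reading step, where a processor's $y/p$ keys are in general spread over many cells: the bound rests entirely on the interval-refinement argument that caps the number of (processor, cell) incidences at $xt+p$, together with the fact that read-only access to the input produces no block misses. The two hypotheses play exactly complementary roles: $y\ge Mp$ forces each processor's contiguous write to exceed $B$ (so the cache-oblivious inability to align segments to the unknown block boundary still costs only $O(p)$ boundary block misses) and gives $p\le y/B$, while $y\ge xt$ guarantees that all the $O(xt)$ bookkeeping and incidence overhead is absorbed into the target $O(y/B+xt)$ cache misses and $O(y/p+\log p)$ time.
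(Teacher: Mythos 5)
Your proposal is, in substance, the paper's own proof: the same transpose-plus-prefix computation (Lemmas \ref{transpose} and \ref{prefix}) on the $xt$ cell sizes to get the output offsets, the same balanced assignment of contiguous length-$y/p$ output segments to processors, and the same cache accounting for the read phase (at most one extra miss per cell boundary, $xt$ in all, plus $O(p)$ for block misalignment, absorbed via $p \leq y/B$) and for the write phase (each processor writes $y/p \geq B$ contiguous keys, hence $O(y/B)$ misses and no block misses). The one step you leave under-specified is exactly the one the paper spends its Steps 2 and 3 on: how processor $k$ learns which cell contains its starting position $(k-1)y/p$. Your parenthetical ``constant-time rank against the evenly spaced processor boundaries'' is valid only in the direction cells $\to$ processors: given a cell offset $c$, the processor interval containing it is $\lfloor c/(y/p)\rfloor$, computable in $O(1)$. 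The direction you actually need, boundary $\to$ cell, is a predecessor query into the non-uniformly spaced offset array, which processor $k$ cannot answer in $O(1)$ in isolation; as literally written, that sentence is wrong. The standard fix, and the paper's, is to run the $O(1)$ ranks in the easy direction: the processors jointly scan the $xt$ offsets ($O(xt/p)$ each), each consecutive pair of offsets detects by division which multiples $cy/p$ it brackets and deposits the cell identity at the known location $cy/p$ of a length-$y$ scratch array, and these $p$ entries are then compacted (Lemma \ref{compaction}) so that processor $k$ reads its start cell from position $k$. This costs $O(xt/p + y/p + \log p)$ time and $O(xt/B + p)$ cache misses with no block misses (since $y/p \geq B$), so it fits inside your stated budget; with that insertion your argument is complete and coincides with the paper's.
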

\begin{proof} Let $T_{merge}(x,y,t)$ 
represent the total time taken and $Q_{merge}(x,y,t)$ represent the total number of 
cache misses to merge $x$ lists of
total size $y$ and all divided into $t$ buckets using $p$ processors. \\ 
\begin{enumerate}
\item Every core needs to know all buckets size of every list to find $y/ p$ keys it has to 
write. There are a total of $xt$ such sizes. This information can be processed by first transposing given array of size $x\times t$
and then doing prefix sum computation on these $xt$ elements.\\ 
Using $(xpt/y)$ cores, both of these tasks can be done in time $O(y/p +  \log p)$ and $O(xt/B)$ 
cache misses, given $xt\geq M(xpt/y)$ or $y\geq Mp$ (Lemma $\ref{transpose}$ and $\ref{prefix}$).
We have enough cores for this task if $y\geq xt.$
\item From the information found in Step 1, we need those sizes which have values 
 $y/p, 2y/p, \ldots, py/p$. Every core is assigned a task to find these sizes 
from 
$xt/p$ sizes. Create an output array of length $y$, and if 
the size $cy/p$ is found, it will be written at location $cy/p$ of 
this output array. For reading and searching, it will take $\lceil xt/p\rceil $ time
and $\lceil \frac{xt}{B}+p \rceil $ cache misses. 
For writing, the total time will be $\lceil xt/p\rceil $, and the 
total cache misses are $p$ (and zero block 
misses if $y/p\geq B$).
\item These $p$ points found in Step 2 are written at contiguous locations in $\lceil y/p\rceil $ time 
with $O(p)$ cache misses, using $p^2/y$ cores, given 
$p\geq Bp^2/y$ or $y\geq Bp$ (Lemma \ref{compaction}).
\item Every core will read exactly ${y/p}$ keys which will take $\lceil y/p\rceil $ time. \\
Now we will bound total number of cache misses to read these $y/p$ keys. 
All the processors will read their input keys
sequentially unless two event happens as explained below :
\begin{enumerate}
\item The bucket that core was reading, ends. So in this case the processor 
may have to go to next list or next bucket which can increase cache miss 
count by at most
one over sequential misses and this can happen only $xt$ times. 
\item Processors may have to start or end reading in the middle
 of some block which can increase cache miss count by at most by one 
or two over sequential
misses. This can happen a maximum of $p$ times. 
\end{enumerate}
The total cache misses can be bounded by $O(y/B + xt + p)$ which is 
$O(y/B + xt)$ for $p\leq y/B.$.
\item Every core write its $y/ p$ keys sequentially that leads 
to a total of $O(y/B)$ cache misses. As $\frac{y}{B} \geq p$,
every core has more than one block to write and since 
every core is writing sequentially so there will not be any block misses. 
\end{enumerate}
This concludes the proof of Theorem \ref{merge2} .  
\end{proof}
The following is a simple corollary of the above Theorem.
\begin{cor}\label{merge1}
Given $x$ lists, each of size $y$, that are partitioned 
into $t$ buckets , we can merge these lists into a single list partitioned
into $t$ buckets in time $O\bigg(\frac{xy}{p} +  \log p\bigg)$ using $p$ processors with total $O(\frac{xy}{B} + xt)$
cache misses, given $xy \geq Mp$ and $y\geq t$. 
\end{cor}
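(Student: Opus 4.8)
The plan is to obtain Corollary \ref{merge1} by a direct specialization of Theorem \ref{merge2}, since the only difference between the two statements is the meaning of the size parameter. In Theorem \ref{merge2} the quantity $y$ denotes the \emph{total} number of keys across all $x$ lists, whereas in the corollary each of the $x$ lists separately has size $y$, so the total number of keys here is $xy$. First I would introduce a fresh symbol for the total size to avoid the notational clash: write $Y$ for the total input size appearing in Theorem \ref{merge2}.

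Next I would apply Theorem \ref{merge2} with $Y = xy$, keeping $x$, $t$, and $p$ unchanged. The theorem's two hypotheses then translate as follows: $Y \geq Mp$ becomes $xy \geq Mp$, which is exactly the first hypothesis of the corollary, and $Y \geq xt$ becomes $xy \geq xt$, i.e.\ $y \geq t$, which is exactly the second hypothesis. Thus both preconditions of the theorem are guaranteed by the hypotheses of the corollary, so the theorem applies verbatim, and in particular the three-step reading/searching/writing scheme and its block-miss analysis carry over without change.

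Finally I would read off the conclusions by substituting $Y = xy$ into the theorem's bounds. The running time $O(Y/p + \log p)$ becomes $O(xy/p + \log p)$ and the cache-miss bound $O(Y/B + xt)$ becomes $O(xy/B + xt)$, which are precisely the claimed bounds. There is essentially no obstacle here; the only point requiring care is to keep the two distinct uses of the letter $y$ apart, making sure that the ``total size'' slot of the theorem is filled by the product $xy$ rather than by the per-list size $y$, and correspondingly that the theorem's hypothesis $y \geq xt$ is not confused with the corollary's weaker-looking $y \geq t$ (they coincide once the factor $x$ is accounted for).
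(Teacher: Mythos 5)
Your proposal is correct and is exactly what the paper intends: the paper states Corollary \ref{merge1} as "a simple corollary of the above Theorem" with no further argument, and the intended argument is precisely your instantiation of Theorem \ref{merge2} with total size $Y=xy$, under which the hypotheses $Y\geq Mp$ and $Y\geq xt$ become $xy\geq Mp$ and $y\geq t$, and the bounds become $O(xy/p+\log p)$ time and $O(xy/B+xt)$ cache misses. Your care in separating the theorem's "total size" $y$ from the corollary's "per-list size" $y$ is the one subtlety, and you handled it correctly.
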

\begin{cor}\label{merge3}
Given $x$ lists, of {\em total} size $Y$, we can merge these lists into a 
single list in time $O\bigg(\frac{Y}{p} + \log p\bigg)$ using $p$ processors
incurring a total of $O(\frac{Y}{B} + x)$ cache misses, given $Y \geq Mp$. 
\end{cor}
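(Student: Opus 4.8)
The plan is to obtain Corollary \ref{merge3} directly as the single-bucket specialization of Theorem \ref{merge2}. Merging $x$ lists into one single (unbucketed) list is exactly the problem solved by the merging algorithm when the number of buckets is $t = 1$: in that degenerate case each input list consists of a single bucket, and the bucket-organized output of Theorem \ref{merge2} is nothing other than the concatenation of all input elements into one contiguous list. So I would simply invoke Theorem \ref{merge2} with $t = 1$ and $y = Y$, and then translate its two bounds and two preconditions into the statement of the corollary.

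Substituting $t = 1$ into the bounds of Theorem \ref{merge2}, the running time $O(y/p + \log p)$ becomes $O(Y/p + \log p)$ and the cache-miss bound $O(y/B + xt)$ becomes $O(Y/B + x)$, which are precisely the claimed bounds. For the preconditions, the condition $y \geq Mp$ becomes $Y \geq Mp$, which is assumed, while the condition $y \geq xt$ becomes $Y \geq x$. Here it is worth noting that the $xt = x$ term arises in the read phase (Step 4 of Theorem \ref{merge2}) from boundary crossings between consecutive segments; with $t = 1$ these crossings are exactly the transitions from one list to the next, of which there are at most $x$, so the specialized bound is faithful.

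The only point needing care is the inequality $Y \geq x$, since it is not stated explicitly in the corollary. I would dispatch it by the observation that each of the $x$ lists contributes at least one key to the total size, so $Y \geq x$ holds automatically whenever all lists are non-empty; any empty lists may be discarded in a trivial preprocessing pass that touches no data and hence costs nothing, after which the effective number of lists is at most $Y$. With $Y \geq x$ secured, all hypotheses of Theorem \ref{merge2} are met and the stated bounds follow.

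I expect no genuine obstacle here: the entire content is carried by Theorem \ref{merge2}, and the remaining work reduces to the bookkeeping of the $t = 1$ substitution together with the benign $Y \geq x$ check described above.
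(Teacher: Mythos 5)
Your proposal is correct and takes essentially the same route as the paper: both instantiate Theorem \ref{merge2} with $t=1$ (viewing each list as a single bucket and the output as one list with one bucket), and both discharge the extra hypothesis $y \geq xt$, which becomes $Y \geq x$, by noting that every list contributes at least one element to the total size. Nothing is missing.
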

\begin{proof} By merging, we imply writing the input lists in contiguous 
locations without changing the order of elements, i.e.,
$L_1,L_2,L_3,\ldots,L_x$. This Lemma is used in cases when different 
processors work on different problems and generate some output, and
we need to write the
output in a sequence. So every list can be thought of as 
one bucket and the final list is also one list divided into one bucket.
This can be done by
using Theorem \ref{merge2}. Given $x$ lists of total 
size $Y$, we can merge these lists into a
 single list (divided
into one bucket) in time $O\bigg(\frac{Y}{p} +  \log p\bigg)$ using $p$ 
processors incurring a total of $O(\frac{Y}{B} + x)$
cache misses, given $Y \geq Mp$ and $Y\geq x$. Note that $Y$ is the 
total size of $x$ lists and every list has at least one element so 
$Y \geq x.$ 
\end{proof}
\section{Dividing input into buckets}
\subsubsection{Basic structure of algorithm}
Here we give a brief description about algorithm and later follow up with detailed time and cache analysis. \\
\textit{Given $n$ keys and $m$ bucket indexes  $X_1,X_2,\ldots,X_m$ (given in sorted order), divide $n$ keys into $m$ buckets $B_1,B_2,\ldots,B_{{m}}$
such that $X_i \leq $ \{x $\arrowvert$ x $\in B_{i} $\}  $\leq X_{i+1}$. \\\\ } 
\begin{figure}[ht]
\begin{center}
\includegraphics[scale=0.45]{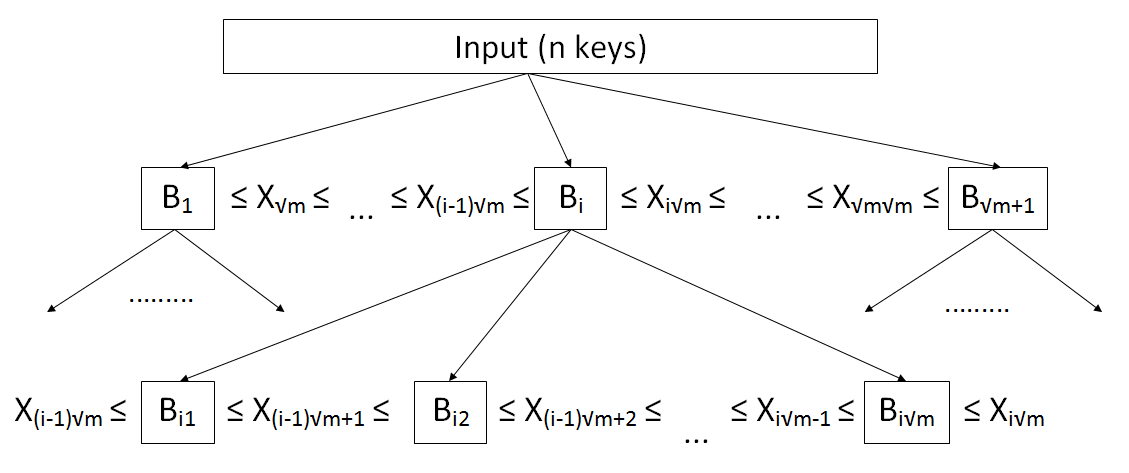}
\end{center} 
\caption{As shown for $B_i$, all buckets are divided into $\sqrt{m}$ buckets}
\label{imgdivide}
\end{figure}
Let  $T(n,m)$ represents the total time to divide $n$ keys into $m$ buckets.\\
First we divide $n$ keys into $\sqrt{m}$ buckets with splitters $( X_{\sqrt{m}},X_{2\sqrt{m}},\ldots,X_{\sqrt{m}\sqrt{m}} )$. For every $i$,
$1\leq i\leq \sqrt{m}$, $B_i$ is again divided into 
$B_{i1},B_{i2},\ldots,B_{i\sqrt{m}}$ as shown in figure \ref{imgdivide} :
\begin{enumerate}
\item Divide the $n$ keys into $\sqrt{m}$ buckets. This is done in two steps. 
\begin{enumerate}
\item Divide $n$ keys into $\sqrt{n}$ contiguous chunks of size $\sqrt{n}$ 
each. Now each chunk of size $\sqrt{n}$ is divided into
$\sqrt{m}$ buckets recursively. 
\item We get $\sqrt{n}$ lists, each divided into $\sqrt{m}$ buckets.
Merge these lists to get a single list divided into
$\sqrt{m}$ buckets. The merging algorithm is explained in the appendix. 
$$
  T(n,\sqrt{m}) = \sqrt{n}  T(\sqrt{n},\sqrt{m}) + T_{merge}(\sqrt{n},\sqrt{n},\sqrt{m})
$$
where $T_{merge}(\sqrt{n},\sqrt{n},\sqrt{m})$ represents time needed to merge $\sqrt{n}$ lists, each of size $\sqrt{n}$ and all divided into $\sqrt{m}$ buckets (explained in  section \ref{mergesection}) .
\end{enumerate}
\item  
Now we have $\sqrt{m}$ buckets $n_1$,$n_2$,$\ldots$,$n_{\sqrt{m}}$. Each bucket  is again divided into $\sqrt{m}$ buckets. 
$$
  T(n,m) = T(n,\sqrt{m}) +  \displaystyle\sum_{i=1}^{{m^{1/2}}} T(n_i,\sqrt{m})  
$$
We follow the same approach as done in step 1 above with some modifications. 
The following steps are done for every bucket :
\begin{enumerate}
\item Divide bucket into contiguous chunks of size $\sqrt{n}$. All these chunks(except last) will be of size
$\sqrt{n}$ and last chunk will have size $\leq \sqrt{n}. $
\item All these chunks (except the last) are divided into $\sqrt{m}$ buckets 
recursively. 
\item The last chunk is divided into $\sqrt{m}$ buckets directly with the 
help of binary search and sorting. We will explain this step in detail later. 
\item We have divided all these chunks in $\sqrt{m}$ buckets. Using our 
merge procedure, merge all these chunks to get one list divided
into $\sqrt{m}$ buckets. This concludes step 2.
\end{enumerate}
\end{enumerate}
Figure \ref{blockdiag} gives a high level structure of this scheme.
Finally, we obtain the following recurrence for the parallel time.
$$
  T^{\prime \prime}(n,m) = O(n/p +\log p) + 2T^{\prime \prime}(\sqrt{n},\sqrt{m}).   
$$
\subsubsection{Detailed analysis}
\begin{lem}\label{divide1}
Using one processor, $n$ keys can be divided into $x$ buckets in
$O(n\log n)$ time with $O(\frac{n }{B}\log_M n)$ cache misses, given $n\geq M$ and $x\leq n$ . 
\end{lem}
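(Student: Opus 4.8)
The plan is to realize the recursive $\sqrt{x}$-way partitioning scheme already sketched in this section, specialized to a single processor, and to drive both of its recursions down to a base case in which the active subproblem fits in cache. To divide $n$ keys into $x$ buckets I first run the routine that divides $n$ keys into $\sqrt{x}$ buckets, obtaining sub-buckets of sizes $n_1,\dots,n_{\sqrt{x}}$ with $\sum_i n_i=n$, and then recurse on each sub-bucket to split it into $\sqrt{x}$ finer buckets. The inner ``divide into $\sqrt{x}$ buckets'' routine is itself recursive in the \emph{size}: it splits the input into $\sqrt{n}$ contiguous chunks of size $\sqrt{n}$, divides each chunk into $\sqrt{x}$ buckets recursively, and then invokes the merging procedure of Theorem \ref{merge2} (applied to the $\sqrt{n}$ resulting sublists, of total size $n$, indexed by $\sqrt{x}$ buckets) to produce one list partitioned into $\sqrt{x}$ buckets. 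Both recursions bottom out as soon as the current subproblem has size at most $M$: then the whole working set --- the keys together with the relevant splitters, which number at most the subproblem size since $x\le n$ --- is cache-resident, so each key is classified by binary search against its splitters and distributed, incurring only $O(s/B)$ cache misses on a subproblem of size $s$.

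For the \textbf{time} bound I instantiate the recurrences of this section for a single processor, where Theorem \ref{merge2} gives merge time $O(n)$, obtaining
\[
T(n,\sqrt{x})=\sqrt{n}\,T(\sqrt{n},\sqrt{x})+O(n),\qquad
T(n,x)=T(n,\sqrt{x})+\sum_{i=1}^{\sqrt{x}}T(n_i,\sqrt{x}).
\]
The dominant cost is the binary-search classification. Over the $O(\log\log x)$ levels of the bucket recursion a key is searched against $\sqrt{x}$, then $x^{1/4}$, and so on, and these per-level costs $\log\sqrt{x}+\log x^{1/4}+\cdots$ form a geometric series summing to $O(\log x)$ per key, hence $O(n\log x)=O(n\log n)$ overall. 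The copying performed by the $O(\log\log n)$ levels of the size recursion contributes only lower-order terms, so $T(n,x)=O(n\log n)$.

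For the \textbf{cache} bound I set up the analogous recurrence with base value $Q(s,\cdot)=O(s/B)$ whenever $s\le M$ and merge cost $O(n/B+\sqrt{n}\,\sqrt{x})=O(n/B+\sqrt{nx})$ from Theorem \ref{merge2}:
\[
Q(n,\sqrt{x})=\sqrt{n}\,Q(\sqrt{n},\sqrt{x})+O(n/B+\sqrt{nx}),\qquad
Q(n,x)=Q(n,\sqrt{x})+\sum_{i=1}^{\sqrt{x}}Q(n_i,\sqrt{x}).
\]
Since every level performs only a constant number of scans of its data, each level contributes $O(n/B)$, the base case halts the recursion at size $M$, and the $\sqrt{\cdot}$ structure makes the number of levels still incurring a full $O(n/B)$ scan equal to $O(\log_M n)$ rather than $O(\log n)$ --- exactly as in the cache-oblivious distribution sort of Frigo et al.\ \cite{reftranspose}. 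This yields $Q(n,x)=O(\tfrac{n}{B}\log_M n)$, and since $x\le n$ the bound is as stated.

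The step I expect to be the main obstacle is the cache analysis, specifically controlling the overhead terms so they stay within budget. I must check that the per-merge surcharge $\sqrt{nx}$ and the associated bucket-boundary costs telescope into the $O(\tfrac{n}{B}\log_M n)$ allowance (using $x\le n$, $n\ge M$, and the tall-cache regime of the enclosing theorem), and that the recursion depth contributing full $O(n/B)$ scans is correctly counted as $O(\log_M n)$ rather than $O(\log n)$. Verifying that the base case is genuinely reached with only $O(s/B)$ misses --- i.e.\ that keys and their splitters co-reside in cache once $s\le M$ --- is precisely what ties the argument to the hypotheses $n\ge M$ and $x\le n$.
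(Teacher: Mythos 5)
Your route is genuinely different from the paper's, and it has a gap that the lemma's hypotheses cannot close. The paper proves Lemma \ref{divide1} by pure reduction: sort the $n$ keys with the sequential cache-oblivious sort of Frigo et al.\ \cite{reftranspose} ($O(n\log n)$ time, $O(\frac{n}{B}\log_M n)$ misses), then make a \emph{single synchronized linear scan} of the sorted keys against the sorted splitters to read off bucket boundaries, which costs only $O((n+x)/B)=O(n/B)$ extra misses since both arrays are read contiguously. Your proposal instead re-instantiates the recursive distribution scheme of Theorem \ref{dividemain} sequentially, and this imports the very overhead that Theorem \ref{dividemain} needs \emph{extra} hypotheses to control: each invocation of the merging procedure (Theorem \ref{merge2}) on $\sqrt{n}$ lists indexed by $\sqrt{x}$ buckets costs $O(n/B+\sqrt{n}\cdot\sqrt{x})$, and the $\sqrt{nx}$ term is bucket-boundary overhead, not a scan cost. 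Lemma \ref{divide1} assumes only $n\geq M$ and $x\leq n$; for $x=\Theta(n)$ (which the lemma explicitly permits, and which is the regime in which it is not simply subsumed by sorting) the surcharge is $\Theta(n)$ per level --- a factor $B$ over the $n/B$ budget --- and summed over the recursion it gives $\Theta(n\log_M n)$ rather than $O(\frac{n}{B}\log_M n)$. You flag this yourself as the main obstacle and propose to absorb it using ``the tall-cache regime of the enclosing theorem,'' but no such hypothesis appears in the lemma, and even $M\geq B^2$, $n\geq M$ does not help when $x$ is close to $n$: absorbing $\sqrt{nx}$ into $n/B$ requires $n\geq xB^2$, which is exactly the condition $n\geq B^{2/(1-\log_n z)}$ that Theorem \ref{dividemain} must impose and that Lemma \ref{divide1} must \emph{not} impose, since the lemma is used (as the sequential base case of Theorem \ref{dividemain} and in Theorem \ref{reischuk}) without any such restriction.

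Two smaller points. First, your time accounting is loose: binary searches happen only at the cache-resident base cases, and each key lies in $2^{\ell}$ subproblems at level $\ell$ of the joint $(n,x)\to(\sqrt{n},\sqrt{x})$ recursion, so the correct per-key bound is $2^{x}\cdot O(\log M)=O(\log_M n\cdot\log M)=O(\log n)$, not a geometric series over levels; the conclusion $O(n\log n)$ survives, but not by the argument you give. Second, although your proof is not formally circular (you supply your own cache-resident base case instead of calling Lemma \ref{divide1}), it is structurally backwards: the whole purpose of this lemma in the paper is to serve as the assumption-light sequential primitive that terminates the recursion of Theorem \ref{dividemain}, so it should be proved by means weaker than that recursion --- which the sort-and-scan reduction accomplishes in three lines.
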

\begin{proof}
This problem can be easily reduced to sorting, so it will take the 
same time as that of 
sorting. We can sort $n$ keys in time  $O(n\log n)$ with 
$O(\frac{n }{B}\log_M n)$ cache misses using one processor 
(\cite{reftranspose}). After sorting, input is divided according to the 
bucket indexes. 
We just need to calculate the size of each bucket. Assuming bucket indexes are 
also given sorted. 
Start from the first bucket index and compare it with every key from the 
input and whenever it becomes larger than
this bucket index we switch to the next bucket index and start comparing it 
with next key from input. It will read both the 
input and the bucket
indexes contiguously from start. For $x\leq n$, the 
total cache misses for this step are bounded by $O(n/B)$. 
\end{proof}
\begin{lem}\label{divide2}
Using $p$ cores, $n$ keys can be divided into $x$ buckets in $O\bigg(\frac{n^{2}}{p} + \log p\bigg)$ time with $O(n^{2}/B)$ cache misses, 
given $n^2\geq Mp$ and $x \leq n$. 
\end{lem}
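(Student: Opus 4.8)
The plan is to mimic the single-processor reduction of Lemma \ref{divide1}: dividing $n$ keys into $x$ buckets reduces to sorting the keys and then locating the bucket boundaries, and the target bounds $O(n^2/p + \log p)$ and $O(n^2/B)$ coincide exactly with the brute-force sorting bounds. First I would sort the $n$ keys using the brute-force procedure of Corollary \ref{sortlog} (equivalently Lemma \ref{sortlog2}), at a cost of $O(n^2/p + \log p)$ time and $O(n^2/B)$ cache misses. The hypothesis $n^2 \geq Mp$ yields $n^2 \geq Bp$ (as $M \geq B$), so the precondition of the brute-force sort is satisfied; the side condition $n \geq B$ follows from the tall-cache assumption used throughout. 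After this step the keys lie in sorted order and the $x$ buckets are exactly the contiguous segments delimited by the splitters $X_1,\ldots,X_x$.

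It then remains to report the buckets, that is, to compute each bucket's size. I would compute, for every splitter $X_j$, its rank among the $n$ sorted keys (the number of keys at most $X_j$); the successive differences of these ranks are the bucket sizes. By Lemma \ref{rank} a single such rank among $n$ keys costs $O(n/p' + \log p')$ time and $O(n/B)$ cache misses on $p'$ cores. Since the total work over all $x$ splitters is $O(nx) = O(n^2)$ by the hypothesis $x \leq n$, distributing this work among the $p$ cores in contiguous blocks keeps the time within $O(n^2/p + \log p)$ and the cache misses within $O(n^2/B)$; the processor allocation is handled by the prefix computation of Lemma \ref{prefix} at no extra asymptotic cost.

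As in the brute-force sort, the only delicate point is block misses rather than the running time or the capacity-miss count, both of which are inherited from the two reductions above. When the cores read the splitters and write the computed bucket sizes into the shared output array, I would arrange the accesses (using the compaction idea of Lemma \ref{compaction}) so that each core touches a contiguous range of at least $B$ entries; the condition $n^2 \geq Mp \geq Bp$ is precisely what guarantees that every core owns at least a full block, thereby eliminating block misses. I expect this contiguity bookkeeping at the boundary-finding and writing stages to be the main obstacle, since the dominant sorting step is taken verbatim from Corollary \ref{sortlog}.
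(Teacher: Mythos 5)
Your proposal is correct and essentially mirrors the paper's proof: both reduce the problem to brute-force sorting via Corollary \ref{sortlog} (using $n^2 \geq Mp \Rightarrow n^2 \geq Bp$ and $n \geq B$ from the tall-cache assumption), and then compute bucket sizes in a post-processing pass whose cost is absorbed by the slack from $x \leq n$. The only difference is cosmetic: the paper finds each bucket boundary by assigning one processor per splitter to do a binary search over the sorted keys ($O(x\log n/p)$ time, $O(xn/B)$ cache misses), whereas you compute each splitter's rank by a linear scan via Lemma \ref{rank}; both comfortably fit in the $O(n^2/p + \log p)$ time and $O(n^2/B)$ cache-miss budget.
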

\begin{proof} We will solve this problem using the same approach as used in Lemma \ref{divide1}.\\
Sort $n$ keys in $O\bigg(\frac{n^2}{p} + \log p\bigg)$ time with total $O(n^2/B+n)$ cache misses,
given $n^2\geq Bp$ (Corollary $\ref{sortlog}$). The number of cache misses are bounded by $O(n^2/B)$ 
if $n\geq B$ (follows from $n\geq\sqrt{M}$ and $M\geq B^2$).
After sorting, input keys has been divided according to buckets.
To calculate the size of each bucket we assign every bucket index to one processor; say processor $p_i$ is assigned $i^{th}$ bucket index. 
$p_i$ does binary search to find the size of its bucket on $n$ keys. \\
To do this, $p$ processors will take $O(\frac{x\log n}{p})$ time and $O(xn/B)$ cache misses.
Given $x\leq n$, the time  and cache misses can be bounded by $O(\frac{n^2}{p})$ and $O(n^2/B)$ respectively.    
\end{proof}
\begin{lem}\label{divide3}
Using $p$ cores, $n$ keys can be divided into $y$ buckets in
$O\bigg(\frac{n^{3/2}}{p} + \log p\bigg)$ time with total $O(\frac{n^{3/2}}{B} + y\sqrt{n})$ cache misses, given $n\geq (Mp)^{2/3}$ and $y\leq \sqrt{n}$. 
\end{lem}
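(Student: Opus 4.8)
The plan is to run a single level of the recursive partitioning scheme described just above: chop the $n$ keys into $\sqrt{n}$ contiguous chunks of size $\sqrt{n}$ each, divide every chunk into the $y$ target buckets using the parallel brute-force routine of Lemma~\ref{divide2}, and then combine the resulting $\sqrt{n}$ lists (each already split into $y$ buckets) into one list split into $y$ buckets via the merge of Corollary~\ref{merge1}. The $y\sqrt{n}$ summand in the claimed cache bound is exactly the $xt$ overhead of that merge with $x=\sqrt{n}$ lists and $t=y$ buckets, which is what first pins down this particular decomposition; the reason brute force (rather than the optimal sequential Lemma~\ref{divide1}) is applied inside each chunk is that the interesting regime is $p>\sqrt{n}$, where one must exploit parallelism \emph{within} a chunk.

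First I would allocate the $p$ cores across the $\sqrt{n}$ chunks, giving $p_i=p/\sqrt{n}$ cores to each (the allocation is itself a prefix computation, cheap by Lemma~\ref{prefix}). Applying Lemma~\ref{divide2} to one chunk of size $\sqrt{n}$ with $p/\sqrt{n}$ cores costs $O\big((\sqrt{n})^2/(p/\sqrt{n})+\log p\big)=O(n^{3/2}/p+\log p)$ time and $O((\sqrt{n})^2/B)=O(n/B)$ cache misses; summed over the $\sqrt{n}$ chunks this phase contributes $O(n^{3/2}/p+\log p)$ time and $O(n^{3/2}/B)$ cache misses. The hypothesis of Lemma~\ref{divide2} on a $\sqrt{n}$-chunk reads $(\sqrt{n})^2\ge M\,(p/\sqrt{n})$, i.e. $n^{3/2}\ge Mp$, which is precisely the assumed $n\ge(Mp)^{2/3}$, and its bucket bound $y\le\sqrt{n}$ is the stated one. (When $p\le\sqrt{n}$ I would instead hand whole chunks to cores and run each chunk single-core, still in $O(n^{3/2}/p)$ time, so this regime is the easy one.)

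Second, Corollary~\ref{merge1} with $x=\sqrt{n}$ lists, each of size $y'=\sqrt{n}$ and partitioned into $t=y$ buckets, runs in $O(\sqrt{n}\cdot\sqrt{n}/p+\log p)=O(n/p+\log p)$ time with $O(n/B+\sqrt{n}\,y)$ cache misses, and its precondition $y'\ge t$ is exactly $\sqrt{n}\ge y$. Adding the two phases gives time $O(n^{3/2}/p+\log p)$ and cache misses $O(n^{3/2}/B+y\sqrt{n})$, since $n/B$ is dominated by $n^{3/2}/B$; this is the claimed bound.

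The delicate point I would flag is reconciling the merge's \emph{size} requirement with the weaker hypothesis: Corollary~\ref{merge1} wants $xy'=n\ge Mp$, whereas the lemma only assumes $n\ge(Mp)^{2/3}$. The fix is that the merge is a lower-order phase, so it may be run on just $p''=\min\{p,\,n/M\}$ cores; then $n\ge Mp''$ holds, and its running time $O(n/p''+\log p)=O(M+\log p)$ is absorbed into $O(n^{3/2}/p+\log p)$ because $n\ge(Mp)^{2/3}$ forces $M\le n^{3/2}/p$ (its cache bound $O(n/B+y\sqrt{n})$ is independent of the core count). After this, the only remaining work is routine bookkeeping of the processor allocation and checking that both phases' preconditions collapse to the single stated hypothesis $n\ge(Mp)^{2/3}$ together with $y\le\sqrt{n}$.
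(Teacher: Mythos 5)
Your proposal is correct and follows essentially the same route as the paper: split the input into $\sqrt{n}$ chunks of size $\sqrt{n}$, divide each chunk into $y$ buckets via Lemma~\ref{divide2} with $p/\sqrt{n}$ cores, then combine the $\sqrt{n}$ partitioned lists with Corollary~\ref{merge1}. The only difference is in the fix you flag as delicate: where you throttle the merge to $\min\{p,\,n/M\}$ cores so that its size precondition holds, the paper simply runs the merge on $p/\sqrt{n}$ cores, which turns that precondition into exactly the hypothesis $n\geq (Mp)^{2/3}$; both choices keep the merge time within $O(n^{3/2}/p+\log p)$, so the two resolutions are equivalent.
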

\begin{proof} Let $T(a,b)$ represent the total time to divide $a$ keys into $b$ buckets.
The whole algorithm can be described by the following recurrence relation.
$$
  T(n,y) = \sqrt{n}  T(\sqrt{n},y) + T_{merge}(\sqrt{n},\sqrt{n},y)
$$
where $T_{merge}(\sqrt{n},\sqrt{n},y)$ represents time needed to merge $\sqrt{n}$ lists, each of size $\sqrt{n}$ and divided into $y$ buckets 
(explained in  section \ref{mergesection}) .
\begin{enumerate}
\item Divide $n$ keys into $\sqrt{n}$ contiguous chunks, each of size $\sqrt{n}. $ All of these chunks are divided into
$\sqrt{y}$ buckets in parallel. 
Every $T(\sqrt{n},y)$ call is assigned to $p/\sqrt{n}$ processors and is solved using Lemma \ref{divide2}. It will take
$O\bigg(\frac{n^{3/2}}{p} + \log p\bigg)$ time and $O(n/B)$ cache misses if $n\geq M(p/\sqrt{n})$ or $n\geq (Mp)^{2/3}$ and $y\leq \sqrt{n}$.
Total cache misses for $\sqrt{n}$ calls will be $O(n^{3/2}/B)$ and the time is $O\bigg(\frac{n^{3/2}}{p} + \log p\bigg).$
\item From Step 1, we get $\sqrt{n}$ lists, each divided into $y$ buckets. Merge these $\sqrt{n}$ lists to get 
a single list divided into
$y$ buckets. Only $p/\sqrt{n}$ processors are used in this step to avoid block misses.\\
Using $p/\sqrt{n}$ processors, merging will take $O\bigg(\frac{n^{3/2}}{p} + \log p\bigg)$ 
time with $O(n/B + y\sqrt{n})$ cache misses, given $n\geq (Mp/\sqrt{n})$ or $n\geq (Mp)^{2/3}$ and $y\leq \sqrt{n}$ (Corollary $\ref{merge1}$). 
\end{enumerate}
\end{proof}
\begin{cor}\label{divide4}
Using  $(p/n^{1/4})$ cores, $n^{1/2}$ keys can be divided into $y$ buckets in
$O\bigg(\frac{n}{p} + \log p\bigg)$ time with total $O(n^{3/4}/B +y\sqrt[4]{n})$ cache misses, given $n\geq Mp$ and $y\leq \sqrt[4]{n}$. 
\end{cor}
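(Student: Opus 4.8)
The plan is to obtain this corollary as a direct specialization of Lemma \ref{divide3}, since both statements describe the same divide-into-buckets subroutine and only the parameter values differ. First I would invoke Lemma \ref{divide3} with the number of input keys set to $\sqrt{n}$, the number of buckets set to $y$, and the number of cores set to $q = p/n^{1/4}$. Writing $a = \sqrt{n}$ for the input size, the lemma then gives a running time of $O(a^{3/2}/q + \log q)$ and a cache-miss bound of $O(a^{3/2}/B + y\sqrt{a})$, subject to the two hypotheses $a \ge (Mq)^{2/3}$ and $y \le \sqrt{a}$.

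Next I would simplify each of these expressions using $a = \sqrt{n}$. For the running time, $a^{3/2} = (\sqrt{n})^{3/2} = n^{3/4}$, so that $a^{3/2}/q = n^{3/4}\cdot n^{1/4}/p = n/p$, while $\log q = \log(p/n^{1/4}) \le \log p$; hence the time collapses to $O(n/p + \log p)$, exactly as claimed. For the cache misses, $\sqrt{a} = n^{1/4} = \sqrt[4]{n}$, so the bound becomes $O(n^{3/4}/B + y\sqrt[4]{n})$, again matching the statement verbatim.

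The only step requiring genuine care is verifying that the hypotheses of Lemma \ref{divide3} reduce to the hypotheses stated in the corollary. The bucket condition is immediate, since $y \le \sqrt{a} = \sqrt[4]{n}$. For the core condition I would substitute $q = p/n^{1/4}$ into $a \ge (Mq)^{2/3}$ to obtain $\sqrt{n} \ge (Mp/n^{1/4})^{2/3}$; raising both sides to the power $3/2$ gives $n^{3/4} \ge Mp/n^{1/4}$, and multiplying through by $n^{1/4}$ yields $n \ge Mp$. This is precisely the assumption of the corollary, so no stronger hypothesis is needed and the two given conditions $n \ge Mp$ and $y \le \sqrt[4]{n}$ are exactly what Lemma \ref{divide3} requires under this substitution.

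I anticipate no real obstacle, as the entire algorithmic content lives in Lemma \ref{divide3} and the corollary is essentially a bookkeeping exercise in substituting $a = \sqrt{n}$ and $q = p/n^{1/4}$ and confirming that the exponents combine correctly. The one place to stay disciplined is the algebra for the core condition, where it is easy to misbalance the powers of $n$ between the $\sqrt{n}$ on the left and the $n^{1/4}$ buried inside $q$ on the right; tracking these powers cleanly is what certifies that the natural assumption $n \ge Mp$ is sufficient.
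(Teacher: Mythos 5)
Your proposal is correct and follows exactly the paper's own route: the paper proves this corollary as a direct instance of Lemma \ref{divide3} with input size $\sqrt{n}$ and $p/n^{1/4}$ cores, reducing the condition $\sqrt{n}\geq (Mp/n^{1/4})^{2/3}$ to $n\geq Mp$ just as you do. Your version merely spells out the exponent arithmetic more explicitly than the paper, which is fine.
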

\begin{proof}
This is an instance of Lemma \ref{divide3} with $\sqrt{n}$ input and $p/n^{1/4}$ cores
which lead to total $O(n^{3/4}/B + y\sqrt[4]{n})$ cache misses. To satisfy conditions given in Lemma \ref{divide3}, we need $\sqrt{n}\geq (Mp/n^{1/4})^{2/3} 
$ or $ n\geq Mp$ and $y\leq \sqrt[4]{n}.$
\end{proof}
\begin{thm}\label{dividemain}
We are given an array $A$ containing $n$ input keys and an array $S$ containing $z$ bucket indexes (stored in contiguous
locations). We can partition $A$ into $z$ buckets $B_1$ , $B_2$ , . . . , $B_z$ such 
that for
$i= 1,2$,. . . $z$, we have max \{x $\arrowvert$ x $\in B_i $\} $\leq S_i \leq$ min \{x $\arrowvert$ x $\in B_{i+1} $\}, using $p$ processors
in time $O(\frac{n}{p} \log n + \log n \log\log n)$ and a total 
of $O(\frac{n }{B}\log_M n)$ cache misses, given $z\leq \sqrt{n}$ and
$n\geq$max$\{Mp,B^{2/(1 - \log_n z)}p\}$.
\end{thm}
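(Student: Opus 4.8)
The plan is to implement the recursive two-phase partitioning scheme sketched in the ``Basic structure'' paragraph: to split $A$ into $z$ buckets I would first split into $\sqrt{z}$ buckets and then refine each of those into $\sqrt{z}$ sub-buckets, so that the bucket exponent is halved at every level and the recursion has depth $O(\log\log z)=O(\log\log n)$. Each ``split into $\sqrt{z}$ buckets'' is itself carried out by cutting the $n$ keys into $\sqrt{n}$ contiguous chunks of size $\sqrt{n}$, recursively partitioning each chunk into $\sqrt{z}$ buckets in parallel, and then invoking the merging procedure of Theorem \ref{merge2} (through Corollary \ref{merge1}) to fuse the $\sqrt{n}$ per-chunk outputs into a single list partitioned into $\sqrt{z}$ buckets. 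This yields the parallel-time recurrence $T^{\prime\prime}(n,z)=O(n/p+\log p)+2\,T^{\prime\prime}(\sqrt{n},\sqrt{z})$ already recorded in the section.

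Processors would be allocated proportionally so that the ratio $n/p$ is invariant along the recursion: a chunk of size $\sqrt{n}$ receives $p/\sqrt{n}$ cores (so that $\sqrt{n}/(p/\sqrt{n})=n/p$), and a bucket of size $n_i$ in the refinement phase receives $p\,n_i/n$ cores. This invariance is exactly what lets every recursive subproblem continue to satisfy the hypotheses of Corollary \ref{divide4} and Corollary \ref{merge1}; the proportional split is computed by the prefix-sum routine of Lemma \ref{prefix} at $O(n/p+\log p)$ cost per level. The recursion bottoms out in two ways: once a chunk has shrunk to the ratio size $K=N/P$ it is handed to a single core and finished by the sequential cache-oblivious partitioner of Lemma \ref{divide1}, while the final (possibly short) chunk of each list is partitioned directly by binary search together with brute-force sorting (Lemma \ref{divide2}).

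For the running time I would charge the total comparison work and the critical-path overhead separately. Routing a key down the bucket-refinement tree costs $O(\log m_k)$ comparisons at level $k$, and since the bucket exponent halves, these sum to $O(\log z)=O(\log n)$ per key, so the total work telescopes to $O(n\log n)$ and contributes $O(\tfrac{n}{p}\log n)$; the $O(\log p)=O(\log n)$ synchronisation and prefix-sum cost incurred once per level over the $O(\log\log n)$ levels contributes the additive $O(\log n\log\log n)$, giving the claimed time. For the cache misses I would show that the per-level merge cost is $O(n/B)$ and that the base cases, whose sizes sum to $n$, contribute $\sum_j O(\tfrac{n_j}{B}\log_M n_j)=O(\tfrac{n}{B}\log_M n)$ by Lemma \ref{divide1}, so the total is $O(\tfrac{n}{B}\log_M n)$.

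The hard part will be the cache analysis, and it is precisely where the side condition $n\geq B^{2/(1-\log_n z)}p$ enters. Corollary \ref{merge1} merges $\sqrt{n}$ lists each of size $\sqrt{n}$ into $\sqrt{z}$ buckets at cost $O(\tfrac{n}{B}+\sqrt{n}\,\sqrt{z})$ cache misses; writing $z=n^{\gamma}$ with $\gamma=\log_n z\leq\tfrac12$, the additive term is $n^{(1+\gamma)/2}$, and forcing $n^{(1+\gamma)/2}\leq n/B$ is exactly $n\geq B^{2/(1-\gamma)}$, which with the processor factor becomes the stated condition and guarantees the $\sqrt{nz}$ term is absorbed into $O(n/B)$ at every level. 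I would also verify that block misses stay zero throughout: since $n/p\geq M\geq B$ and, by the proportional allocation together with the contiguous-write discipline of the merge step, each core writes a contiguous range of at least $B$ keys, no block is concurrently written, so the counts are genuine cache misses and the hypotheses $z\leq\sqrt{n}$ and $n\geq Mp$ survive each square-root step of the recursion.
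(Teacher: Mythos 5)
Your algorithm is the paper's algorithm: the same two-phase scheme (coarse partition into $\sqrt{z}$ buckets, then refinement of each bucket into $\sqrt{z}$ sub-buckets), the same chunking into $\sqrt{n}$ pieces of size $\sqrt{n}$ merged by Corollary \ref{merge1}, the same proportional processor allocation keeping $n/p$ invariant via prefix sums, the same base cases (Lemma \ref{divide1} once a subproblem reaches size $N/P$, Lemma \ref{divide2}/Corollary \ref{divide4} for the short leftover chunks), and the same derivation of the side condition $n\geq B^{2/(1-\log_n z)}p$ from forcing the $\sqrt{nz}$ merge term below $n/B$. The recurrence $T^{\prime\prime}(n,z)=O(n/p+\log p)+2T^{\prime\prime}(\sqrt{n},\sqrt{z})$ you write down is exactly the one the paper solves, and your time bound comes out the same.

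However, your cache-miss accounting has a genuine gap: it ignores the factor-$2$ volume blow-up that your own time recurrence records. Since every key is processed once in the phase-1 partition and once more in the phase-2 refinement, a key belongs to $2^j$ subproblems at recursion depth $j$, so the total data volume at level $j$ is $2^j n$. Hence the per-level merge cost is $\Theta(2^j n/B)$, not $O(n/B)$, and the base cases have total size $2^x n=\Theta(n\log_K n)$ where $K=N/P$, not $n$; the step ``show that the per-level merge cost is $O(n/B)$'' would simply fail at any level $j\geq 1$. Moreover, taken literally, your version would only yield a bound of the form $O(\frac{n}{B}\cdot\mathrm{depth})$ for the merges, and nothing bounds $\frac{n}{B}\log\log n$ by $\frac{n}{B}\log_M n$ when $M=n^{\Omega(1)}$ (where $\log_M n=O(1)$). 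The correct accounting, which is what the paper does with its recurrence $Q(n,z)=an/B+2\sqrt{n}\,Q(\sqrt{n},\sqrt{z})$, sums the geometric series: the merge levels contribute $O(2^x\frac{n}{B})=O(\frac{n}{B}\log_K n)$, and the $2^x n/K$ base cases contribute $O\bigl(2^x\cdot\frac{n}{K}\cdot\frac{K}{B}\log_M K\bigr)=O(\frac{n}{B}\log_M n)$; both collapse to $O(\frac{n}{B}\log_M n)$ precisely because $K=n/p\geq M$, which is where the hypothesis $n\geq Mp$ does its real work in the cache bound. Relatedly, the recursion depth is $O(\log\log_K n)$ (it terminates when the subproblem size hits $N/P$), not $O(\log\log z)$ as such; once you track the doubling and this stopping rule, your argument becomes the paper's proof.
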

\begin{proof} 
Let $T(a,b)$ represent the total time, $T^{\prime \prime}(a,b)$ represent the
 parallel time and $Q(a,b)$ represent the total cache misses to 
divide $a$ keys into $b$ buckets. \\
All the processors work together to divide the problem into smaller size but 
when no. of sub-problems becomes more than the no. of processors then every core
can have one problem and solve it independently using Lemma \ref{divide1}. During recursion, all the sub-problems at the same level have the 
same size. Every sub-problem is
assigned processors according to its size so that $(n/p)$ ratio is same for every sub-problem at any level. When problem size becomes lower than this ratio
then it will mean that we have enough number of sub-problems to solve them on one processor individually.\\
Let initial input size be represented by $N$ and the 
total number of processors by P. It is assumed that $N\geq MP$.
The whole task can be divided into these steps. 
\begin{enumerate}
\item If problem size $\leq(N/P)$ then solve this problem on one processor otherwise break this problem into smaller sub-problems using the following steps.\\ 
\textit{Using Lemma \ref{divide1}, $n$ keys can be divided into $z$ buckets using  one processor in time 
$O(n\log n)$ with $O(\frac{n }{B}\log_M n)$ cache misses, given $z\leq n.$} 
\item Choose $\sqrt{z}^{th}$, $2\sqrt{z}^{th}$, $3\sqrt{z}^{th},\ldots,\sqrt{z} \sqrt{z}^{th}$ element from the given $z$
splitters. We obtain a total of $\sqrt{z}$ elements. First we divide input keys into these buckets and each bucket is divided further into $\sqrt{z}$ buckets.
For this step, it will take 
maximum $(\lceil z^{1/2}/p\rceil)$ parallel steps and one cache miss for every chosen key. Total cache misses will be $z^{1/2}$ and can be bounded by
$n/B$ as $z\leq \sqrt{n}$ and $n\geq B^2$.
\item Write these $\lceil z^{1/2}\rceil $ elements at contiguous locations using $\lceil pz^{1/2}/n\rceil $ cores in $\lceil n/p\rceil $ time,
 $\lceil n/B\rceil $ cache misses and zero block
misses, if $z^{1/2}\geq Bpz^{1/2}/n$ or $n\geq Bp$. 
\item Divide the $n$ keys into these $\sqrt{z}$ buckets as given as follows :
$$
  T^{\prime \prime}(n,\sqrt{z}) = T(\sqrt{n},\sqrt{z}) + T^{\prime \prime}_{merge}(\sqrt{n},\sqrt{n},\sqrt{z})
$$
$$
  Q(n,\sqrt{z}) = \sqrt{n}  Q(\sqrt{n},\sqrt{z}) + Q_{merge}(\sqrt{n},\sqrt{n},\sqrt{z})
$$
\begin{enumerate}
\item Divide whole input array into $\sqrt{n}$ contiguous sub-arrays of size $\sqrt{n}$. Recursively divide each sub-array of size $\sqrt{n}$ into
$\sqrt{z}$ buckets. 
\item  We get $\sqrt{n}$ lists, all divided into  $\sqrt{z}$ buckets. Using Corollary \ref{merge1}, merge these lists into a single list  
divided into  $\sqrt{z}$ buckets. Given, $n\geq$ Mp and $z\leq n.$
\end{enumerate}  
\begin{equation}\label{eq1}
\mbox{we get } T^{\prime \prime}(n,\sqrt{z}) =  T^{\prime \prime}(\sqrt{n},\sqrt{z}) + O\bigg(\frac{n}{p} +  \log p\bigg)
\end{equation}
\begin{equation*}
\mbox{Similarly for cache misses } Q(n,\sqrt{z}) = \sqrt{n}  Q(\sqrt{n},\sqrt{z}) + O(n/B +\sqrt{nz})
\end{equation*}
\item The entire input has been divided into $\sqrt{z}$ buckets and 
we need to divide each bucket again into $\sqrt{z}$ buckets. 
\begin{equation}\label{eq3}
  T(n,z) = T(n,\sqrt{z}) +  \displaystyle\sum_{i=1}^{{z^{1/2}}} T(n_i,\sqrt{z})  
\end{equation}
Break this $T(n_i,\sqrt{z})$ in chunks of $T(\sqrt{n},\sqrt{z})$ and then merge these chunks using Corollary \ref{merge1}. 
$n_i$ can be broken as  $n_i=x_i$ $\sqrt{n} + y_i$ where $y_i < \sqrt{n}$.
$\displaystyle\sum_{i=1}^{{z^{1/2}}} x_i \leq \sqrt{n} $    and    $ y_i < n^{1/2}$
$$
  T(n_i,\sqrt{z}) = x_i T(\sqrt{n},\sqrt{z}) +   T_{merge}(x_i,\sqrt{n},\sqrt{z}) +T(\sqrt{n},\sqrt{z})
$$
Summing this for all $z^{1/2}$ problems, we get
\begin{equation}\label{eq4}
\displaystyle\sum_{i=1}^{{z^{1/2}}} T(n_i,\sqrt{z}) = \sqrt{n} T(\sqrt{n},\sqrt{z}) +  \displaystyle\sum_{i=1}^{{z^{1/2}}} T_{merge}(x_i,\sqrt{n},\sqrt{z})
+ z^{1/2}T(n^{1/2},\sqrt{z})
\end{equation}
We have total $p$ processors and $z^{1/2}$ calls to $T(n^{1/2},\sqrt{z})$. Assign $p/n^{1/4}$ processors to each call and solve it using Corollary
\ref{divide4}.
As $z\leq \sqrt{n}$ so we have enough processors for $z^{1/2}$ calls. Each such call will take $O\bigg(\frac{n}{p} + \log p\bigg)$ time and $O(n^{3/4}/B+ \sqrt{z}\sqrt[4]{n})$ cache misses 
if $\sqrt{z}\leq \sqrt[4]{n}$ or $z\leq \sqrt{n}$ and $n\geq Mp$. For all
$n^{1/4}$ calls, total cache misses will be $O(n/B+\sqrt{nz})$.\\
Combining equation (\ref{eq1}),(\ref{eq3}) and (\ref{eq4}), we get
\begin{equation}\label{aeq1}
T^{\prime \prime}(n,z) = 2T^{\prime \prime}(\sqrt{n},\sqrt{z}) + O\bigg(\frac{n}{p} +  \log p\bigg) + T^{\prime \prime}_{merge}(x_i,\sqrt{n},\sqrt{z})
\end{equation}
Similarly for cache misses.
\begin{equation}\label{aeq2}
Q(n,z) = 2\sqrt{n}Q(\sqrt{n},\sqrt{z}) + O(n/B +\sqrt{nz}) +  \displaystyle\sum_{i=1}^{{z^{1/2}}} Q_{merge}(x_i,\sqrt{n},\sqrt{z})
\end{equation}
Assign $p_i$ processors to every bucket $n_i$ such that $p_i=pn_i/n$ or to be precise $p_i = \frac{px_i\sqrt{n}}{n} \Longrightarrow 
\frac{x_i\sqrt{n}}{p_i} = \frac{n}{p}.$
Using Corollary \ref{merge1}  
\begin{equation}\label{eq5}
\mbox{If $x_i\sqrt{n} \geq Mp_i$ then   } T^{\prime \prime}_{merge}(x_i,\sqrt{n},\sqrt{z}) =O\bigg(\frac{x_i \sqrt{n}}{p_i} +  \log p_i\bigg)
\end{equation}
and $x_i\sqrt{n} \geq Mp_i$ if $n\geq Mp$ and $z\leq n.$
As $p_i\leq p$, so equation (\ref{eq5}) can be reduced to
\begin{center}
$T^{\prime \prime}_{merge}(x_i,\sqrt{n},\sqrt{z}) $=\ $O\bigg(\frac{n}{p} +  \log p\bigg)$
\end{center}
For cache misses, $Q_{merge}(x_i,\sqrt{n},\sqrt{z}) $=$O(\frac{x_i \sqrt{n}}{B}+x_i\sqrt{z})$ (Corollary \ref{merge1}) \\\\
which gives $\displaystyle\sum_{i=1}^{{z^{1/2}}} Merge_Q(x_i,\sqrt{n},p_i) $ $\Longrightarrow$
$\displaystyle\sum_{i=1}^{{z^{1/2}}} O(\frac{x_i \sqrt{n}}{B}+x_i\sqrt{z})$\\$ \Longrightarrow O(\frac{\sqrt{n}}{B}+\sqrt{z}) 
\displaystyle\sum_{i=1}^{{z^{1/2}}} x_i $
$\Longrightarrow O(n/B+\sqrt{nz}).$\\\\
Finally equation (\ref{aeq1}) can be rewritten as 
$$
T^{\prime \prime}(n,z) = 2T^{\prime \prime}(\sqrt{n},\sqrt{z}) +   O\bigg(\frac{n}{p} +  \log p\bigg)
$$
Likewise for cache misses, using $n\geq Mp$, equation (\ref{aeq2}) can be reduced to
$$
Q(n,\sqrt{n}) = 2 \sqrt{n} Q(\sqrt{n},\sqrt{z}) + O(n/B+\sqrt{nz}) 
$$
The number of cache misses $O(n/B+\sqrt{nz})$ can be bounded 
by $O(n/B)$ if $\sqrt{nz}\leq n/B$ or $n\geq zB^2$.\\
Say $z=n^{1/t}\Rightarrow t=\log_z n.$
Condition $n\geq zB^2$ can be further reduced to $n^{1-1/t}\geq B^2 \Rightarrow n\geq B^{\frac{2}{1-1/t}} \Rightarrow n\geq B^{2/(1-\log_n z)} $\\
Finally, to complete all the steps of algorithm in claimed bounds, we need $n\geq$ max\{$Mp,B^q$\}, 
where $q=2/(1-\log_n z).$
As mentioned earlier, ratio $(n/p)$ remains same throughout the recursion 
procedure and the same is true for $q$ also. Since
after every iteration both $n$ and $z$ are reduced to $\sqrt{n}$, 
$\sqrt{z}$ and $\log_n z$ remain same, which means $q$ 
will also remain same through the recursion.\\
If it is given that $N\geq max\{MP,B^qP\}$, condition given above ($n\geq$ max\{$Mp,B^q$\}) is true till problem size $\geq (N/P)$
and when problem size becomes $N/P$, it means we have enough problems such that every processor can be assigned one problem.\\\\
\textbf{Processor Allocation :} During every iteration of this algorithm, one problem of size $n$ is divided into $\sqrt{n}$ sub-problems. Processors are assigned to 
every sub-problem such that $(n/p)$ ratio
is same for every sub-problem. The $i^{th}$ sub-problem with size $n_i$ gets $p_i$ processors 
such that $p_i=pn_i/n$, where $\displaystyle\sum_{i=1}^{n^{1/2}} n_i$ = $n$.
After doing prefix sum computation on sizes of all $n^{1/2}$ sub-problems, every processor can be assigned to its respective sub-problem. It will take $O\bigg(\frac{n}{p}+\log p\bigg)$ time 
using $p/\sqrt{n}$ cores with total cache misses $O(\sqrt{n}/B)$, given $n\geq Mp$.\\\\
The whole algorithm can be summarized as : \\
if $N\geq max\{MP,B^qP\}$, where $q=2/(1-\log_n z).$
\begin{equation}
T^{\prime \prime}(n,z)=
\begin{cases}
O\bigg(\frac{n}{p} +  \log p\bigg) + 2 T^{\prime \prime}(\sqrt{n},\sqrt{z}) & \text{if $n \geq N/P$}\\
O(\frac{N}{P} \log \frac{N}{P}) & \text{otherwise}
\end{cases}
\end{equation}
The total number of cache misses for all $p$ processors
\begin{equation}
Q(n,z)=
\begin{cases}
O(n/B) + 2 \sqrt{n} Q(\sqrt{n},\sqrt{z}) & \text{if $n \geq N/P$}\\
O(\frac{N}{BP} \log_M \frac{N}{P}) & \text{otherwise}
\end{cases}
\end{equation}
Assume $\frac{N}{P}=K$, where $K$ will be constant throughout the algorithm as explained before
\begin{equation*}
T^{\prime \prime}(n,z)=
\begin{cases}
a(K + \log n) + 2T^{\prime \prime}(\sqrt{n},\sqrt{z}) & \text{if $n > K$}\\
bK \log K & \mbox{otherwise}
\end{cases}
\end{equation*}
We get,
$T^{\prime \prime}(n,z)=a(K + \log n) + a(2K + \log n)$+ $\ldots $+ $a(2^xK + \log n) + b(2^{x+1}K\log K)$\\
$\Longrightarrow T^{\prime \prime}(n,z) =  2^{x+1}aK + ax\log n + 2^{x+1}bK \log K$\\\\
Using $(n)^{1/2^x} \geq K \Longrightarrow 2^x \leq \bigg(\frac{\log n}{\log K}\bigg)$\\
We get, 
$T^{\prime \prime}(n,z) = O(K\frac{\log n}{\log K} + \log \bigg(\frac{\log n}{\log K}\bigg)\log n + K\log n),$
which comes to\\\\ $T^{\prime \prime}(n,z)= O(K\log n + \log n \log \log n)$ Or $O(\frac{n}{p}\log n + \log n\log \log n)$. \\\\
The total cache misses for all the $p$  processors
\begin{equation}\label{eq8}
Q(n,z)=
\begin{cases}
an/B + 2 \sqrt{n} Q(\sqrt{n},\sqrt{z}) & \text{if $n > K$}\\
b\frac{K}{B} \log_M K. & \text{otherwise}
\end{cases}
\end{equation}
$Q(n,z) = a(\frac{n}{B}) + 2a(\frac{n}{B}) + 4a(\frac{n}{B}) + \ldots + 2^xa(\frac{n}{B}) + 2^{x+1}\frac{n}{K}a(\frac{k}{B} \log_M K)$\\\\
$\Longrightarrow Q(n,z) = 2^{x+1}a(\frac{n}{B}) + 2^{x+1}a(\frac{n}{B} \log_M K)$\\\\
Using $(n)^{1/2^x} \geq K \Longrightarrow 2^x \leq \bigg(\frac{\log n}{\log K}\bigg)$\\
We get, 
$Q(n,z) = O(\frac{n}{B} \log_K n + \frac{n}{B} \log_M n)$ = $O(\frac{n}{B}\log_M n) $ if $K \geq M$ or $n \geq Mp$.
\end{enumerate}
\end{proof}
\subsection{Multi-search}
\begin{lem}\label{multisearch}
We can simultaneously search for $n$ elements in a sorted set of 
$m$ elements using $p$ processors
in $O(\frac{n}{p} \log n + \log n \log\log n)$ time with total $O(\frac{n }{B}\log_M n)$ cache misses, given $m\leq \sqrt{n}$ and 
$n\geq$max$\{Mp,B^{2/(1 - \log_n m)}p\}$. 
\end{lem}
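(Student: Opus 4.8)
The plan is to observe that multi-search is literally the bucket-partitioning problem already solved by Theorem \ref{dividemain}, and hence to obtain the lemma as an immediate corollary rather than re-developing any machinery. The sorted set of $m$ elements plays the role of the array $S$ of splitters (bucket indexes), and the $n$ query elements play the role of the input array $A$. For a single query $q$, "searching" means locating the unique gap $S_i \leq q \leq S_{i+1}$ of the sorted set into which $q$ falls; this is exactly the statement that $q$ belongs to the bucket $B_i$ produced by Theorem \ref{dividemain}. Thus partitioning the $n$ queries into $m$ buckets answers all $n$ searches simultaneously.

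Concretely, I would invoke Theorem \ref{dividemain} with $z = m$, taking $A$ to be the set of $n$ query keys and $S$ to be the (contiguously stored) sorted set. The hypotheses of that theorem, $z \leq \sqrt{n}$ and $n \geq \max\{Mp, B^{2/(1-\log_n z)}p\}$, become verbatim the hypotheses $m \leq \sqrt{n}$ and $n \geq \max\{Mp, B^{2/(1-\log_n m)}p\}$ stated in the lemma under the substitution $z = m$. The theorem then delivers the partition in $O(\frac{n}{p}\log n + \log n \log\log n)$ parallel time with a total of $O(\frac{n}{B}\log_M n)$ cache misses, which are precisely the claimed bounds. No new conditions are introduced and none of the parameters change their meaning.

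The only point needing a word of care is the correspondence between the two output formats. Theorem \ref{dividemain} physically groups the queries by bucket; since membership of $q$ in $B_i$ certifies $S_i \leq q \leq S_{i+1}$, the bucket index is exactly the predecessor rank of $q$ in the sorted set, so the partition already encodes every search answer and no postprocessing is required. I expect no genuine obstacle here: the reduction is a pure relabeling ($z \mapsto m$, "splitters" $\mapsto$ "sorted set", "input keys" $\mapsto$ "queries"), and all of the substantive work — the two-phase $\sqrt{n}$-way recursive partitioning together with the efficient merging of Theorem \ref{merge2} (via Corollary \ref{merge1}) that simultaneously controls parallel time, cache misses, and block misses — was already carried out in the proof of Theorem \ref{dividemain}.
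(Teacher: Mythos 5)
Your proposal is correct and matches the paper's own proof, which likewise obtains the lemma as an immediate consequence of Theorem \ref{dividemain} by treating the $m$ sorted elements as the splitter array $S$ and the $n$ queries as the input array $A$. The only content in the paper's proof that you omit is an illustrative remark that the condition $n \geq B^{2/(1-\log_n m)}p$ simplifies to $n \geq B^4 p$ when $m = \sqrt{n}$ and weakens as $m$ decreases (e.g.\ to $n \geq B^{2+2/31}p$ for $m = n^{1/32}$), which is commentary rather than a logical step.
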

\begin{proof}
 This follows directly from Theorem \ref{dividemain}. The basic approach
goes back to Reif and Sen \cite{RS:94}.\\
 The condition $n\geq B^{2/(1 - \log_n m)}p$ can be simplified to 
$n\geq B^4p$ when $m = \sqrt{n}.$ 
As the value of $m$ decreases, this condition becomes progressively 
weaker. For $m=n^{1/32}$, condition is reduced to $n\geq B^{(2+2/31)}p$.
\end{proof}

\section{Convex Hull construction}
An object is \textit{convex} if for every pair of points within the object, 
the straight line segment that joins them lies completely within the object. \\
\textit{Given a set of points $P$, 
the {\em convex hull} $CH(P)$ is the smallest convex object which 
contains $P$.\\}
Because of close relationship between convex hull and sorting,
we will try to
prove the same time bound and cache cost for convex hull problem as our sorting algorithm. Our 
algorithm is based on the algorithm given in Reif and Sen \cite{3dhull} 
and our description
follows the randomized divide-and-conquer strategy described in \cite{ger}. 
It has been known that the convex hull problem can be reduced to the 
problem of finding the intersection of half-planes (under a dual transform).
So we will solve the problem of finding the intersection of half-planes as 
it is relatively easier to divide it into disjoint sub-problems.
\section{Basic Algorithms}
Before we describe the main algorithm, we briefly describe some of the
basic subroutines that are used frequently.
\subsection{2-D Maxima Problem} 
$A$ point $p = (p_x,p_y)$ dominates a point $q = (q_x,q_y)$ if and only if $p_x > q_x$ and $p_y > q_y$. The maxima problem
can be defined as : \textit{Given a set of points, to report the maximal points within the set} 
i.e. to report all the points which are not dominated by any other point in this set.\\
\begin{figure}[ht]
\begin{center}
\includegraphics[scale=0.60]{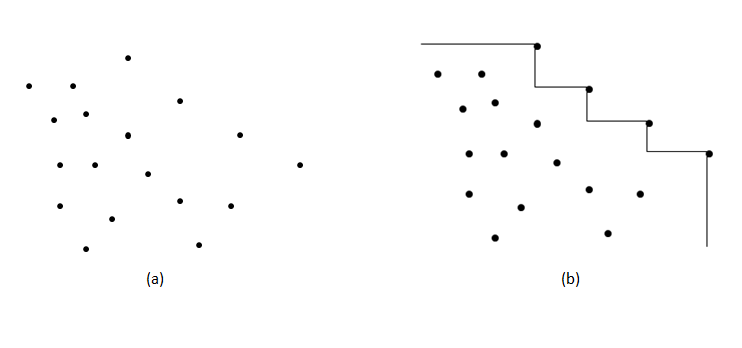}
\end{center} 
\caption{(a) Input points \hspace{35pt}  (b) Maximal points form the stairs}
\label{imgmax}
\end{figure}
This problem can be divided into following steps :
\begin{enumerate}
\item Sort the given points along x-axis to get a set $T$ = $\{t_1,t_{2},\ldots,t_n\}.$
\item It is obvious that $t_{n}$ is a maximal point. Start sweeping from $t_{n}$ towards left (in decreasing value of $x$).
The first point we encounter is $p_{n-1}$. We already know that 
$t_{n-1}(x)< t_n(x)$, so if $t_{n-1}(y)< t_n(y)$ then $t_{n-1}$ is dominated by $t_n$.
 Similarly, for the next point in left direction $t_{n-2}$, we just need to compare
$t_{n-2}(y)$ with max$\{t_n(y),t_{n-1}(y)\}$ to verify whether point $t_{n-2}$ is maximal point or not. \\  
In short, to check any point $t_i$ whether it is maximal or not, we just need to compare
$t_{i}(y)$ with max$\{t_{i+1}(y),t_{i+2}(y),\ldots,t_{n}(y)\}$. While sweeping to left, maintaining the value of the largest $y$-coordinate till 
that point is enough for comparison.
\end{enumerate}
\begin{lem}\label{cmaximal1}
From a given set S of $n$ points, maximal points can be found in $O(n\log n)$ time with total $O(\frac{n}{B}\log_M n)$ cache misses, using one
processor, given $n\geq M$.
\end{lem}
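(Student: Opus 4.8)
The plan is to reduce the 2-D maxima problem to sorting followed by a single contiguous scan, exactly along the lines of the two enumerated steps that precede the statement. The key observation is that once the points are sorted by $x$-coordinate, testing maximality requires only the running maximum of the $y$-coordinates seen so far, so the nontrivial work — and hence the dominant cost in both time and cache misses — lies entirely in the sorting step.

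First I would sort the $n$ points by their $x$-coordinate using the sequential cache-oblivious sorting algorithm (funnel sort) of Frigo et al.~\cite{reftranspose}, which sorts $n$ keys in $O(n\log n)$ time with $O(\frac{n}{B}\log_M n)$ cache misses whenever $n\geq M$; this is precisely where the hypothesis $n\geq M$ is used, since it guarantees the recurrence for the sorting bound bottoms out correctly. Let $T=\{t_1,\ldots,t_n\}$ denote the resulting array sorted in increasing $x$. These bounds already match the claimed $O(n\log n)$ time and $O(\frac{n}{B}\log_M n)$ cache misses, so it remains only to show that the subsequent sweep does not increase the asymptotic cost.

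Second I would perform the right-to-left sweep: scan $T$ from $t_n$ down to $t_1$, maintaining a single scalar $y_{\max}$ equal to $\max\{t_{i+1}(y),\ldots,t_n(y)\}$, initialized from $t_n$. A point $t_i$ is reported as maximal precisely when $t_i(y)>y_{\max}$, after which $y_{\max}$ is updated. Because the array is traversed in one contiguous pass, this step takes $O(n)$ time and incurs only $O(\lceil n/B\rceil)$ cache misses, both of which are dominated by the sorting bounds above. Summing the two steps gives $O(n\log n)$ time and $O(\frac{n}{B}\log_M n)$ cache misses, as required.

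The only point that needs care — and the step I would treat as the main (if minor) obstacle — is verifying that the maximality test and the reporting of output are genuinely done in a single streaming pass, so that their linear cache cost is absorbed into the sorting term rather than blowing up to $\Omega(n)$ misses from scattered accesses; maintaining only the running maximum $y_{\max}$ and writing the output stream sequentially achieves this. With that established, the lemma follows.
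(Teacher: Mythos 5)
Your proposal is correct and follows essentially the same route as the paper: sort the points by $x$-coordinate with a sequential cache-oblivious sort ($O(n\log n)$ time, $O(\frac{n}{B}\log_M n)$ misses for $n\geq M$), then do one contiguous right-to-left scan maintaining the running maximum of $y$-coordinates, whose $O(n/B)$ misses are absorbed by the sorting cost. The paper's only additional remark is that the single max variable stays cached under LRU, which your observation about the streaming pass already covers.
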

\begin{proof} The whole task can be divided into following two steps :
\begin{enumerate}
 \item Sort given $n$ keys w.r.t. x-axis in $O(n\log n)$ time with total $O(\frac{n }{B}\log_M n)$ cache misses using one processor, given $n\geq M.$ $[\cite{singlesort}]$.
\item As explained above, this sorted set is read sequentially which will cost $O(n/B)$ cache misses. 
We need to maintain one max variable which will be accessed every time next element is read from input, because of LRU replacement policy,
 it will not cost any extra cache miss.
\end{enumerate}
\end{proof}
\subsection{Finding maximal points on $p$ cores}
\begin{lem}\label{cmaximal2}
From a given set S of $n$ points, maximal points can be found in expected $O(\frac{n}{p} \log n + \log n \log\log n)$ time
with expected $O(\frac{n }{B}\log_M n)$ cache misses using $p$ cores, given $n\geq Mp$ and $M\geq B^2B^{2/31}$.
\end{lem}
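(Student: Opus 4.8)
The plan is to parallelize the two-step sequential procedure from Lemma \ref{cmaximal1} while keeping each step within the sorting bounds. First I would sort the $n$ points by $x$-coordinate using Theorem \ref{reischuk}. The hypotheses $n\geq Mp$ and $M\geq B^2B^{2/31}=B^{2+2/31}$ are exactly those required by that theorem (with $\alpha=2/31$, and $p\leq n/M$ equivalent to $n\geq Mp$), so this step alone already delivers the full claimed bounds: expected time $O(\frac{n}{p}\log n+\log n\log\log n)$ and expected cache misses $O(\frac{n}{B}\log_M n)$. Everything that follows must therefore be shown to cost only lower-order terms.

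The sequential sweep maintains a running maximum of $y$-coordinates from right to left; in parallel this is precisely a \emph{suffix-maximum} computation. I would adapt the prefix-sum routine of Lemma \ref{prefix} by replacing the associative operator ``$+$'' with ``$\max$'' and reversing the scan direction. The divide-and-conquer structure, the processor-allocation scheme, and the cache analysis of Lemma \ref{prefix} all carry over, since they depend only on associativity of the operator and on the contiguous array layout, not on the operator being addition. Hence the suffix maxima $M_i=\max\{t_{i+1}(y),\ldots,t_n(y)\}$ can be computed in $O(n/p+\log p)$ time with $O(n/B)$ cache misses.

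With $M_i$ in hand, the final step is embarrassingly parallel: assign each core a contiguous block of $n/p$ sorted points and mark $t_i$ as maximal exactly when $t_i(y)>M_i$. Both the sorted points and the suffix-maxima array are read contiguously, so by the contiguous-read bound this costs $O(n/p)$ time and $O(n/B)$ cache misses; if a compacted output list is desired, one further invokes Lemma \ref{compaction} at the same cost. Summing the three steps, the sort dominates: since $p\leq n$ gives $\log p\leq\log n$ and $n/p\leq\frac{n}{p}\log n$, the total time is $O(\frac{n}{p}\log n+\log n\log\log n)$, and since $\log_M n\geq 1$ the cache misses are $O(\frac{n}{B}\log_M n)$.

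The one point requiring genuine care — and the only real obstacle — is verifying that the suffix-maximum scan incurs no block misses in the cache-oblivious setting, i.e.\ that the argument of Lemma \ref{prefix} really is operator-agnostic. This hinges on the fact that simultaneous writes land in distinct tree nodes whose separations exceed $B$ until the subproblem size drops below $n/p$, after which each core writes a contiguous range in depth-first order; neither property uses additivity, so the zero-block-miss bound transfers intact and the lower-order estimates above hold as claimed.
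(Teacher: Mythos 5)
Your proposal is correct and follows the same skeleton as the paper's proof: the sort via Theorem \ref{reischuk} dominates both bounds, and everything after it must be (and is) linear work with $O(n/B)$ extra cache misses. The post-sort phase differs in granularity, though. You compute the full suffix-maximum array over all $n$ points by swapping $+$ for $\max$ in Lemma \ref{prefix} (a legitimate move: the recursion, the processor allocation, and the block-miss argument there depend only on associativity and the array layout, exactly as you argue), and then do a pointwise test. The paper instead has each core find the maximum of its own chunk of $n/p$ sorted points, runs the $\max$-operator prefix computation only on those $p$ chunk maxima using $p^2/n$ cores (to avoid block misses), lets each core execute the sequential sweep of Lemma \ref{cmaximal1} inside its chunk, and finally collects the $p$ output lists with Corollary \ref{merge3}. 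Your version is more uniform; the paper's keeps the cross-chunk prefix work down to $p$ values and never needs to re-justify Lemma \ref{prefix} for a new operator. One citation in your write-up is off: Lemma \ref{compaction} as stated costs $O(n)$ cache misses (its reads may come from scattered locations), not $O(n/B)$, so invoking it ``at the same cost'' for output compaction is unsupported and would in general exceed the claimed $O(\frac{n}{B}\log_M n)$ bound, since $n \leq \frac{n}{B}\log_M n$ fails unless $B \leq \log_M n$. The right tool is Corollary \ref{merge3} applied to the $p$ per-chunk lists of marked points (each list is contiguous after your marking pass), giving $O(q/B + p)$ misses for output size $q$, which is what the paper does. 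Since compaction is an optional final step in your argument and the fix is a one-line substitution, this does not affect the validity of your main approach.
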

\begin{proof}
The whole task can be divided into following steps :
\begin{enumerate}
 \item Sort input points w.r.t. $x$ co-ordinate in expected time $O(\frac{n}{p} \log n + \log n \log\log n)$ 
with expected $O(\frac{n }{B}\log_M n)$  cache misses using $p$ cores, given $n\geq Mp$ and $M\geq B^2B^{2/31}$ (Theorem $\ref{reischuk}$).
 \item Assume input has been divided into $p$ contiguous chunks $k_1,k_2,\ldots,k_p$, each of size $n/p$. Each core is assigned one chunk
to find maximum element say $y$, from it. To do so, every core will read these $n/p$ keys sequentially.
This will take $n/p$ time and total $n/B+p$ cache misses across all cores. We get total $p$ elements.
\item Write these $p$ elements at $p$ contiguous locations
using $p^2/n$ cores in $n/p$ time with total $p$ cache misses, if $p\geq Bp^2/n$ or $n\geq Bp$ (Lemma $\ref{compaction}$). 
Total cache misses will be $O(n/B)$ if $n\geq Bp.$
\item Do prefix computation using $max$ operator on these $p$ points found is Step 3. To avoid block misses use only $p^2/n$ processors.
 Using $p^2/n$ cores, prefix computation on $p$ keys can be done in O$(n/p+ \log p)$ time with total $O(p/B)$ cache misses given $p\geq Mp^2/n$ 
or $n\geq Mp$ (Lemma $\ref{prefix} $).
\item Every core is assigned a task to find maximal points from one chunk. It will take $O(\frac{n}{p}\log \frac{n}{p})$ time and total $O(\frac{n }{B}\log_M n)$
cache misses across all $p$ cores, if $n/p\geq M.$ (Lemma $\ref{cmaximal1}$).
\item Every core have one output list of maximal points from its corresponding chunk. Store these lists at contiguous locations. Using $qp/n$ cores,
we can join these $p$ lists of total size $q$,(where $q\leq n$) in time $O(n/p+\log p)$ time and $O(q/B)$ cache misses, if $q\geq Mqp/n$ or
 $n\geq Mp$ (Theorem $\ref{merge3}$).
\end{enumerate}
\end{proof}
\subsection{Dividing Convex hull problem into Upper hull $\&$ Lower hull}
\begin{lem}\label{cdivide2}
Given a problem to calculate convex hull of $n$ points, it can be divided into two disjoints problem of finding Upper hull and Lower hull of size $m$ and $n-m$ 
respectively 
in $O(n/p+\log p)$ time with total $O(n/B)$ cache misses using $p$ cores, given $n\geq Mp$.
\end{lem}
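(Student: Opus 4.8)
The plan is to reduce this two-way split to the classical geometric fact that the segment joining the leftmost and rightmost input points separates the upper-hull vertices from the lower-hull vertices, and then to realize that separation as a cache-efficient two-way partition in which each core writes its output contiguously.

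First I would locate the two extreme points: the point $a$ of minimum $x$-coordinate and the point $b$ of maximum $x$-coordinate. Both are obtained by a single application of Lemma \ref{maximum} (one invocation for the minimum, one for the maximum), costing $O(n/p+\log p)$ time and $O(\lceil n/B\rceil)$ cache misses. Every upper-hull vertex lies on or above the line through $a$ and $b$, and every lower-hull vertex lies on or below it, so it suffices to classify each input point by the sign of its orientation with respect to $\overline{ab}$. This fixes the convention: points above the line are sent to the upper-hull instance and the remaining points to the lower-hull instance, yielding two instances of sizes $m$ and $n-m$.

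The partition itself is the only step requiring care about block misses. I would split the input into $p$ contiguous chunks of size $n/p$, one per core. In a first pass each core reads its chunk sequentially — $O(n/p)$ time and $O(n/B+p)$ cache misses across all cores — and merely counts how many of its points fall above the line and how many fall below. A prefix computation over these $2p$ counts, via Lemma \ref{prefix}, gives every core the exact starting offset of its upper points within the upper-hull array and of its lower points within the lower-hull array, costing $O(n/p+\log p)$ time and $O(n/B)$ cache misses. In a second pass each core rescans its chunk and writes its upper points into the contiguous segment beginning at its upper offset and its lower points into the contiguous segment beginning at its lower offset. Since each core writes each output array strictly left-to-right into a single contiguous run, the only block misses occur at the $O(p)$ segment boundaries; as $n\geq Mp$ and $M\geq B$ we have $p\leq n/B$, so this contributes $O(p)=O(n/B)$ and the total write cost stays $O(n/B)$.

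The main obstacle is exactly this block-miss-free realization of the split: a naive scatter, in which each point computes its destination independently and writes there, could force many cores to contend for the same output block and incur $\Omega(p^2)$ block misses. The two-pass count-then-write scheme sidesteps this precisely as in Lemma \ref{compaction}, guaranteeing that every core owns a contiguous run in each output array. Summing the three stages gives $O(n/p+\log p)$ time and $O(n/B)$ cache misses under the stated assumption $n\geq Mp$, and the resulting upper-hull and lower-hull instances, of sizes $m$ and $n-m$, are then disjoint and can be solved independently.
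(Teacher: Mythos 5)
Your proposal is correct, and its geometric skeleton coincides with the paper's: find the two extreme points via Lemma \ref{maximum}, classify every input point by its orientation with respect to the segment joining them, and then realize the two-way split in a block-miss-free way. Where you genuinely differ is the gather step. The paper has each core locally partition its chunk of $n/p$ points into two buckets and then invokes its general merging primitive (Corollary \ref{merge1}, i.e.\ Theorem \ref{merge2}) to merge the resulting $p$ two-bucket lists into a single list divided into two buckets; internally that primitive transposes the $p\times 2$ table of bucket sizes, prefix-sums it, and lets each core write an equal contiguous share of $n/p$ output keys, reading from wherever that share happens to lie. You instead carry out an explicit count-then-write compaction: balanced contiguous reads, a prefix computation over the $2p$ counts, and then each core writes two contiguous runs of variable length. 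These are dual work assignments --- the paper balances the writes, you balance the reads --- and both meet the $O(n/p+\log p)$ time and $O(n/B)$ cache-miss bounds; your version is more elementary in that it bypasses the merge machinery, at the cosmetic cost of re-deriving what Theorem \ref{merge2} already encapsulates. One small technical point: to invoke Lemma \ref{prefix} on only $2p$ items with its block-miss guarantee, you should use roughly $p^2/n \leq p$ cores rather than all $p$, since the lemma's analysis requires the input size to be at least $M$ times the number of cores used; the paper does exactly this in analogous situations (e.g.\ Lemma \ref{rank} and Lemma \ref{cmaximal2}), and with that choice the time is still $O(n/p+\log p)$, so nothing in your argument breaks.
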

\begin{proof}
Divide the given problem into following simple steps :
\begin{enumerate}
 \item  Find the point with maximum $x$ co-ordinate and minimum $x$ co-ordinate [Lemma \ref{maximum}]. It will take $O(n/p+\log p)$ time and 
total $O(n/B)$ cache misses, if $n\geq Bp$. 
\item We get two points $p_1$ and $p_2$. For every input point check whether it is on upper side of line $p_1p_2$ or
lower side of line $p_1p_2$. Based on this information, input points can be divided into two disjoint problem. 
\item Every core reads $n/p$ keys and divide them into two buckets. This will take $\lceil n/p \rceil$ time and $n/B + p$ cache misses.
We will get $p$ lists each of each of $n/p$ size, all divided into two problems. 
\item Using lemma \ref{merge1}, merge these lists into a single list divided into two buckets
 in $O(n/p+\log p)$ time and total $O(n/B+2p)$ 
cache misses, given $n\geq Mp$ and $n/p\geq 2$ or $n\geq 2p.$ 
Total cache misses are bounded by $O(n/B)$ if $n\geq Mp.$
\end{enumerate}
Whole algorithm can be summarized by this recurrence relation.
$$
  T(n,2) = p T(n/p,2) + T_{merge}(p,n/p,2)
$$
\end{proof}
\subsection{Brute-force algorithm for intersection of half-planes}
\begin{cor}\label{crank}
Given two points $p_1$, $p_2$ and $n$ half-planes, find number of half-planes for which $p_1$ and $p_2$ are on same side in $O\bigg(\frac{n}{p}+\log p\bigg)$ time 
and total $O(n/B)$ cache misses, given $n\geq Bp$. 
\end{cor}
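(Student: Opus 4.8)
The plan is to reduce this counting problem to the summation primitive of Lemma \ref{sum}, following exactly the two-phase structure of the rank computation in Lemma \ref{rank}. The observation that makes this a corollary is that deciding whether $p_1$ and $p_2$ lie on the same side of a single half-plane is a constant-time operation. A half-plane is given by a linear inequality $ax+by \le c$, so one evaluates $s_1 = \mathrm{sign}(a\, p_{1x} + b\, p_{1y} - c)$ and $s_2 = \mathrm{sign}(a\, p_{2x} + b\, p_{2y} - c)$; the two points are on the same side precisely when $s_1 = s_2$. Each half-plane therefore yields a $0/1$ indicator, and the quantity we want is simply the sum of these $n$ indicators.

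Next I would partition the $n$ half-planes into $p$ contiguous chunks of size $\lceil n/p \rceil$ and assign one chunk to each core. Every core reads its chunk sequentially, applies the constant-time same-side predicate to each half-plane, and accumulates a local count. Reading a chunk costs $\lceil n/(pB) \rceil$ cache misses per core, for a total of $n/B + p = O(n/B)$ across all cores under the hypothesis $n \ge Bp$; the running local counter occupies a single register and incurs no further misses, so this phase runs in $\lceil n/p \rceil$ time with zero block misses since each core writes only to its private register.

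Finally, the $p$ partial counts must be combined into a single answer. Doing this with all $p$ cores would risk block misses on the shared accumulator, so, exactly as in Lemma \ref{rank}, I would use only $p^2/n$ cores for the reduction. By Lemma \ref{sum} these $p$ values can be added in $O(n/p + \log p)$ time with $O(p/B)$ cache misses, the required condition $p \ge B p^2/n$ being equivalent to the given $n \ge Bp$; since $p \le n$, the $O(p/B)$ term is absorbed into the $O(n/B)$ bound. Combining the two phases yields the claimed $O\bigl(\frac{n}{p}+\log p\bigr)$ time and $O(n/B)$ cache misses.

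The only point that requires care is the block-miss control in the reduction step, and this is not new: it is handled identically to the rank lemma, where the restriction to $p^2/n$ cores is precisely what prevents many cores from contending for the same block of the accumulator array. Everything else is an immediate application of the constant-time geometric same-side predicate, which is why the statement can be recorded as a corollary rather than a standalone lemma.
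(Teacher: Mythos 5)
Your proposal is correct and follows essentially the same route as the paper: the paper disposes of this corollary in one line by declaring it an instance of Lemma \ref{rank} with ``rank'' reinterpreted as the same-side count, and your argument is just that reduction unrolled --- the same two-phase structure (local counting over $\lceil n/p\rceil$-size chunks, then summation via Lemma \ref{sum} restricted to $p^2/n$ cores to avoid block misses, with $p \geq Bp^2/n \Leftrightarrow n \geq Bp$) with the constant-time same-side predicate made explicit. Nothing is missing; you have merely written out what the paper leaves implicit.
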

\begin{proof}
This is simple instance of Lemma \ref{rank} where rank of point can be defined as the number of half-planes for which both given points are on the same side. 
\end{proof}
\begin{lem}\label{cbrute}
Intersection of $n$ half-planes can be computed in $O(n^3/p+\log p)$  time using $p$ cores
with total $O(n^3/B)$ cache misses, given $n^3\geq Mp$ and a point $x$ which will be present in this intersection. 
\end{lem}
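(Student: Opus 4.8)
The plan is to mirror the brute-force sorting routine (Lemma~\ref{sortlog2} / Corollary~\ref{sortlog}) and its rank-counting primitive (Lemma~\ref{rank}, Corollary~\ref{crank}). The geometric observation is that a point is a vertex of the intersection of the $n$ half-planes exactly when it is the intersection of two of the bounding lines and simultaneously lies inside all $n$ half-planes; moreover the intersection is a convex polygon with at most $n$ edges, so at most $n$ candidate points pass this test. So I would range over all $\binom{n}{2}=O(n^2)$ unordered pairs $(i,j)$, form the intersection $q_{ij}$ of the bounding lines $\ell_i,\ell_j$, test $q_{ij}$ for membership in all $n$ half-planes, keep the survivors, and finally order the $\le n$ surviving vertices angularly around the supplied interior point $x$ to output the polygon boundary. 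The dominant cost is the testing phase, which performs $O(n^2)$ candidates $\times$ $n$ half-plane checks $=O(n^3)$ work.

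The testing phase is the bottleneck, and I would treat each candidate as one instance of the ``count how many half-planes contain this point'' primitive of Lemma~\ref{rank} (the geometric analogue being Corollary~\ref{crank}). I would split the $p$ cores into $O(n^2)$ groups of $p/n^2$ cores, one group per candidate. A single count over $n$ half-planes with $p/n^2$ cores runs in $O(n/(p/n^2)+\log p)=O(n^3/p+\log p)$ time and $O(n/B)$ cache misses, because the hypothesis $n\ge B(p/n^2)$ of Lemma~\ref{rank} reduces to $n^3\ge Bp$, which is implied by the assumption $n^3\ge Mp$ together with $M\ge B^2$. A key simplification is that reading all $n$ half-planes for candidate $(i,j)$ already delivers the two lines $\ell_i,\ell_j$, so forming $q_{ij}$ costs only $O(1)$ extra work and no extra reads; the candidate points are never materialized as a block of $O(n^2)$ records. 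Since the half-plane array is only read in this phase, the many concurrent reads incur no block misses; summing the per-candidate read cost over the $O(n^2)$ candidates gives exactly $O(n^3/B)$ cache misses and the groups run in parallel, so the time stays $O(n^3/p+\log p)$.

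It remains to collect and order the survivors. A prefix computation (Lemma~\ref{prefix}) over the $O(n^2)$ boolean outcomes followed by a compaction (Lemma~\ref{compaction}) gathers the $\le n$ vertices into contiguous locations, and a brute-force sort (Corollary~\ref{sortlog}) orders them by angle around $x$. Each of these steps touches only $O(n^2)$ words, so I would run them with at most $\lfloor n^2/B\rfloor$ cores; this guarantees every writing core owns a contiguous stretch of at least $B$ cells, hence no block misses, and it keeps their cost at $O(n^2/p'+\log p')=O(n^3/p+\log p)$ time and $O(n^2/B)=O(n^3/B)$ cache misses (using $Bp\le n^3$). All subordinate steps are dominated by the testing phase, giving the claimed totals of $O(n^3/p+\log p)$ time and $O(n^3/B)$ cache misses; the interior point $x$ is what certifies nonemptiness and anchors the angular ordering.

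The hard part will be the block-miss bookkeeping rather than the geometry. The testing phase partitions $p$ into $O(n^2)$ groups, and I must check that every group still satisfies the hypothesis of Lemma~\ref{rank}; this is precisely where $n^3\ge Mp$ is used, as it subsumes all the $n\ge Bp'$ conditions for the subgroups. I also need to argue that the cheaper $O(n^2)$-sized steps are throttled to at most $n^2/B$ cores so that no two writing cores ever share a block --- exactly the contiguous-write discipline used in the brute-force sort --- and that extracting the $\le n$ true vertices out of an $O(n^2)$-entry flag array via prefix-sum-plus-compaction introduces no concurrent writes to a common block. Once these accounting points are pinned down, the bounds follow.
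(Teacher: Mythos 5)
Your proposal is correct and follows essentially the same route as the paper's proof: enumerate the $O(n^2)$ candidate line-intersection points, test each one against all $n$ half-planes by assigning $p/n^2$ cores to the rank primitive (Lemma~\ref{rank}/Corollary~\ref{crank}), gather the $\le n$ surviving vertices, and finish with a brute-force sort. The only differences are in post-processing bookkeeping --- the paper gathers survivors via sparse position-keyed writes into a length-$n^3$ array followed by compaction (avoiding your prefix sum), and orders the vertices by splitting into upper/lower hulls (Lemma~\ref{cdivide2}) and sorting by $x$-coordinate rather than angularly around $x$ --- plus two small hypothesis repairs on your side: the final sort should invoke Lemma~\ref{sortlog2} rather than Corollary~\ref{sortlog}, since the corollary needs $n\ge B$ while the assumptions $n^3\ge Mp$, $M\ge B^2$ only guarantee $n\ge B^{2/3}$ (and hence $n^2\ge B$, which is all Lemma~\ref{sortlog2} needs), and the prefix/compaction phase should be throttled to about $n^2/M$ cores rather than $n^2/B$, so that the condition $n^2\ge Mp'$ underlying Lemma~\ref{prefix} holds (the time bound survives because $M\le n^3/p$).
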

\begin{proof} We will use a simple brute-force approach. We have $n$ half-planes $k_1$,$k_2$,$\ldots$,$k_n$ and a point $x$ for which we already know 
that it will be inside intersection. As $n^3\geq M$ and $M\geq B^2$, so we can say that $n^2\geq B$.
\begin{enumerate}
 \item As $n$ lines can have maximum $n^2$ intersection points. We will simply check for each point whether it is a vertex of convex hull or not.
A point $p$ is a vertex of convex hull if $p$ and $x$ both are on same side of every half-plane
For every point $k_i$, where $1\leq i\leq n^2$, find whether it is vertex of convex hull or not, in parallel using $p/n^2$ cores.
Using lemma $\ref{crank}$, each point will take $O\bigg(\frac{n^3}{p}+\log p\bigg)$ time 
with total $O(n/B)$ cache misses, given $n\geq Bp/n^2$ or $n^3\geq Bp$. Total cache misses for $n^2$ points will be $O(n^3/B)$.
\item If point $k_i$ is vertex of convex hull, then one of the processors
which are working on $k_n^{th}$ point will write this point in output.
Maximum $n$ cores are ready to write one point each. We need to write these $n$ points in $n$ contiguous locations.
Create an output array $A$ of length $n^3$ and key $k_i$ will be written at position $n^2k_i$ in A.
This will take $O(n/p)$ time, $n$ cache misses and zero block misses if $n^2\geq B$. $n$ cache misses can be bounded by $O(n^3/B)$ if $n^2\geq B$. 
We have written these $n$ elements in $n^3$ locations.
\item Write these $n$ points at $n$ contiguous locations,
in time $n^3/p$ with total $n$ cache misses, using $p/n^2$ cores  if $n\geq Bp/n^2$ or $n^3\geq Bp$ (Lemma $\ref{compaction}$). 
Total cache misses are bounded by $O(n^3/B)$ if $n^2\geq B.$
\item All these written points are vertices's of convex hull but they may not be ordered.
Divide these $n$ points into two disjoint problems Upper hull and Lower hull using left-most and rightmost point (Lemma $\ref{cdivide2}$)
with $p/n^2$ cores
in time $O(n^3/p+\log p)$ with total $O(n/B)$ cache misses, given $n\geq Mp/n^2$ or $n^3\geq Mp$.
\item Using lemma $\ref{sortlog2}$, sort both of these disjoint problems.
We can sort $n$ points along $x$-axis using $n^2/p$ cores in $O(n^2/p+\log p)$ time and $O(n^2/B + n)$ cache misses which can be bounded by $O(n^3/B)$
if $n^2\geq B$.
\end{enumerate}
This concludes the lemma.
\end{proof}
\section{Randomized Algorithm For Convex Hull construction}\label{mainch}
The main result in \cite{3dhull} states that {\em Given  $n$ half-planes 
in $R^3$, and 
a point $t$ inside their intersection, we can find intersection of these 
half-planes in $O(\log n)$ expected time using $n$ 
CREW PRAM processors}. This algorithm achieves the optimal speed up. The basic steps of this algorithm are : 
\begin{enumerate}
\item Select $O(\log n)$ samples each of size $\lfloor n^{\epsilon}\rfloor$ from K randomly, where $0 < \epsilon < 1$. 
\item Select a \textit{good} sample using \textit{Polling}.
\item Find intersection of half-planes in selected sample using any brute force method to get a convex hull with points $P_1,P_2,\ldots,P_x$ 
(where $x\leq \epsilon$.)
\item  This convex polygon can be divided in $x$ triangles (or sectors) of the form $P_1tP_2, P_2tP_3,$ $\ldots, P_{x-1}tP_x, P_xtP_1$,
 where $t$ is given point which is present inside this hull.
\item The original problem can be divided into $x$ disjoint sub-problems, one sub-problem for each sector. For every sector find the intersection of half-planes which intersect this sector.
\item As one half-plane can intersect many sectors so total size of all sub-problems may be much larger than size of original problem.
But because of Polling we can claim that total size of all sub-problems will be $O(n)$, and size of maximum sub-problem will be
 $\leq n^{1-\epsilon}\log n$. 
\item  Use a \textit{filtering} scheme to reduce total size of all sub-problems from $O(n)$ to $2n$.
\item If size of sub-problem is very small then solve it directly using any brute-force algorithm otherwise solve it recursively using this same procedure.
\end{enumerate}
\subsection{Adaptation to Multi-Core Model}
We will use the algorithm explained above with slight modifications on Multi-core cache oblivious model to achieve the
same kind of bounds as the PRAM model. 
\begin{thm}\label{mainhull}
Using $p$ cores, the intersection of $n$ half-planes can be computed 
in expected time 
$O(\frac{n}{p} \log n + \log n \log\log n)$ incurring an 
expected $O(\frac{n }{B}\log_M n)$ cache misses, 
given $n\geq Mp$ and $M\geq B^2B^{2/7}.$
\end{thm}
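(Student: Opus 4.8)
The plan is to transport the randomized divide-and-conquer scheme of Reif and Sen (the eight-step PRAM algorithm listed at the start of Section~\ref{mainch}) onto the multicore model, replacing each primitive by the cache-oblivious subroutines already established and then solving time/cache recurrences of exactly the shape that drives Theorem~\ref{reischuk}. Fix a sampling exponent $\epsilon$; the hypothesis $M \geq B^2B^{2/7}$ forces the choice $\epsilon = 1/8$, because the partitioning condition $n \geq B^{2/(1-\epsilon)}p$ then reduces to $n \geq B^{2+2/7}p$, which follows from $n \geq Mp$. At the top level I would (i) draw $O(\log n)$ random samples each of size $n^{\epsilon}$ using the sampling machinery behind Lemma~\ref{sample} and select a \emph{good} one by polling; (ii) compute the intersection of the $O(n^{\epsilon})$ half-planes in the chosen sample by the brute-force routine of Lemma~\ref{cbrute}, costing $O((n^{\epsilon})^3/p + \log p)$ time and $O((n^{\epsilon})^3/B)$ cache misses, which is negligible since $(n^{1/8})^3 = n^{3/8} = o(n)$; and (iii) split the resulting convex polygon into its $x \leq n^{\epsilon}$ sectors $P_i t P_{i+1}$.

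The heart of the adaptation is the sector-partitioning step, the geometric analogue of dividing keys into buckets. Each of the $n$ half-planes must be routed to every sector it meets; I would implement this exactly as in Theorem~\ref{dividemain}/Lemma~\ref{multisearch}, treating the $x = n^{\epsilon} \leq \sqrt{n}$ sector boundaries as the bucket indices, so that the routing costs $O(\frac{n}{p}\log n + \log n\log\log n)$ time and $O(\frac{n}{B}\log_M n)$ cache misses whenever $n \geq B^{2/(1-\log_n x)}p$ — precisely where the hypothesis $M \geq B^2B^{2/7}$ is consumed. Unlike sorting, a half-plane can belong to several sectors, so the subproblems are not disjoint; I would appeal to polling to guarantee that the largest subproblem has size $\leq n^{1-\epsilon}\log n$ and that the aggregate size is $O(n)$ with high probability, and then apply the filtering step to compress the aggregate back to $\leq 2n$ before recursing. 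Filtering, and the writing of each subproblem into contiguous memory, would be carried out with the merge and compaction machinery (Corollary~\ref{merge3} and Lemma~\ref{compaction}) to suppress block misses, while processors would be redistributed among subproblems in proportion to their sizes by a prefix computation, exactly as in step~4 of Theorem~\ref{reischuk}.

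Finally I would set up the two recurrences. Since each child has size $\leq n^{1-\epsilon}\log n$ and the children have total size $O(n)$, the recursion has depth $O(\log\log n)$ (the exponent shrinks by a constant factor per level, as in Theorem~\ref{reischuk}), the per-level time is $O(\frac{n}{p}\log n + \log n\log\log n)$ and the per-level cache cost is $O(\frac{n}{B}\log_M n)$; summing over the $O(\log\log n)$ levels and absorbing the geometric series yields the claimed bounds, with a small base case solved directly by Lemma~\ref{cbrute} once the size drops to $N/P$. The two genuinely delicate points, which I expect to be the main obstacle, are (a) realizing polling on the multicore within the same time and cache budget — it requires repeatedly partitioning and size-counting over $O(\log n)$ candidate samples and selecting the good one without exceeding $O(\frac{n}{B}\log_M n)$ cache misses — and (b) verifying that the transient non-disjointness of the subproblems (total size $O(n)$ before filtering) does not inflate the cache bound, which hinges on the filtering being cache-oblivious and on the high-probability size guarantees of \cite{refsample,3dhull} surviving the expectation-over-samples accounting.
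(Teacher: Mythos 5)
Your high-level plan coincides with the paper's: sample and poll, brute-force the sample's intersection (Lemma \ref{cbrute}), split the resulting polygon into radial sectors, route half-planes to sectors, filter the aggregate back to size $\le 2n$, redistribute processors by prefix sums, and recurse with recurrences identical to those of Theorem \ref{reischuk}. The genuine gap is in the routing step, which you propose to do by invoking Theorem \ref{dividemain}/Lemma \ref{multisearch} ``treating the sector boundaries as bucket indices.'' That theorem's interface requires keys and splitters drawn from a single totally ordered universe, with each key landing in exactly one bucket determined by comparisons against the splitters; half-planes against radial sectors satisfy neither requirement (a half-plane intersects a contiguous \emph{range} of sectors, and determining that range is not a one-dimensional comparison search). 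The paper builds separate machinery for precisely this: it dualizes (half-planes $\to$ points, polygon vertices $\to$ lines), preprocesses the arrangement of the $m=n^{\epsilon}$ dual lines into $m^2$ vertical slabs each crossed by $m$ ordered lines (Lemma \ref{preprocess}), performs cache-oblivious point location by \emph{two} successive applications of Theorem \ref{dividemain} --- once into vertical slabs, once within each slab (Lemma \ref{point}, Corollary \ref{findsector}) --- and only then duplicates each half-plane once per intersected sector and semisorts the copies (Lemma \ref{dividehull}).

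This machinery is not an implementation detail; it is where the theorem's hypothesis comes from, and skipping it makes your parameter accounting wrong. The two-level point location imposes $n\ge \max\{Mp,\, B^{2+4/x}p\}$ together with $n \ge m^{x+2}$ and $x \ge 2$; choosing $x = \frac{1}{2\epsilon}-2$ turns this into $n \ge B^{2/(1-4\epsilon)}p$ (note the factor $4$, versus the exponent $2/(1-\epsilon)$ you derived from a single hypothetical call to Theorem \ref{dividemain}), and the constraint $x\ge 2$, which also governs the polling step (Theorem \ref{polling}), forces $\epsilon \le 1/8$. Consequently the paper must take $\epsilon = 1/32$, giving $B^{2/(1-1/8)} = B^{2}B^{2/7}$, which matches the stated hypothesis $M \ge B^2B^{2/7}$. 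With your choice $\epsilon = 1/8$ the correct condition would be $B^{2/(1-1/2)} = B^{4}$, i.e.\ you would need $M \ge B^4$, strictly stronger than the theorem assumes; the fact that your exponent $2/(1-\epsilon)$ evaluates to $16/7$ at $\epsilon=1/8$ is a numerical coincidence, not a derivation. The ingredients you defer as ``obstacles'' (cache-efficient polling and filtering) are indeed nontrivial, but they are fully worked out in the paper as Theorem \ref{polling} and Theorems \ref{filtering1} and \ref{filtering2}, and each again consumes conditions of the form $M\ge B^2B^{2/31}$ that your parameter budget must absorb as well.
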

\begin{proof}
The whole algorithm can be divided into following steps: 
\begin{enumerate}
\item From $n$ half-planes, using \textit{polling}\footnote{An efficient 
re-sampling technique that is described later}, choose a good sample of size $n^\epsilon$ using $p$ cores in expected time 
$O(\frac{n}{p} \log n + \log n \log\log n)$ with expected $O(\frac{n }{B}\log_M n)$ cache misses, 
given $\epsilon \leq 1/8$, $n\geq$ max$\{Mp,B^{2/(1 - 4 \epsilon)}p\}$ and $M\geq B^2B^{2/31}$ (Theorem \ref{polling}).
\item Compute intersection of $n^\epsilon$ half-planes in 
$O(n/p+\log p)$  time using $n^{3\epsilon} p/n$ cores
incurring a total of $O(n^{3\epsilon}/B)$ cache misses, 
given $n^{3\epsilon}\geq Mn^{3\epsilon} p/n$ or $n\geq Mp$ and a point $O$ which will be present 
in this intersection. (Lemma \ref{cbrute}). No. of processors needed for this are $\leq$ $p$ and total cache misses can be bounded by
$O(n/B)$  if $\epsilon\leq 1/3$.
\item We get a convex hull with points $P_1,P_2,\ldots,P_x$ 
(where $x\leq n^\epsilon$.) Divide this convex polygon into $x$ 
triangles (or sectors) of the form $P_1OP_2,P_2OP_3,\ldots,$ $P_{x-1}OP_x,$ $P_xOP_1$,
where $O$ is some given point which is present inside this hull.
\item Divide the original problem into $ x = O(n^\epsilon)$ 
disjoint sub-problems.\\
Using $p$ processors, divide $n$ half-planes into $n^\epsilon$ disjoint sub-problems 
in expected $O(\frac{n}{p} \log n + \log n \log\log n)$ time with expected $O(\frac{n }{B}\log_M n)$ cache misses , 
if $n\geq n^{\epsilon(x+2)}$, $x\geq 2$, $M\geq B^2B^{2/31}$ and $n\geq$max$\{Mp,B^{2+4/x}p\}$ Lemma (\ref{dividehull}).\\
Choose $x=(\frac{1}{2\epsilon}-2)$. Condition $x\geq 2$ is reduced to $\epsilon\leq 1/8$ and $n\geq B^{2+4/x}p$ is reduced to
$n\geq B^{2/(1-4\epsilon)}p$. 
\item Because of the way sampling was done in Step 1 (using \textit{polling}), we can claim that total size of all sub-problems will be $O(n)$ \cite{3dhull}.
Use a \textit{filtering} scheme to reduce total size of all sub-problems from $O(n)$ to $2n$.\\
Using $n^\epsilon$ sectors, $O(n)$ half-planes can be filtered in expected time $O(\frac{n}{p} \log n + \log n \log\log n)$
with a total of $O(\frac{n }{B}\log_M n)$ expected cache misses, given $n\geq Mp$ and $M\geq B^2B^{2/31}$ (Theorem \ref{filtering2}).
\item After filtering, total size of all sub-problems is $\leq$ $2n$ and all these sub-problems has been divided into $\epsilon$ sectors.
To maintain the same $n/p$ ratio throughout the recursion, increase the number of processors
to $2p$. As it has been proved (\cite{ger}) that at any level of recursion, maximum size of total sub-problems will be $\leq 2n$, $2p$ processors will be enough 
for this algorithm. No. of processors are assigned to each sub-problem according to its size. For e.g. problem with size $x$ will get
$xp/n$ processors. For division of processors, prefix computation is done on $n^\epsilon$ sizes.
Using $n^\epsilon p/n$ processors, it will take $O\bigg(\frac{n}{p}+\log p\bigg)$ time 
and $O(n^\epsilon/B)$  cache misses, given $n^\epsilon \geq Mn^\epsilon p/n$ 
or $n\geq Mp$ (Lemma~\ref{prefix}).
\item Maximum size of any sub-problem can be $2n^{(1-\epsilon)}\log n$. Those sub-problems which have size $\leq n/p$ get only one processor and can be solved sequentially. We do not need to divide these sub-problems any further.
Using one processor, intersection of $b$ half-planes can be computed in $O(b\log b)$ time and $O(\frac{b}{B}\log_M b)$ cache misses \cite{2dhull}.
All other sub-problems are solved recursively using this same procedure.
\end{enumerate}
As mentioned above, to complete step 1-7 in claimed bounds we need $\epsilon \leq 1/8$, $n\geq$ max$\{Mp,B^{2/(1 - 4 \epsilon)}p\}$ and $M\geq B^2B^{2/31}.$
By choosing $\epsilon=1/32$, all these can be reduced to $n\geq Mp$ and $M\geq B^2B^{2/7}.$\\
This algorithm follows same kind of recurrence as showed in our sorting algorithm (Theorem \ref{reischuk}).\\
\begin{equation}
\bar{T^{\prime \prime}}(n)=
\begin{cases}
O(K \log n + \log n \log \log n) + \bar{T^{\prime \prime}}(n_i) & \text{if $n \geq K$}\\
O(K \log K) & \text{otherwise}
\end{cases}
\end{equation}
where $K=n/p$ and $n_i \leq 2n^{1-\epsilon}\log n$ or $n_i \leq 2n^{31/32}\log n$ or $n_i\leq n^{63/64}$ if $2\log n\leq n^{1/64}$ 
which will be true for large $n$.\\
Likewise total cache misses for all p processors
\begin{equation}
\bar{Q}(n)=
\begin{cases}
O(\frac{n}{B} \log_M n) + \displaystyle\sum_{i=1}^{n^{1/x}} \bar{Q}(n_i) & \text{if $n \geq K$}\\
O(\frac{K}{B} \log_M K) & \text{otherwise}
\end{cases}
\end{equation}
As solved in Theorem \ref{reischuk}, we get total expected time $O(\frac{n}{p} \log n + \log n \log\log n)$ and total $O(\frac{n }{B}\log_M n)$ expected cache misses, 
\end{proof}
\begin{cor}\label{useit}
Given $n$ half-spaces in $R^3$ that contains the origin, 
we can find their intersection using $p$ cores in expected time 
$O(\frac{n}{p} \log n + \log n \log\log n)$ with expected cache misses $O(\frac{n }{B}\log_M n)$, 
given $n\geq Mp$ and $M\geq B^2B^{2/7}.$
\end{cor}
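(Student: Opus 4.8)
The plan is to obtain this as an essentially immediate consequence of Theorem \ref{mainhull}, the only genuinely new ingredient being the interpretation of the hypothesis. The assumption that every one of the $n$ half-spaces contains the origin guarantees that the origin lies in their common intersection and is therefore an interior point of the resulting polytope. This is exactly the interior point $t$ (the apex $O$) that the divide-and-conquer of Theorem \ref{mainhull} requires in order to decompose the boundary of a sample intersection into cones (sectors). Hence I would simply instantiate Theorem \ref{mainhull} with $t$ set to the origin; no separate feasibility or point-location phase is needed, and the hypotheses $n \geq Mp$ and $M \geq B^2 B^{2/7}$ together with the resulting bounds are inherited verbatim.

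Before invoking the theorem I would verify that every geometric primitive it uses carries over to half-spaces in $R^3$. The decomposition of Step 3 becomes a triangulation of the $O(n^\epsilon)$ faces of the sample polytope together with the cones from the origin to each triangle; the filtering of Step 5 and the processor allocation of Step 6 are combinatorially identical. The single primitive whose cost changes is the brute-force intersection of the $n^\epsilon$-size sample (the analogue of Lemma \ref{cbrute}): in $R^3$ the candidate vertices are the $O(m^3)$ triples of bounding planes of $m=n^\epsilon$ half-spaces, each tested against all $m$ half-spaces in $O(m)$ work, for $O(m^4)=O(n^{4\epsilon})$ total. With $\epsilon = 1/32$ this is only $O(n^{1/8})$ work, comfortably within the $O(n)$ work and $O(n/B)$ cache budget already allotted to this step, so it does not affect the overall bounds.

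With these primitives in place, the recurrence driving the analysis is identical to that of Theorem \ref{mainhull}: after filtering, the total size of the sub-problems is $O(n)$ and the largest sub-problem has size $O(n^{1-\epsilon}\log n)$, which is at most $n^{63/64}$ for large $n$. Consequently the expected-time and expected-cache recurrences solve exactly as in Theorems \ref{reischuk} and \ref{mainhull}, giving expected time $O(\frac{n}{p}\log n + \log n\log\log n)$ and expected cache misses $O(\frac{n}{B}\log_M n)$ under the stated conditions.

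The main obstacle I anticipate is not in the recurrence but in confirming that the polling and sampling analysis of Reif and Sen \cite{3dhull} — which controls both the total sub-problem size and the $O(n^{1-\epsilon}\log n)$ bound on the largest sub-problem, and hence keeps the ``bad case'' probability $r$ at most $1/n$ in the recurrence of Theorem \ref{mainhull} — transfers to the three-dimensional instance with the same tail bounds. Since \cite{3dhull} was designed for the $R^3$ setting in the first place, this should hold, but it is the step where the dimension genuinely matters and where I would spend the most care. Once that probabilistic guarantee is in hand the corollary is immediate, and combining it with the standard reduction of three-dimensional Voronoi diagrams to half-space intersection then yields identical bounds for Voronoi diagrams.
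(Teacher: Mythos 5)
Your overall strategy matches the paper's: Corollary \ref{useit} is obtained by generalizing the two-dimensional algorithm of Theorem \ref{mainhull} to $R^3$ along the lines of Reif and Sen \cite{3dhull}, with the origin serving as the interior point. Your treatment of the interior point, and your adjustment of the brute-force step to $O(m^4)$ work over triples of bounding planes, are both fine and within budget for $\epsilon = 1/32$.

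However, there is a genuine gap in the claim that ``the filtering of Step 5 \ldots [is] combinatorially identical.'' It is not, and this is precisely the point the paper's (admittedly terse) proof addresses: it replaces the two-dimensional filtering with the \emph{simpler filtering step of Amato, Goodrich and Ramos} \cite{AGR}. The filtering of Theorem \ref{filtering1} is intrinsically two-dimensional: within a sector bounded by two rays, each half-plane is intersected with the two sides, ranks along the two sides give a tuple $(x_i,y_i)$, and redundancy becomes a dominance (2-D maxima) question. In $R^3$ a sub-problem is a cone over a triangle with three bounding faces, a plane meets the cone in a two-dimensional region rather than an interval, and redundancy of a half-space inside the cone is no longer captured by a pair of ranks; there is no analogous reduction to 2-D maxima. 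This is exactly why the original Reif--Sen algorithm had a substantially more involved filtering procedure and why the paper imports the \cite{AGR} scheme instead. Relatedly, you identify the polling/tail-bound analysis as the main place where dimension matters, but that part transfers essentially for free (it was designed for $R^3$ in \cite{3dhull}); the filtering is where the dimension genuinely bites. A secondary, smaller issue: the half-space-to-cone assignment (the 3-D analogue of Lemmas \ref{preprocess}--\ref{dividehull}) also changes, since point location must now be done in an arrangement of planes with $O(m^3)$ cells rather than lines, and one should check that the resulting polynomial blow-up in $m=n^\epsilon$ still satisfies the conditions that produce $M \geq B^2B^{2/7}$; you inherit these conditions ``verbatim'' without that verification.
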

\begin{proof}
The above algorithm readily generalizes to 3 dimensions using the algorithm
of Reif and Sen\cite{3dhull} where we use a simpler {\em filtering} step
described in Amato, Goodrich and Ramos \cite{AGR}. 
\end{proof}
\subsection{Intersection of Half-planes and Convex Hull}
Given a convex polygon divided into sectors and number of input half-planes, for every half-plane identify the sectors of convex hull it intersects.
As shown in \cite{ger}, this problem can be reduced to a point location problem.
In dual space these half-planes are mapped to points and vertices's of convex hull are mapped to lines. 
These lines will intersect with each other to create some regions. It can be proved that all those points which belong to same region
in dual space, in primal space they are the half-planes which intersect same set of sectors in this convex hull.
As a part of preprocessing, for every region find the set of sectors it intersect.
After preprocessing, for any half-plane, do point location on these lines in
the dual space. (These lines are vertices's of convex hull in primal space.)
After finding region we can list all the sectors which this half-plane 
intersects. 
For point location problem we will use the algorithm of 
Dobkin and Lipton \cite{refpoint}.
After preprocessing the half-planes, we can perform the following steps:

\begin{enumerate}
\item For any query point we can find the unique cell in which it is 
present using two binary searches. 
\item For every region we can list the set of sectors in constant time
where listing involves the start sector and the end sector.
Because of convexity of the polygon, we can say that half-plane which pass through these start and end sectors will also pass through the sectors
between these start and end sector.
\end{enumerate}

\begin{lem}\label{preprocess} Using $p$ cores, $m$ half-planes can be 
preprocessed in $O(m^4/p+\log p)$ time with $O(m^4/B)$ cache misses, 
given $m^4\geq Bp$ so that point location among the half-planes 
can be done using two binary searches.
\end{lem}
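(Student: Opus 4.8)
The plan is to build the classical slab-based point-location structure of Dobkin and Lipton \cite{refpoint} for the arrangement of the $m$ lines (the duals of the convex-hull vertices), replacing every internal sort by the brute-force primitives of Section~2 so that the whole construction stays cache-oblivious and trivially parallelizable. The arrangement of $m$ lines has $\binom{m}{2}=O(m^2)$ vertices; dropping a vertical line through each vertex partitions the plane into $O(m^2)$ vertical \emph{slabs}, and inside a slab all $m$ lines keep a fixed vertical order, cutting the slab into $O(m)$ cells. Storing, for each slab, the $m$ lines in their vertical order yields a structure of size $O(m^3)$ on which a query is answered by one binary search on the $x$-coordinate (to locate the slab) followed by one binary search on the $y$-order within that slab --- exactly the two binary searches promised.

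First I would compute the $O(m^2)$ pairwise intersection points by distributing the $O(m^2)$ line-pairs evenly over the $p$ cores, each intersection costing $O(1)$; this is $O(m^2/p+\log p)$ time and $O(m^2/B)$ cache misses and is dominated by the later steps. Next I would sort these $O(m^2)$ abscissae to obtain the ordered slab boundaries: invoking the brute-force sort of Corollary \ref{sortlog} on $m^2$ keys gives $O(m^4/p+\log p)$ time and $O(m^4/B)$ cache misses, which already matches the claimed bound. The heart of the construction is the third step: for each of the $O(m^2)$ slabs I pick a representative abscissa strictly between its two boundaries, evaluate all $m$ lines there, and sort the resulting ordinates by brute force. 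Each such sort is an instance of Corollary \ref{sortlog} on $m$ keys, and the crucial detail is the processor allocation --- assigning $p_j=p/m^2$ cores to slab $j$ makes a single slab cost $O(m^2/p_j+\log p_j)=O(m^4/p+\log p)$ time, while the hypothesis of Corollary \ref{sortlog}, namely $m^2\geq B\,p_j$, is precisely the lemma's assumption $m^4\geq Bp$. Summed over all $O(m^2)$ slabs the total work is $O(m^4)$, hence $O(m^4/B)$ cache misses, and the $O(m^3)$-word output is laid out slab-by-slab in contiguous memory via Lemma \ref{compaction} so that both query binary searches scan contiguous blocks.

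The main obstacle I anticipate is not the arithmetic but the block-miss bookkeeping of the per-slab sorts in a cache-oblivious setting: I must certify that handing each slab its own block of $p/m^2$ cores operating on $m$ keys never produces concurrent writes to a shared block. This is exactly why the hypothesis reads $m^4\geq Bp$ rather than the weaker $m^2\geq Bp$ a lone sort would require --- it leaves each core at least $B$ private contiguous output locations, eliminating block misses just as in Lemma \ref{compaction} and Corollary \ref{sortlog}. A secondary point to verify is that the $O(m^2)$ representative abscissae can be selected and the $O(m^3)$ structure assembled within the same bounds; both follow from the contiguous read/write accounting of Section~2, since $O(m^3/B)=O(m^4/B)$ is absorbed into the dominant term.
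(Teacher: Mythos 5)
Your construction is essentially the paper's: sort the $O(m^2)$ abscissae of the arrangement vertices by the brute-force sort (this is where the hypothesis $m^4\geq Bp$ enters, exactly as in the paper's Step~1), form the $O(m^2)$ vertical slabs, hand each slab $p/m^2$ cores, and establish the vertical order of the $m$ lines inside each slab so that a query is two binary searches. Your per-slab accounting (condition $m^2\geq B\cdot p/m^2 \Leftrightarrow m^4\geq Bp$, total work $O(m^4)$) is the same allocation the paper uses, and in fact you are slightly more careful than the paper here: the paper only costs the computation of the line/slab-boundary intersections ($O(m^3)$ total work) and never explicitly charges for ordering the lines within a slab, whereas you cost it as an explicit brute-force sort per slab.

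The one substantive difference is that the paper's proof contains a fourth step you omit: for each of the $O(m^3)$ cells of the arrangement it picks a representative point, maps it back to a half-plane in primal space, and tests it against all sectors of the convex polygon by brute force, recording the start and end sector; this costs the full $O(m^4/p)$ time and $O(m^4/B)$ cache-miss budget and is why the bound is stated at $m^4$ rather than at the $O(m^4)$ sorting cost alone. With respect to the lemma's literal wording (``point location \ldots using two binary searches'') your proof is complete, but within the paper this cell-to-sector table is treated as part of the preprocessing (see the discussion preceding the lemma) and is what Corollary \ref{findsector} relies on when it claims that, after locating a query's region, the intersected sectors can be listed in constant time. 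If you intend your proof to slot into the paper's pipeline, you should add this step; it parallelizes trivially ($m^3/p$ cells per core, each scanned against $m$ sectors linearly) and fits in the stated bounds. A minor technical point: by invoking Corollary \ref{sortlog} on $m$ keys per slab you implicitly need $m\geq B$ to absorb the non-compacted write cost, a condition not implied by $m^4\geq Bp$; the paper's own accounting is equally loose (its Step~1 needs $m^2\geq B$), and both are harmless in the intended application where $m=n^{\epsilon}$ and $n\geq B^2p$.
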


\begin{proof}
The algorithm executes in the following steps :
\begin{enumerate}
 \item For $m$ half-planes we can have a maximum of $m^2$ intersection points.
Sort these $m^2$ points w.r.t. $x$ co-ordinate. Using $p$ cores, It will take $O\bigg(\frac{m^4}{p}+\log p\bigg)$ time 
and total $O(m^4/B+m)$ cache misses, given $m^4\geq Bp$ (Theorem \ref{sortlog2}).
\item From Step 1, we get $m^2$ intervals or say vertical slabs. 
In each slab there will be $n$ non-intersecting half-planes. We will
 order these half-planes according to $y$ coordinate. Allocate equal number of processors to each slab, so that every slab will get $p/m^2$ processors. 
\item For every vertical slab find intersection points of $m$ half-planes with both of its sides. This task can be done easily in $\lceil m^3/p\rceil $ time 
and $\lceil (m/B +p/m^2)\rceil $
cache misses by assigning $m^3/p$ half-planes to each processor for every slab. Total cache misses for all vertical slabs will be $(m^3/B+p)$
which is bounded by $O(m^4/B)$ if $m^4\geq Bp.$
\item All given $m$ half-planes has been divided into $m^3$ regions and as said earlier, all points which belongs to same region will intersect same set of sectors.
We just need to find the set of sectors, any region will intersect.
Take one point from every region, map it to
half-plane in primal space and test it against all sectors using some brute force algorithm
to get a set of sectors for every region but as this polygon is convex we will only save start and end sector.\\
This is done for all $m^3$ regions. Assign $m^3/p$ of these regions to every processor.
Every processor will search linearly through all sectors to find the set of sectors for every half-plane. It will take a total of $\lceil m^4/p\rceil $ time and
$\frac{m^3}{p} \times \frac{m}{B} \times p= \lceil m^4/B\rceil $ cache misses across all $p$ cores.    
\end{enumerate} 

\end{proof}

\begin{lem}\label{point} Given $n$ points and arrangement of $m$ half-planes, for every point we can identify the region it will lie in,
in $O(\frac{n}{p} \log n + \log n \log\log n)$ time and $O(\frac{n }{B}\log_M n)$ cache misses
using $p$ cores, given $n\geq$max$\{Mp, B^{2+4/x}p\}$,$n\geq m^{x+2}$ and $x\geq 2$. 
\end{lem}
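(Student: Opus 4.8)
The plan is to reduce the point-location query to the bucketing primitive of Theorem~\ref{dividemain} (equivalently, the multi-search of Lemma~\ref{multisearch}), exploiting the two-binary-search structure guaranteed by the preprocessing of Lemma~\ref{preprocess}. Recall that after preprocessing, the arrangement of the $m$ half-planes is organized into $m^2$ vertical slabs (ordered by the $x$-coordinates of the $\leq m^2$ intersection points), and inside each slab the $m$ half-planes are non-intersecting and hence linearly ordered. Locating a point therefore amounts to (i) finding the slab that contains it, a search among $\leq m^2$ sorted $x$-coordinates, followed by (ii) finding its position among the $m$ ordered half-planes of that slab. I would carry out these two searches as the two phases of the algorithm.

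\emph{First phase.} I would simultaneously route all $n$ query points to their slabs, i.e.\ partition the $n$ points into the $m^2$ buckets determined by the slab boundaries. This is exactly an instance of Theorem~\ref{dividemain} with bucket count $z = m^2$, whose preconditions are $z \leq \sqrt{n}$ and $n \geq \max\{Mp, B^{2/(1-\log_n z)}p\}$. Setting $m = n^{1/(x+2)}$ (the boundary of the hypothesis $n \geq m^{x+2}$) gives $z = m^2 = n^{2/(x+2)}$, so $\log_n z = 2/(x+2)$ and hence $2/(1-\log_n z) = 2 + 4/x$; moreover $z \leq \sqrt{n}$ reduces precisely to $x \geq 2$. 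Thus the hypotheses $n \geq m^{x+2}$, $x \geq 2$ and $n \geq \max\{Mp, B^{2+4/x}p\}$ of the present lemma are exactly what is needed to apply Theorem~\ref{dividemain} here, yielding the claimed $O(\frac{n}{p}\log n + \log n \log\log n)$ time and $O(\frac{n}{B}\log_M n)$ cache misses and leaving the points grouped by slab.

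\emph{Second phase.} Within each slab I would locate its points among the $m$ linearly ordered half-planes, again by Theorem~\ref{dividemain} but now with bucket count $m$ applied independently to each slab. Since $\sum_i n_i = n$ over the slabs and each slab's sorted set has size $m \leq m^2$, the condition for this phase is $n \geq B^{2/(1-\log_n m)}p$, which since $\log_n m = 1/(x+2)$ equals $B^{2(x+2)/(x+1)}p$ and is strictly weaker than the first-phase condition $n \geq B^{2+4/x}p$; it is therefore already implied. Allocating processors to slabs proportionally to their point counts (by a prefix-sum, as in Theorem~\ref{dividemain}) and summing the bounds over all slabs reproduces the same asymptotic time and cache cost. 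Concatenating the slab index from the first phase with the within-slab index from the second gives, for each point, the region of the arrangement containing it.

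The main obstacle will be the aggregation in the second phase: the slabs receive unequal numbers of points, so I must argue that the proportional processor allocation keeps every subproblem in its efficient regime ($n_i \geq Mp_i$) and that the per-slab cache costs sum to $O(\frac{n}{B}\log_M n)$ rather than accumulating an $m$-dependent additive overhead. This is precisely the load-balancing-plus-summation analysis already carried out for the recursive bucketing in Theorem~\ref{dividemain}, so I would reuse that machinery; the only point to check carefully is that the additive $O(\sqrt{nz})$-type merge terms collapse into the $O(n/B)$ budget, which requires $n \geq z B^2 = m^2 B^2$ — and indeed $n \geq m^{x+2}$ together with $n \geq B^{2+4/x}$ is exactly the rewriting of $n \geq m^2 B^2$ at $m = n^{1/(x+2)}$, so the hypotheses suffice.
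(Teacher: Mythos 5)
Your first phase is sound and coincides with the paper's: partitioning the $n$ points into the $m^2$ vertical slabs is a direct application of Theorem~\ref{dividemain} with $z=m^2$, and your computation that $\log_n z \leq 2/(x+2)$ turns the precondition into $n \geq B^{2+4/x}p$ (with $z\leq\sqrt{n}$ following from $n\geq m^4 \leq m^{x+2}$) matches the paper's step.

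The gap is in your second phase. You evaluate the precondition of Theorem~\ref{dividemain} using the \emph{global} $n$, writing it as $n \geq B^{2/(1-\log_n m)}p$; but when the theorem is invoked on slab $i$ with $n_i$ points and $p_i = n_i p/n$ processors, the exponent is $2/(1-\log_{n_i} m)$ — the base of the logarithm is the size of that invocation's input, not $n$. Since the slabs receive unequal numbers of points, this matters badly: for a slab with $n_i$ close to $m^2$ the exponent approaches $4 > 2+4/x$ (for $x>2$), so the hypothesis $n/p \geq B^{2+4/x}$ no longer covers the per-slab requirement; and for a slab with $n_i < m^2$ (e.g.\ a slab containing a handful of points) the precondition $z \leq \sqrt{n_i}$ of Theorem~\ref{dividemain} fails outright, so proportional allocation cannot even get started. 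This is exactly the difficulty the paper's proof is organized around: it splits the slabs into those with $s_i \geq m^x$, which are handled by proportional allocation (there $\log_{s_i} m \leq 1/x$, so the exponent is at most $2x/(x-1) \leq 2+4/x$ when $x\geq 2$), and those with $s_i < m^x$, which are \emph{padded} to size $m^x$ and each given a fixed allotment of $m^x p/n$ processors. There are at most $m^2$ small slabs, so the total processor demand is $m^{x+2}p/n \leq p$ and their total cache cost is $O(\frac{m^{x+2}}{B}\log_M m) = O(\frac{n}{B}\log_M n)$ — both bounds using precisely the hypothesis $n \geq m^{x+2}$, which in your write-up is used only as an algebraic boundary case in the first phase. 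Without this case split and padding, the "reuse the machinery of Theorem~\ref{dividemain}" step you defer to does not go through, because that machinery assumes its own precondition at every invocation, and the small slabs violate it.
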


\begin{proof} The entire procedure can be divided into the following steps :
 \begin{enumerate}
\item Given $m^4\geq Bp$, pre-process $m$ half-plane using Lemma \ref{preprocess} in $O(m^4/p+ \log p)$ time with total $O(m^4/B)$ cache misses.
 If $n\geq m^4$, time  and total cache misses are bounded by $O(n/p+ \log p)$ and $O(n/B)$ respectively.
After pre-processing, $m$ planes have been divided into $m^2$ vertical slabs and every vertical slab is further divided into $m$ horizontal slabs. 
Using this information, we will locate the region for every input point.
 \item Divide all the input points into these vertical slabs.\\
Using Theorem \ref{dividemain}, $n$ points can be divided into $m^2$
vertical slabs in $O(\frac{n}{p} \log n + \log n \log\log n)$ time and total $O(\frac{n }{B}\log_M n)$ cache misses
using $p$ cores, given $m^4\leq n$ and 
$n\geq$max$\{Mp,B^{2/(1 - 2\log_n m)}p\}$.
\item All $n$ points have been divided into $m^2$ sub-problems (vertical slabs) where 
the $i^{th}$ sub-problem contains $s_i$ points.
Divide every sub-problem again into $m$ sub-problems (horizontal slabs) using following steps :
\begin{enumerate}
\item First consider sub-problems with $s_i$ $\geq m^x.$ This $x$ will be chosen after analysis.
In worst case, all the sub-problems may have size $\geq m^x$. 
Assign $s_ip/n$ processors to $i^{th}$ problem according to size of this problem.
\item Processor allocation is done by doing prefix computation on size of these $m^2$ sizes of sub-problems.
Using $m^2p/n$ cores, prefix computation on $m^2$ keys can be done in O$(n/p+ \log p)$ time with total $O(m^2/B)$ cache misses, given
 $m^2\geq Mm^2p/n$ or $n\geq Mp$ (Theorem \ref{prefix}). We will have enough processors for this task if $m^2\leq n$.
\item Divide $s_i$ points into $m$
slabs in $O(\frac{n}{p} \log n + \log n \log\log n)$ time and total $O(\frac{s_i}{B}\log_M n)$ cache misses
using $s_ip/n$ cores, given $s_i\geq m^2$ and
$\frac{s_i}{s_ip/n} \geq $ max$\{M,B^{2/(1 - \log_{s_i} m)}\}$ or \\ $n/p \geq $max$\{M,B^{2/(1 - \log_{s_i} m)}\}$ (Theorem \ref{dividemain}).\\
In the worst case this step is done for all $m^2$ problems.
The total number of cache misses are $\displaystyle\sum_{i=1}^{{m^2}}\frac{s_i}{B}\log_M n = \frac{n}{B}\log_M n$ for all $m^2$ problems.
The condition $s_i\geq m^2$ is true if $x\geq 2$ because we are doing this 
step only for those problem who have $s_i\geq m^x$.  
For the next condition, we can see that $B^{2/(1 - \log_{s_i} m)}$ decreases
 as we increase $s_i$, so if condition $n/p \geq B^{2/(1 - \log_{s_i} m)}$ 
is true for minimum $s_i$ then it will
be true for all $s_i$. The minimum $s_i$ is $m^x$, which reduce the condition to $n/p \geq B^{2/(1 - \log_{(m^x)} m)}$ or $n/p \geq B^{2x/(x-1)}$, where $x\geq 2$.
\item Solve rest of the sub-problems with sizes
 $< m^x$. In the worst case there will be $m^2$ such problems and all of them 
will have size $m^x$.\\
Using Theorem \ref{dividemain}, we can divide $m^x$ ($x$ is a small constant) points into $m$
slabs in $O(\frac{n}{p} \log m + \log m \log\log m)$ time and total $O(\frac{m^x}{B}\log_M m)$ cache misses
using $m^xp/n$ cores, given $m^x\geq  m^2$ and $\frac{m^x}{m^xp/n} \geq B^{2/(1 - \log_{(m^x)} m)}$ or $n/p \geq $max$\{M, B^{2x/(x-1)}\}$.
We will need a total of $m^{x+2}p/n$ processors for $m^2$ problems. So we have enough processors for this task
 if $n\geq m^{x+2}$. 
The total cache misses for all $m^2$ problems will be $O(\frac{m^{x+2}}{B}\log_M m)$ which can be bounded by $O(\frac{n}{B}\log_M n)$ if $n\geq m^{x+2}.$
\end{enumerate}
\end{enumerate}
After combining all of the conditions mentioned in steps above,
we get $n\geq m^4$, $x\geq 2$, $n\geq m^{x+2}$, $n/p \geq $ max$\{M,B^{2x/(x-1)}\}$ 
and $n/p \geq $ max$\{M,B^{2/(1 - 2\log_n m)}\}$. As $x \geq 2$ so $n\geq m^{x+2}$ implies that $n\geq m^4$.\\
As $B^{2/(1 - \log_{n} m)}$ decreases as we increase $n$, so if condition $n/p \geq B^{2/(1 - \log_{n} m)}$ 
is true for minimum $n$ then it will
be true for all $n$. Minimum $n$ possible is $m^{x+2}$ so $n/p \geq B^{2/(1 - 2\log_{(m^{x+2})} m)}$ or $n/p \geq B^{2+4/x}$ is enough, where $x\geq 2$.
As $B^{2+4/x}$ $\geq B^{2x/(x-1)}$ so  $n/p \geq $ max$\{M,B^{2+4/x}\}$ is enough.

\end{proof}

\begin{cor}\label{findsector}
Given $n$ half-planes and a convex polygon divided into $m$ radial sectors, 
for every half-plane $h$, all the sectors intersected by $h$ can be identified 
in $O(\frac{n}{p} \log n + \log n \log\log n)$ time and $O(\frac{n }{B}\log_M n)$ cache misses
using $p$ cores, given $n\geq m^{x+2}$, $x\geq 2$ and 
$n\geq$ max$\{Mp,B^{2+4/x}p\}$. 
\end{cor}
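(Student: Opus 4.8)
The plan is to recognize this statement as essentially a primal-language restatement of Lemma~\ref{point}, combined with the primal--dual correspondence described in the preceding subsection (following \cite{ger}). First I would apply the standard duality transform that sends each of the $n$ half-planes to a point and sends each of the $m$ sector-bounding lines (the convex-polygon vertices, viewed in dual space) to a dual line. Computing the $n$ dual points is a constant-work-per-element operation over a contiguous array, so by the contiguous read/write accounting established at the start of Section~2 it costs $O(n/p)$ time and $O(n/B)$ cache misses, which is dominated by the target bounds.

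Next I would invoke the correspondence that two dual points lie in the same cell of the arrangement of the $m$ dual lines precisely when the two corresponding half-planes cross the same set of radial sectors. Under this correspondence, identifying the sectors intersected by a half-plane reduces exactly to locating its dual point within the arrangement of the $m$ dual lines. I would then feed the $n$ dual points together with the arrangement of the $m$ dual lines into Lemma~\ref{point}. Since the hypotheses of the corollary---$n\geq m^{x+2}$, $x\geq 2$, and $n\geq \max\{Mp,B^{2+4/x}p\}$---are exactly those required by Lemma~\ref{point}, that lemma immediately produces the cell containing each dual point within $O(\frac{n}{p}\log n + \log n\log\log n)$ time and $O(\frac{n}{B}\log_M n)$ cache misses.

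Finally, I would convert each located cell into the list of intersected sectors. Recall that the preprocessing of Lemma~\ref{preprocess} has already stored, for every region of the arrangement, the start sector and the end sector it intersects; by convexity of the polygon, a half-plane reaching both the start and end sector necessarily crosses every sector in between, so the full answer is the contiguous run of indices between those two values and can be reported in time proportional to its length. Because each half-plane's output is a contiguous range, this reporting step adds nothing beyond the contiguous-access bound $O(n/B)$ already subsumed in the target.

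The part that needs the most care is not the asymptotics---those are inherited wholesale from Lemma~\ref{point}---but justifying the reduction itself: that the cell in the dual arrangement determines \emph{exactly} the set of crossed sectors, and that the start/end encoding faithfully captures this set. This is where I would lean on the geometric argument of \cite{ger}; once the reduction is granted, the multicore time and cache accounting is routine given the earlier lemmas.
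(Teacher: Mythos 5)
Your proposal matches the paper's own proof essentially exactly: the paper also maps the $n$ half-planes to dual points and the polygon's vertices to dual lines, invokes Lemma~\ref{point} (whose hypotheses coincide with the corollary's) for the point location, and uses the start/end-sector encoding from the preprocessing of Lemma~\ref{preprocess} to recover the contiguous run of intersected sectors. Your additional accounting for the cost of the dual transform is a harmless elaboration the paper leaves implicit.
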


\begin{proof}
We get this result simply by combining Lemma \ref{preprocess} and Lemma 
\ref{point}.
Map $n$ half-planes to $n$ points in dual space and $m$ vertices's of convex polygon to $m$ half-planes.\\
Using Lemma \ref{point},given $n$ points and $m$ half-planes, for every point we can identify the region it lies in,
 in $O(\frac{n}{p} \log n + \log n \log\log n)$ time and $O(\frac{n }{B}\log_M n)$ cache misses
using $p$ cores, given $n\geq m^{x+2}$, $x\geq 2$ and 
$n\geq$max$\{Mp,B^{2+4/x}p\}$. \\
Because of preprocessing done on $m$ half-planes in Lemma \ref{point}, any region can be directly mapped to its set of sectors.
\end{proof}

\begin{lem}\label{dividehull}
Given $n$ half planes, and a convex polygon divided between $m$ radial 
sectors, for each sector $C_i$ where $ i \leq m$, we want to identify
all the half-planes intersecting $C_i$. If the cumulative 
output size is $O(n)$, this task can be done using $p$ processors
in expected time $O(\frac{n}{p} \log n + \log n \log\log n)$ with expected $O(\frac{n }{B}\log_M n)$ cache misses, 
given $n\geq m^{x+2}$, $x\geq 2$, $M\geq B^2B^{2/31}$ and $n\geq$max$\{Mp,B^{2+4/x}p\}.$
\end{lem}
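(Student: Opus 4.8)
The plan is to observe that this lemma is exactly the \emph{transpose} of Corollary \ref{findsector}: that corollary reports, for each half-plane $h$, the contiguous range of sectors that $h$ intersects, whereas here we want, for each sector $C_i$, the list of half-planes intersecting it. So I would first run Corollary \ref{findsector} on the $n$ half-planes and the $m$-sector polygon. Since the given conditions $n \geq m^{x+2}$, $x \geq 2$ and $n \geq \max\{Mp, B^{2+4/x}p\}$ are precisely the hypotheses of Corollary \ref{findsector}, this step costs $O(\frac{n}{p}\log n + \log n \log\log n)$ time and $O(\frac{n}{B}\log_M n)$ cache misses, and leaves each half-plane labelled with a start sector $s_h$ and an end sector $e_h$ (a circular interval, by convexity of the polygon).

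Next I would materialise the explicit list of incidence pairs $(\mbox{sector}, \mbox{half-plane})$. Each half-plane $h$ contributes $\ell_h$ pairs, where $\ell_h$ is the length of its sector interval (splitting into at most two linear runs when the interval wraps around from sector $m$ to sector $1$). I would compute the prefix sums of the $\ell_h$ using Lemma \ref{prefix} so that every half-plane learns the offset at which it must write, and then have each processor scatter its pairs into contiguous output locations. Because the cumulative output size is assumed to be $O(n)$, the prefix computation and the contiguous writes both run in $O(\frac{n}{p} + \log p)$ time with $O(\frac{n}{B})$ cache misses, with block misses avoided exactly as in Lemma \ref{compaction} (each processor writes a contiguous run of at least $B$ pairs once $n \geq Bp$).

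It then remains to group the $O(n)$ incidence pairs by their sector index, which produces the desired per-sector lists. Since $n \geq m^{x+2}$ with $x \geq 2$ forces $m \leq n^{1/4} \leq \sqrt{n}$, I can feed these pairs (keyed by sector index in $\{1,\dots,m\}$) to the partitioning routine of Theorem \ref{dividemain} with the $m$ sector boundaries as splitters; one checks that $\log_n m \leq 1/(x+2)$ gives $\frac{2}{1-\log_n m} \leq 2 + \frac{2}{x+1} \leq 2 + \frac{4}{x}$, so the hypothesis $n \geq B^{2+4/x}p$ already supplies the condition $n \geq B^{2/(1-\log_n m)}p$ that Theorem \ref{dividemain} needs, and the grouping costs $O(\frac{n}{p}\log n + \log n\log\log n)$ time and $O(\frac{n}{B}\log_M n)$ cache misses. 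Alternatively, the grouping can be performed by the randomized sort of Theorem \ref{reischuk}, which is what accounts for the \emph{expected} bounds and the extra assumption $M \geq B^2B^{2/31}$ carried in the statement. Summing the three phases, both dominated by Corollary \ref{findsector} and the grouping step, yields the claimed bounds.

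I expect the main obstacle to be the bookkeeping of the variable-size expansion: different half-planes produce different numbers of incidence pairs, so the scatter in the second phase must be driven entirely by precomputed prefix offsets, and the circular wrap-around of the sector intervals must be handled without inflating the output beyond the assumed $O(n)$. Verifying that all the sub-routine preconditions (namely $m \leq \sqrt{n}$, the exponent $B^{2+4/x}$ dominating $B^{2/(1-\log_n m)}$, and $n \geq Mp$) are simultaneously implied by the lemma's hypotheses is the routine but essential glue that ties the three phases together.
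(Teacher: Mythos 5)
Your proposal follows essentially the same route as the paper: run Corollary \ref{findsector} to get each half-plane's sector interval, expand into $O(n)$ labelled copies using prefix sums to fix write offsets (the paper does this with per-core local sums followed by a prefix computation over the $p$ partial sums, a cosmetic difference from your per-half-plane prefix), and then group the copies by sector label. For the grouping, the paper uses exactly your ``alternative'': it sorts the $O(n)$ labelled copies with the randomized sort of Theorem \ref{reischuk}, which is indeed why the lemma carries expected bounds and the hypothesis $M \geq B^2B^{2/31}$; your primary suggestion of invoking Theorem \ref{dividemain} with the $m$ sector indices as splitters also works (your verification that $2/(1-\log_n m) \leq 2+4/x$ follows from $n \geq m^{x+2}$ is correct) and would even make the grouping step deterministic, so your reading of where the randomness enters is accurate.
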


\begin{proof} We can divide this task into following steps :
\begin{enumerate}
 \item Using Lemma \ref{findsector}, for every half-plane identify all 
the sectors this half-plane will intersect
 in $O(\frac{n}{p} \log n + \log n \log\log n)$ time and $O(\frac{n }{B}\log_M n)$ cache misses
using $p$ cores, given $n\geq m^{x+2}$, $x\geq 2$ and 
$n\geq$ max$\{Mp,B^{2+4/x}p\}$. 
\item For every half-plane $K_i$ (where $1\leq i \leq n$) if it 
intersect with sectors $S_j$,$S_{j+1}$,$S_{j+2}$,\ldots,$S_k$ (where $j\leq k$ and both $j$ and $k$ are $\leq m$), it means there will be 
$t_i=(k-j+1)$ copies of this half-plane in output list. So we need to write  $t_i$ copies of
$K_i$ and assign them number $j,j+1,j+2,\ldots,k$ respectively. There will be total $\displaystyle\sum_{i=1}^{{z}}t_i$ elements in output which is 
bounded by $an$ (for some constant as given). Sort this output list w.r.t. this assigned number.\\
 This step is explained in detail in following steps :
\begin{enumerate}
\item Assign $n/p$ half-planes to every core so that every core will find $t_i$ 
for all of its $n/p$ half-planes
 and add them. 
To add $n/p$ elements, one processor will take $O(n/p)$ time and $O(n/pB)$ cache misses. The total cache misses across all cores will be $n/B.$
\item Every core will write its $n/p$ half-planes in output list as explained above.
 To find the location where every core needs to write we will do prefix computation on these $p$ sums found in step above.
  Using $p^2/n$ cores, prefix computation on $p$ keys can be done in O$(n/p+ \log p)$ time with total $O(p/B)$ cache misses given $p\geq Mp^2/n$ 
 or $n\geq Mp$ (Lemma \ref{prefix} ).
\item Sort these $an$ elements using Theorem \ref{reischuk} w.r.t. number assigned to them above to get final output half-planes divided 
them into $m$ sectors.  
It will take $O(\frac{an}{p} \log n + \log n \log\log n)$ expected time and $O(\frac{an }{B}\log_M n)$ expected cache misses, 
given $an\geq Mp$ and $M\geq B^2B^{2/31}$ (for some constant $a$).
\end{enumerate}
\end{enumerate}
\end{proof}

\subsection{Sampling and polling}
To bound the total size of the sub-problems we need to find a good sample. 
A good sample means when we divide our
$n$ half-planes into disjoints sub-problems using this sample total size of sub-problems will be $\leq cn$ and size of maximum sub-problem will be $\leq 2n^{1-\epsilon}\log n$,
where $\epsilon$ is number of sub-problems.
We will find this good sample using sampling and polling as shown in \cite{ger}. First we choose $\log n$ samples each of size $n^\epsilon$ and a 
random set of $n/\log^4 n$ 
input half-planes for all of these samples. Then for every sample we divide its input sample into disjoint problems using this sample. 
Then based on total problem sizes of
all these samples, we choose a 
good sample from those $\log n$ samples \cite{ger}.

\begin{lem}\label{samplingsingle}
Using one processor, $x$ elements can be sampled from $n$ elements in $O(x\log x+n)$ time with 
total $O(\frac{n}{B}+\frac{x }{B}\log x)$ cache misses, 
using $x$ independent trials such that the probability 
of an element being chosen is same in each trial.
\end{lem}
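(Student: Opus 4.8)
The plan is to reduce the random sampling to a single contiguous scan of the input by first sorting the sampled positions. I would generate the sample as follows. First, draw $x$ indices $r_1,\ldots,r_x$ from $\{1,\ldots,n\}$, where each $r_j$ is chosen uniformly and independently; this realizes the $x$ independent trials in which every element has probability $1/n$ of being picked in each trial, so the required distributional property is immediate and is unaffected by any later reordering of the indices. Generating and storing these indices in a contiguous array takes $O(x)$ time and $O(x/B)$ cache misses.

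The key step is to avoid reading the $x$ sampled positions by direct (random) access. Instead I would sort the array $r_1,\ldots,r_x$ into nondecreasing order. Using a simple two-way merge sort, which makes $O(\log x)$ passes and in each pass reads and writes the $x$ indices sequentially, this costs $O(x\log x)$ time and $O(\frac{x}{B}\log x)$ cache misses. After sorting, I would make one sequential left-to-right pass over the $n$ input elements while simultaneously advancing a pointer through the sorted index array; whenever the current input position equals the next index in the sorted list, I copy that element to a contiguous output array (repeating whenever an index occurs with multiplicity, which handles collisions among the trials). This pass reads the input contiguously and reads the sorted indices and writes the output contiguously, so it costs $O(n)$ time and $O(n/B + x/B)$ cache misses.

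Adding the three phases gives total time $O(x) + O(x\log x) + O(n) = O(x\log x + n)$ and total cache cost $O(x/B) + O(\frac{x}{B}\log x) + O(n/B + x/B) = O(\frac{n}{B} + \frac{x}{B}\log x)$, as claimed.

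I expect the main obstacle to be exactly the cache accounting for retrieving the sampled elements: a direct implementation that reads each $r_j$ on demand incurs one cache miss per access in the worst case, i.e.\ $\Theta(x)$ misses, which is not within the stated bound once $x$ exceeds $n/B$. Sorting the indices converts this scattered access pattern into a single sequential scan whose cost is the optimal $O(n/B)$, at the price of the $O(\frac{x}{B}\log x)$ sorting term; verifying that this sequential scan touches each input block only once, so that the worst-case LRU behaviour cannot inflate the $n/B$ term, is the crux of the argument.
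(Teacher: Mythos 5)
Your proposal is correct and follows essentially the same route as the paper: generate the $x$ random indices sequentially, sort them in $O(x\log x)$ time and $O(\frac{x}{B}\log x)$ cache misses, then perform a single synchronized sequential scan of the input and the sorted index list to extract the sample. Your version is in fact slightly more careful than the paper's, since you specify the multi-pass merge sort explicitly and handle repeated indices, but the decomposition and the accounting are identical.
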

\begin{proof} We proceed as follows :
 \begin{enumerate}
 \item Processor will generate $x$ random numbers in range $[1,n]$ and save them sequentially in an array of length $x$. It will take $x$ time and incur 
$x/B$ cache misses.
 \item Sort these $x$ keys in time $O(x\log x)$ with 
$O(\frac{x }{B}\log x)$ cache misses using one processor.
\item Read these $x$ numbers and input keys both from start and select those number whose rank is present
in these $x$ numbers. It will take total $(x+n)$ time and $(x+n)/B$ cache misses.
\end{enumerate}

\end{proof}

\begin{thm}\label{polling}
Using polling, a good sample of size $n^\epsilon$ can be chosen from $n$ half-planes in expected
$O(\frac{n}{p} \log n + \log n \log\log n)$ time with expected $O(\frac{n }{B}\log_M n)$ cache misses, using $p$ cores, 
given $\epsilon \leq 1/8$ and $n\geq$max$\{Mp,B^{2/(1 - 4 \epsilon)}p\}$ and $M\geq B^2B^{2/31}$.
\end{thm}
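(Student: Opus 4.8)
The plan is to realize the sampling-and-polling scheme of \cite{ger,3dhull}, importing its probabilistic guarantee as a black box and focusing on executing the selection within the stated resource bounds. First I would draw, in parallel, $\log n$ candidate samples $S_1,\dots,S_{\log n}$, each of size $n^{\epsilon}$, together with a single random \emph{test set} $R$ of size $n/\log^4 n$ drawn from the $n$ input half-planes. Each object is produced by the trial-based routine of Lemma~\ref{samplingsingle}; generating one sample of size $n^{\epsilon}$ and drawing the test set once are cheap relative to what follows. The correctness claim---that with probability $1-n^{-\Omega(1)}$ some candidate splits the \emph{whole} input into sub-problems of cumulative size $O(n)$ and maximum size $\le 2n^{1-\epsilon}\log n$, and that the candidate minimizing the load it induces on $R$ is exactly such a \emph{good} sample---is the polling analysis of \cite{ger}, which I would cite rather than re-derive.

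The substance of the proof is the \textbf{scoring phase}. I would partition the $p$ cores into $\log n$ disjoint groups, one per candidate, and score all candidates \emph{simultaneously}. For a candidate $S_j$ its group would (i) compute the intersection of its $n^{\epsilon}$ half-planes by the brute-force routine of Lemma~\ref{cbrute}, whose $O(n^{3\epsilon}/B)$ cache misses are $O(n/B)$ since $\epsilon\le 1/8<1/3$; (ii) split the resulting polygon into $m\le n^{\epsilon}$ radial sectors about the interior point returned in step~(i), a step whose work is linear in the number of hull vertices; and (iii) run the sector-identification routine of Corollary~\ref{findsector} on the \emph{test set} $R$ (not on the full input), then add the per-half-plane sector counts by Lemma~\ref{sum} to obtain the load that $S_j$ would induce. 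Setting $x=\tfrac{1}{2\epsilon}-2$ as in Theorem~\ref{mainhull} makes $m^{x+2}=n^{\epsilon(x+2)}=n^{1/2}$, so the hypothesis $|R|\ge m^{x+2}$ of Corollary~\ref{findsector} reads $n/\log^4 n\ge n^{1/2}$ (true for large $n$), while $B^{2+4/x}=B^{2/(1-4\epsilon)}$ collapses the processor bound to the stated $n\ge B^{2/(1-4\epsilon)}p$.

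The calibration $|R|=n/\log^4 n$ is what makes the accounting close. Because the groups run in parallel, the parallel time is just the per-candidate cost $O(\tfrac{n}{p}\log n+\log n\log\log n)$ and is \emph{not} multiplied by $\log n$; here I would use $M\le n/p$ to absorb the brute-force and division terms. The cache misses, on the other hand, are summed over candidates, and the $1/\log^4 n$ shrinkage of the test set dominates the factor $\log n$:
\begin{equation*}
\log n\cdot O\!\left(\frac{|R|}{B}\log_M|R|\right)=O\!\left(\frac{n}{B\log^3 n}\log_M n\right)=O\!\left(\frac{n}{B}\log_M n\right).
\end{equation*}
Finally I would pick the minimum-load candidate by a reduction over the $\log n$ scores (Lemma~\ref{prefix}), which is negligible.

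The main obstacle is the processor-and-condition bookkeeping of step~(iii) once $n$ is replaced by the much smaller $|R|$ inside a group of only $p/\log n$ cores: the subroutine hypotheses then acquire spurious $\mathrm{poly}\log n$ factors---for instance $|R|\ge M(p/\log n)$ and $|R|\ge B^{2/(1-4\epsilon)}(p/\log n)$ become $n\ge Mp\log^3 n$ and $n\ge B^{2/(1-4\epsilon)}p\log^3 n$---which must be reconciled with the polylog-free hypotheses $n\ge Mp$ and $n\ge B^{2/(1-4\epsilon)}p$, either using the slack afforded by $M\ge B^2B^{2/31}$ and large $n$, or by allocating each group $|R|/M$ cores rather than an equal share and checking the $\log n$ groups together use at most $p$ cores. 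Verifying that a single admissible $\epsilon$ simultaneously discharges $\epsilon\le 1/8$, the $B^{2/(1-4\epsilon)}p$ bound, and $M\ge B^2B^{2/31}$ is the delicate part; the probabilistic guarantee itself is inherited verbatim from \cite{ger,3dhull}.
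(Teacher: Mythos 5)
Your proposal follows the paper's own proof almost step for step: $\log n$ candidate samples, brute-force hulls via Lemma~\ref{cbrute}, polling against test sets of size $n/\log^4 n$ whose $1/\log^4 n$ shrinkage beats the factor $\log n$ in the cache-miss sum, the choice $x=\frac{1}{2\epsilon}-2$ collapsing $B^{2+4/x}$ to $B^{2/(1-4\epsilon)}$, and a final reduction over the $\log n$ scores. One deviation is harmless: you poll all candidates against a single shared test set $R$, whereas the paper draws a separate test set per candidate (each core locally samples $\frac{n}{p\log^3 n}$ of its $n/p$ elements by Lemma~\ref{samplingsingle}, tags each with a random label in $[1,\log n]$, and splits the chosen elements into $\log n$ test sets via Theorem~\ref{dividemain}). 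Sharing $R$ costs nothing in resources, though the guarantee you import as a black box is stated for independent per-candidate polls, so you owe a one-line union-bound remark that a common $R$, independent of the candidates, still certifies a good sample.

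The genuine gap is exactly the point you flag as ``the delicate part'' and leave unresolved: the per-group processor allocation, and neither of your two suggested repairs works. Giving each group an equal share $p/\log n$ of the cores makes the hypotheses of Corollary~\ref{findsector} read $n \geq Mp\log^3 n$ and $n \geq B^{2/(1-4\epsilon)}p\log^3 n$, which do \emph{not} follow from the theorem's hypotheses: when $n = Mp$ exactly there is no slack whatsoever, no matter how large $n$ is or what $M\geq B^2B^{2/31}$ gives you. And allocating $|R|/M$ cores per group uses $\log n\cdot |R|/M = n/(M\log^3 n)$ cores in total, which exceeds $p$ precisely when $p \leq n/(M\log^3 n)$, i.e.\ for every sufficiently small $p$, so the check you propose fails in general. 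The correct resolution, and the one the paper uses, is proportional allocation: give each candidate $p\,|R|/n = p/\log^4 n$ cores, so that the size-to-core ratio inside every group equals the global $n/p$. Then the conditions of Corollary~\ref{findsector} become exactly $n \geq \max\{Mp,\, B^{2/(1-4\epsilon)}p\}$, the groups together use only $p/\log^3 n \leq p$ cores, and the per-group time is $O\bigl(\tfrac{|R|}{p/\log^4 n}\log |R| + \log |R| \log\log |R|\bigr) = O\bigl(\tfrac{n}{p}\log n + \log n\log\log n\bigr)$ as required. This is the same ``keep $n/p$ invariant'' discipline the paper enforces throughout (Theorems~\ref{reischuk} and~\ref{dividemain}); once you adopt it, your argument closes.
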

\begin{proof} We can divide the whole task into following steps :
\begin{enumerate}
\item Using $p$ cores, choose $\log n$ samples each of size $n^\epsilon$ from given $n$ half-planes randomly, where $0< \epsilon <1$.
For every sample, the probability for choosing any element is equal.
Imagine that the input has been divided into
$n^\epsilon$ contiguous chunks all of equal size and choose one key from every chunk.\\
 To choose one sample using $p/\log n$ cores, it will take 
$\lceil \log n$. $n^\epsilon/p\rceil  $ parallel steps and one cache miss for every chosen key implying total $n^\epsilon$ cache misses.
For all $\log n$ samples, total cache misses will be $O(\log n$ $n^{\epsilon})$.  The time is bounded by $O(\frac{n}{p}\log n)$ if $\epsilon \leq 1$
and the cache misses are bounded by 
$\leq \frac{n}{B}\log_M n$ if $n^{1-\epsilon}\geq B\log M$ 
(which is true if $n\geq B^{2/(1-\epsilon)}$).
\item Find intersection of half-planes for every sample.\\
Using $pn^{3\epsilon} /n \log n$ cores, intersection of $n^{\epsilon}$ half-planes can be computed in $O(\frac{n}{p}\log n+\log p)$  time 
with total $O(n^{3\epsilon}/B)$ cache misses, if $n^{3\epsilon}\geq Mpn^{3\epsilon}/n \log n$ or $n \log n\geq Mp$ (Lemma \ref{cbrute}). \\
The total cache misses for $\log n$ samples will be $O(\frac{n^{3\epsilon}}{B}\log n)$ which is bounded by $O(\frac{n}{B}\log_M n)$ 
if $n^{3\epsilon}\log M\leq n$ or $\epsilon \leq 1/6.$
The total processors needed are $pn^{3\epsilon}/n$ which is $\leq p$ if $\epsilon \leq 1/3.$ 
\item We have to choose a random subset of $n/\log^4 n$ planes for 
every sample. If we use the same method as used in Step 1, it will lead to total $O(n/\log^3 n)$ cache misses.\\
 To reduce cache misses we use the following strategy :
\begin{enumerate}
 \item We have to choose total $n/\log^3 n$ elements. Every core is assigned $n/p$ half-planes and gets a task to choose $\frac{n}{p \log^3 n}$
 elements from these $n/p$ elements.
 \item Using Lemma \ref{samplingsingle}, Every core will sample $\frac{n}{p \log^3 n}$ elements from $n/p$ elements in $O(n/p)$ time 
 and $O(n/pB)$ cache misses. Total cache misses across all cores will be $O(n/B)$. 
 \item Every core assigns a random number in the 
range $1$ to $\log n$ to all of these $\frac{n}{p \log^3 n}$ half-planes.
 \item Divide these $n$ half-planes into $\log n$ problems w.r.t. 
this assigned number using Theorem \ref{dividemain}.
Using $p$ processors
it will take $O(\frac{n}{p} \log n + \log n \log\log n)$ time and total $O(\frac{n }{B}\log_M n)$ cache misses, given 
$n\geq$ max$\{Mp,B^{2/(1 - \log_n \log n)}p\}$ and $\log n\leq \sqrt{n}.$ 
\end{enumerate}

\item To find a good sample what we need is that 
every sample should divide its input sample into disjoint sub-problems 
and based on the total size
of those sub-problems, we can pick a good sample from these.
\item Using intersection found in step 1, every sample will divide its input sample (chosen in step 3) into disjoint problems.\\
Given $n/\log^4 n$ half-planes (chosen in step 3) and a convex polygon divided into $n^\epsilon$ sectors (found in step 1),
for every half-plane we can find all sectors this half-plane will intersect
 in $O(\frac{n}{p} \log n + \log n \log\log n)$ time and $O(\frac{n }{B\log^4 n}\log_M n)$ cache misses
using $p/\log^4 n$ cores, given $\frac{n}{\log^4 n}\geq n^{\epsilon(x+2)}$, $x\geq 2$ and 
$n\geq$ max$\{Mp,B^{2+4/x}p\}$ (Corollary \ref{findsector}.) \\
We get an output list of size $n/\log^4 n$. To find total size of sub-problems we need to just add these $n/\log^4 n$ elements using $p/\log^4 n$
processors. It takes time O$(n/p+ \log p)$ with a total of 
$O(\frac{n }{B\log^4 n})$ cache misses, given $n\geq Bp$.
As both tasks are done for $\log n$ samples, so the total cache misses 
across all cores will be $O(\frac{n}{B \log^3 n}\log_M n)$ which is $\leq O(n/B)$
\item For every sample we have found the total size of the 
sub-problems. After finding the maximum size we have enough information 
to select a good sample.
 Using $p\log n/n$ cores, maximum of $\log n$ keys can be found in time O$(n/p+ \log p)$ with total $O(\log n/B)$ cache misses given
 $\log n\geq Bp\log n/n$ or $n\geq Bp$.
\end{enumerate} 
All of the above conditions can be combine as $n\geq n^{\epsilon(x+2)}\log^4 n$, $x\geq 2$ and 
$n\geq$ max$\{Mp,B^{2+4/x}p\}$
 $M\geq B^2B^{2/31}$, $\epsilon \leq 1/6,$ $n\geq B^{2/(1-\epsilon)},$ $n\geq$ max$\{Mp,B^{2/(1 - \log_n \log n)}p\}$. \\
By choosing $x=(\frac{1}{2\epsilon}-2)$, we obtain
$n\geq B^{2/(1-4\epsilon)}$, $\epsilon \leq 1/8,$  $M\geq B^2B^{2/31}$

\end{proof}

\subsection{Filtering}\label{ll}
The total size of sub-problems after obtaining a good sample is $O(n)$. 
We do some further pre-processing to reduce this size to exact size of parent problem.
This step is a kind of post-processing step after sampling.  
\begin{figure}[ht]
\begin{center}
\includegraphics[scale=0.60]{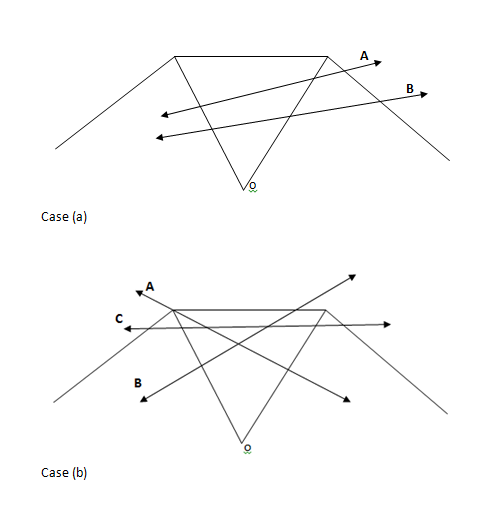}
\end{center} 
\caption{(a) $B$ dominates $A$\hspace{20pt}  (b) No half-plane is dominated}
\label{dominate1}
\end{figure}
For every sector we identify half-planes which intersect it. Some of them may be part of the
output while some of them may not show up in output (in that sector).
Since the output size is bounded by input size so our objective is to eliminate
all those half-planes from a sector which do not show up in output. This task can be reduced \cite{ger} to sorting and maximal points problem.\\
As shown in figure \ref{dominate1}(a) we can say that half-plane $A$ is dominated by $B$ through-out the sector
so it  would not be part of output in this sector. In filtering step we will remove all half-planes of these kind. After filtering there would not be any 
half-plane left in any sector such that it is dominated by any other half-plane as shown in figure \ref{dominate1}(b).\\
\textbf{ Filtering :} We are given some sectors and some half-planes. From these half-planes we will remove most of those planes which would not be part of 
output (intersection of given half-planes). If output size if $x$ 
(intersection of given half-planes),
then after this \textit{filtering} we will have utmost $2x$ half-planes left. 
\begin{thm}\label{filtering1}
Given a sector, $n$ half-planes can be filtered in expected time $O(\frac{n}{p} \log n + \log n \log\log n)$
with a total of $O(\frac{n }{B}\log_M n)$ expected cache misses , given $n\geq Mp$ and $M\geq B^2B^{2/31}$.
\end{thm}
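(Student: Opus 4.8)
The plan is to reduce the filtering inside a single sector to the two-dimensional maxima problem and then invoke Lemma \ref{cmaximal2}. First I would fix coordinates adapted to the sector: let $O$ be its apex and let $\ell_1,\ell_2$ be the two bounding rays. For each of the $n$ half-planes $h$ crossing the wedge I would compute two scalars $a_1(h),a_2(h)$, namely the signed distances from $O$ at which the line bounding $h$ meets $\ell_1$ and $\ell_2$. This is an $O(1)$ local computation per half-plane, so assigning $n/p$ contiguous half-planes to each core it costs $O(n/p)$ time and $O(n/B)$ cache misses (no block misses, since every core reads and writes a contiguous run of size $\ge B$ under the hypothesis $n\ge Mp$). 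The result is a set of $n$ planar points $(a_1(h),a_2(h))$.

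The key observation making the reduction correct is that any two bounding lines cross at most once, so the relative order of two half-planes inside the angular wedge flips at most once; hence a half-plane $A$ is dominated by a single half-plane $B$ \emph{throughout} the sector (Figure \ref{dominate1}(a)) if and only if $B$ dominates $A$ at both bounding rays, i.e. $a_1(B)\le a_1(A)$ and $a_2(B)\le a_2(A)$. Consequently $A$ survives filtering exactly when $(a_1(A),a_2(A))$ is non-dominated in the coordinatewise sense, and every plane that actually appears on the output boundary inside the sector is non-dominated (at the point where it is closest to $O$, no single $B$ beats it), so the true output is contained in the surviving set.

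Having made this reduction, I would invoke the parallel maxima routine of Lemma \ref{cmaximal2} on these $n$ points, after negating both coordinates so that the minimal planes we wish to keep become the \emph{maximal} points in the sense of that lemma. Under $n\ge Mp$ and $M\ge B^2B^{2/31}$ it runs in expected $O(\frac{n}{p}\log n+\log n\log\log n)$ time with expected $O(\frac{n}{B}\log_M n)$ cache misses, which are exactly the bounds claimed; moreover its final step already consolidates the surviving planes into contiguous memory, so no separate compaction is needed. The guarantee that at most $2x$ half-planes remain, where $x$ is the number of planes on the output boundary within this sector, is the geometric property established in \cite{ger}, which I would cite directly rather than reprove.

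The main obstacle is the geometric correctness rather than the cost accounting: one must verify carefully that ``dominated throughout the sector'' collapses to coordinatewise dominance of the two ray-intercepts (this is where the single-crossing property of two lines is used), and that degeneracies — a plane entering or leaving the wedge through its apex, or two planes meeting exactly on a bounding ray — are resolved so that no boundary plane is ever discarded. Once that reduction is pinned down, the time and cache bounds follow immediately from Lemma \ref{cmaximal2} and the $2x$ size guarantee is inherited from \cite{ger}.
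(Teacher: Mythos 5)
Your proposal is correct and follows essentially the same route as the paper: both reduce filtering within a sector to a two-dimensional maxima computation on the two intercepts of each half-plane's bounding line with the sector's sides (justified by the single-crossing property of lines), and both then invoke Lemma \ref{cmaximal2} to obtain the claimed time and cache bounds. The only cosmetic difference is that the paper first sorts the half-planes along each side (via Theorem \ref{reischuk}) and runs the maxima computation on rank tuples, whereas you feed the raw (negated) intercepts directly into the maxima routine; since dominance is invariant under monotone relabeling of coordinates and Lemma \ref{cmaximal2} begins by sorting anyway, this changes neither correctness nor the bounds.
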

\begin{proof} This task can be divided into the following steps :
\begin{enumerate}
 \item Consider one side of  a sector and find intersection points of half-planes with this side. This can be done in $\lceil n/p\rceil $ time and $\lceil n/B\rceil $
cache misses by assigning $n/p$ half-planes to each processor. 
\item Sort these half-planes w.r.t. distance of intersection point found above
 from center of sector. Using Theorem \ref{reischuk}
it will take expected time $O(\frac{n}{p} \log n + \log n \log\log n)$ and expected $O(\frac{n }{B}\log_M n)$ cache misses, 
given $n\geq Mp$ and $M\geq B^2B^{2/31}$.
\item Repeat steps 1 and 2 for other side of sector.
\item We get two sorted lists say $x$ and $y$. For half-plane 
$n_i$ consider a tuple of the form $x_i,y_i$, where $(x_i,y_i)$ are rank of these half-planes
in these sorted lists. To find this tuple we can do the following -  
In list $x$ write rank on every half-plane and then sort this list w.r.t. half-plane
number. We get a list of ranks for all half-planes.
\item We have a tuple $(x_i,y_i)$ for every half-plane. 
A half-plane $n_i$ is completely occluded by another half-plane $n_j$ if $x_j \geq x_i$
and $y_j \geq y_i$. In that case half-plane $n_i$ would not be part of output. Find maximal points on this tuple.
\item Using Theorem \ref{cmaximal2}, given  $n$  points, we can find the maximal points in expected time $O(\frac{n}{p} \log n + \log n \log\log n)$
with a total of $O(\frac{n }{B}\log_M n)$ expected cache misses using $p$ cores, given $n\geq Mp$ and $M\geq B^2B^{2/31}$. 
\end{enumerate}
\end{proof}

\begin{thm}\label{filtering2}
Given $n^\epsilon$ sectors, we can filter $O(n)$ half-planes in expected time $O(\frac{n}{p} \log n + \log n \log\log n)$
with a total of $O(\frac{n }{B}\log_M n)$ expected cache misses , given $n\geq Mp$ and $M\geq B^2B^{2/31}$.
\end{thm}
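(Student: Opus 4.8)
The plan is to exploit the fact that filtering is entirely local to each sector, so the $n^\epsilon$ sectors can be filtered independently and in parallel; the whole argument then reduces to distributing the $p$ processors among the sectors and invoking the single-sector result of Theorem~\ref{filtering1}. Let $s_i$ denote the number of half-plane copies assigned to the $i$-th sector, so that $\sum_{i=1}^{n^\epsilon} s_i = O(n)$ by the cumulative output-size bound assumed in the statement. Since the preceding step (Lemma~\ref{dividehull}) already delivers the half-planes grouped by sector into contiguous locations, each sector's input can be read sequentially without extra cache overhead.

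First I would allocate processors proportionally to workload, in the same style as Theorem~\ref{dividemain}: sector $i$ is given $p_i = \lceil p\, s_i/n\rceil$ processors, so that the ratio $s_i/p_i = n/p$ is identical for every sector. The allocation offsets are obtained from a single prefix-sum pass over the $n^\epsilon$ sizes $s_1,\ldots,s_{n^\epsilon}$, which by Lemma~\ref{prefix} costs $O(n/p+\log p)$ time and $O(n^\epsilon/B)$ cache misses provided $n\geq Mp$. Crucially, with this allocation the hypothesis $s_i\geq M p_i$ demanded by Theorem~\ref{filtering1} collapses to the single global condition $n\geq Mp$, uniformly across all sectors, while the condition $M\geq B^2B^{2/31}$ is inherited verbatim.

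Next I would run Theorem~\ref{filtering1} on all sectors simultaneously. For sector $i$ it takes expected time $O\big(\frac{s_i}{p_i}\log s_i + \log s_i\log\log s_i\big) = O\big(\frac{n}{p}\log n + \log n\log\log n\big)$, using $s_i/p_i = n/p$ and $s_i\leq O(n)$; since the sectors run concurrently, the expected parallel time is the maximum over $i$, i.e.\ the same bound. For the cache misses I would sum the per-sector guarantees:
\begin{equation*}
\sum_{i=1}^{n^\epsilon} O\Big(\frac{s_i}{B}\log_M s_i\Big) \;\leq\; O\Big(\frac{\log_M n}{B}\Big)\sum_{i=1}^{n^\epsilon} s_i \;=\; O\Big(\frac{n}{B}\log_M n\Big),
\end{equation*}
using $\log_M s_i = O(\log_M n)$ and $\sum_i s_i = O(n)$. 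Sectors whose size falls below the $n/p$ ratio are bundled onto single processors and filtered sequentially; the one-processor versions of the sort and maximal-points subroutines underlying Theorem~\ref{filtering1} each incur only $O(\frac{s_i}{B}\log_M s_i)$ misses, so these small sectors are already absorbed by the same summation.

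I expect the main obstacle to be the control of \emph{block misses} rather than the time or cache-miss accounting, which is routine once the proportional allocation is fixed. The danger is that a cluster of tiny sectors could force several processors to write into a shared output block. As elsewhere in the paper, I would guard against this by ensuring that every participating processor is handed at least $B$ contiguous output locations before writing (equivalently, by capping the number of processors in the final compaction, as in Lemma~\ref{compaction}), and by relying on the sector-grouped contiguous layout inherited from Lemma~\ref{dividehull} so that all reads and writes within a sector stay sequential. Once block misses are shown to vanish under $n\geq Mp$, combining the time bound, the displayed cache-miss sum, and the prefix-sum overhead yields the claimed bounds.
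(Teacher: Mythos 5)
Your proposal is correct and follows essentially the same route as the paper's own proof: proportional processor allocation across the $n^\epsilon$ sectors via a prefix computation, parallel invocation of Theorem~\ref{filtering1} on each sector (with the per-sector hypothesis $s_i \geq M p_i$ collapsing to the single global condition $n\geq Mp$), and a summation of the per-sector cache-miss bounds using $\sum_i s_i = O(n)$. Your additional remarks on small sectors and block misses are sensible elaborations that the paper leaves implicit, but they do not change the argument.
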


\begin{proof}
We have a total of $an$ half-planes divided into $n^\epsilon$ problems. 
Each problem has size $s_i$, (where $1\leq i \leq n^\epsilon$) and
$\displaystyle\sum_{i=1}^{{n^\epsilon}}s_i=an$.
We will do filtering on all of these $n^\epsilon$ problems in parallel. 
Every problem is assigned number of processors according to its size, 
so the $i^{th}$ 
problem is assigned $s_ip/an$ processors using prefix computation (Theorem \ref{prefix}).\\
Using Theorem $\ref{filtering1}$ described above, one problem of size $s_i$(where $s_i\leq an$) using $s_ip/an$ processors can be filtered in expected 
time $O(\frac{an}{p} \log n + \log n \log\log n)$
with a total of $O(\frac{s_i}{B}\log_M n)$ expected cache misses , given $s_i\geq Ms_ip/an$ or $an\geq Mp$ and $M\geq B^2B^{2/31}$.
The total cache misses for all the $n^\epsilon$ problems is 
$O(\frac{an}{B}\log_M n).$ \\
For any constant $a$, time is bounded by $O(\frac{n}{p} \log n + \log n \log\log n)$
and total number of cache misses is bounded by $O(\frac{n}{B}\log_M n).$  
\end{proof}

\section{Concurrent Writes and processor oblivious load balancing}

Traditionally, parallel programs are written assuming that
there is a unique id (an integer in the range $1 \ldots p)$
for each of the $p$ processors. The processors id is used
to designate a particular task to a specific processor. For
example, for an input array of $n$ numbers, a processor with
id $i$ may be allocated the task associated with the sub-array
$\frac{n}{p} \cdot i \ldots \frac{n}{p} \cdot (i +1)$. This
is easy because $p$ is known at the time the parallel code is
generated. However, in some situations $p$ may not known and
moreover, the processors may not have an id associated with
them. 
\ignore{This variation is called the {\em resource oblivious}
case and it is becoming increasingly important to design parallel
algorithms where a processor's task is not defined on the basis
of its id. This requires us to design parallel scheduling algorithms
where the parallel tasks are allocated to a processor from a {\em pool}
on the fly as and when they are spawned. A commonly used strategy is 
called {\em work stealing} where processors maintain a queue of tasks
allocated to them and try to achieve load balancing by {\em stealing}
task from another processor if its own queue is empty \cite{cilk}.
In this case, the parallel algorithm and the processor scheduling behave
as independent distributed processes interacting on demand. This has
some clear advantages like inherent fault tolerance and self-adjusting
to a varying pool of available processors.

 In contrast to this, we can design parallel algorithms where the
number of processors is not known initially but the processors
can compute a unique id on-line and subsequently execute the traditional
parallel program based on this id. We shall describe a simple randomized
technique to achieve this that may have the advantage of being more
efficient than {\em work stealing} paradigm in practice.
}
We illustrate this method for the fundamental problem of prefix 
computation. The basic idea is that each of the $p$ processors simultaneously
chooses a random number in the range $[1..n]$ and writes to the corresponding
location in an array $A$. The expected number of processors writing to a
specific location $A[i] \ \ 1 \leq i \leq n$, is $p/n$ and no more than
$O(\log n)$ w.h.p. - note that, from our earlier assumptions, $p \leq n/B$,
so the expected number of elements writing into a $B$ element block $\leq 1$.   
Roughly speaking, a processor writing to a location $i$ assumes 
responsibility for the block of $n/p$ elements starting from 
$\lfloor \frac{i}{n/p}\rfloor$.
However, because of conflicts caused by independent random choices, we
have to do some limited redistribution and also {\em estimate} the value
of $p$. It can be argued using Chernoff bounds that w.h.p. $\Theta (\log n)$
processors will choose a location in the range $[bi \ldots b(i+1)] \ \ 
i = 0, 1 \ldots$ where $b = (n/p)\cdot \log n$.

From the previous observation, $p$ can be estimated in the following manner.
Let the {\em leftmost} processor in the the array
count up to $\log n$ processors. If this location is
$\beta$, then we can estimate $p$ to be within a constant factor of $(n/\beta)
\cdot \log n)$. This procedure requires more elaboration.

To find the leftmost processor, we can let every processor that succeeded
in writing to a location in the first step traverse left in a synchronous 
fashion. Once it encounters a location that has been already written into, it
terminates the traversal. Only the leftmost processor will be able to continue
and reach location 1.

In the second phase when it counts up to $\log n$ processors, we have to ensure
that the processors that conflict in the random choices are counted with the
proper multiplicity. If $j$ processors write to a single location, then we
let these processors increase a counter and the final count is the number
of the processors - since $j \leq \log n$ w.h.p., this step completes
in $O(\log n)$ time. So the leftmost processor can start its rightward 
traversal after $\Omega (\log n)$ steps and report the location $\beta$.
Note that the expected value of $\beta$ is $(n/p)\log n$ which can be subsumed
in the overall time for a problem like sorting but not for prefix computation.
So, for prefix computation, we can assume that $p \leq (n/\log n)$ and
run the estimation procedure in an array of length $n/\log n$, so that the
overall time for the estimation is $O((n/\log n) \cdot (1/p)\cdot \log n)$ which
is $O(n/p)$. 

The processors that belong to the block (of size $(n/p)\log n)$ must be assigned
a unique id which can be computed as follows. Once $p$ is estimated, each
processor knows its most significant $\log p - \log\log n$ bits of the id.
The remaining $\log\log n$ bits can be assigned on the basis of a simple
prefix sum within the block carried out by the leftmost processor
within the block (once $p$ is estimated, the leftmost processor in a
block is easily identified by repeating the procedure within the block).
Following this, each processor is assigned a unique block of $\theta(n/p)$ 
elements.

In the above procedure, we have not accounted for the {\em block misses} when
processors conflict in writing to a specific block. For instance, if $j$
processors write to the same block, the {\em block misses} will be
$\Omega ( j^2 )$. This could be as much as $\Omega (\log^2 n)$ since we
can only bound $j \leq \log n$ with high probability. The overall block misses
will be given by $\sum_{i=1}^{n'} n_i^2$ where $n' = n/B$ and
 $n_i$ is the number of
processors writing in block $i$. Note that $\sum_i n_i = p$. We can
compute the expected {\em block misses} as follows. Since $n_j$ denotes the
number of processors that chose block $j$, we are interested in the
quantity $E[\sum_j n_j^2]$ that represents the total expected block misses. 
Let r.v. $X_i = 1$ if processor $i$\footnote{this is only for the analysis 
since a processor is not explicitly given any id; also note that we need
Concurrent Write capability for this}  writes to block 1  and 
0 otherwise (the same
analysis will apply to any fixed block by symmetry). Then $E[X_i] = 1/n'$
and moreover $E [ X_i^2] = 1/n'$. From our earlier notation, $n_1 
= \sum_{i=1}^{p} X_i$ and so $n_1^2 = \sum_{i=1}^{p} X_i^2 $. Taking expectations,
\[ E[ {\left(\sum_{i=1}^{p} X_i \right)}^2 ] 
 = p E [ X_i^2 ] + {p \choose 2} \cdot 2 E [ X_i \cdot X_j ] 
 = p \cdot \frac{1}{n'} + p(p-1) \cdot E[X_i]\cdot E[X_j] 
 = p\cdot \frac{1}{n'} + p(p-1) \frac{1}{n'^2} \]
The first equality follows from linearity and the second from independence.
Therefore, from symmetry, $E[ n_1^2 + n_2^2 + \ldots ] = 
n' \cdot(p/n' + p(p-1)/n'^2)$ = $ p (1 + (p-1)/n')$ which is $O(p)$ for 
$p \leq n' = n/B$. 

 For prefix computation, we proceed as follows. 
Given $p \leq n/\log n$ processors, we first estimate $p$ using the
first $n/\log n$ locations (i.e. $n'/\log n$ blocks) 
of the array and also the processor id. If there
are $c \log n$ processors for a block of size $n/p \log n$, then a processor
with id = $<m', \ell'>$ where $m'$ denotes the most significant 
$\log p - \log\log n$ bits, and $\ell'$ denotes the least significant bits
is assigned the locations $[\alpha \cdot m' + \beta \cdot \ell' \ldots
\alpha \cdot m' +\beta (\ell' + 1)]$ where $\alpha = n/p \log n $ and
$\beta =  \frac{1}{c} \cdot (n/p)$.

The first phase takes $O(n/p)$ sequential steps, following which the number of
data items is $p$. We can run the 
optimal (cache oblivious) speed-up prefix computation on the $p$ processors.   
\ignore{
\subsection{Sorting random input}

Rajasekaran and Sen\cite{RS:90} described a simple optimal speed-up 
algorithm for
sorting $n$ integers in arbitrary range that runs in $O(\log n)$ expected
time in a CRCW model. The intuitive idea is as follows - hash the input
numbers into $n\log n$ buckets based on the key values. Because of the
property of uniform distribution, most buckets will contain less than $O(C)$
keys and moreover the total number of keys in the {\em heavy} buckets 
(containing more than $C$ keys) is $O(n/\log n)$ w.h.p. - denote this 
by $H$. We can sort $H$ using any general sorting algorithm and subsequently
merge with the remaining keys (they can be sorted using any trivial procedure).

Using the above analysis for bounding block misses, and using the 
previous algorithms for sorting and merging, we obtain the following
result.
\begin{theorem}
If $n$ keys are chosen uniformly at random, they can be sorted in
$O( n/p)$ time using $p$ CRCW multicore processors and expected
$O(n/B)$ cache misses, including block misses where $p \leq \min \{ 
n/B ,n/\log n \}$. 
\end{theorem}   
\noindent{\bf Proof}: Note that, no more than $\log n$
elements are written into the same block w.h.p. and in our model, it
will take at most $O(\log n)$ time following a sequential conflict
resolution. This also implies a simple mechanism for counting the
number of elements written in each location that can be used for finding
a unique location for each element in later stages of the algorithm.
Further details of the proof is omitted from this version.
}
\section{Future Work}
For both sorting and convex hull algorithm given here, running time is not optimal when we have $n$ cores.
The bottleneck step in these algorithm is where we divide input into buckets. 
This step is being implemented using a deterministic algorithm. We can try to optimize this by using a randomized approach.  
\begin{footnotesize}

\end{footnotesize}

\end{document}